\definecolor{AUXLblue}{RGB}{ 51,145,202}
\newtheorem{theorem}{Theorem}
\newtheorem{lemma}{Lemma}
\newtheorem{claim}{Claim}
\newtheorem{definition}{Definition}
\newtheorem{observation}{Observation}
\newtheorem{remark}{Remark}
\newcommand*{\cF}{\mathcal{F}}
\newcommand*{\cI}{\mathcal{I}}
\newcommand*{\cJ}{\mathcal{J}}
\newcommand*{\cM}{\mathcal{M}}
\newcommand*{\cP}{\mathcal{P}}
\newcommand*{\cT}{\mathcal{T}}
\newcommand{\MMS}{\mathsf{MMS}}
\newcommand{\bag}{\mathsf{BAG}}
\newcommand{\alge}{\mathsf{ALGE}}
\newcommand{\algc}{\mathsf{ALGC}}
\newcommand{\optc}{\mathsf{OPTC}}
\newcommand*{\hatk}{\hat{k}}
\newcommand*{\hati}{\hat{i}}
\newcommand{\R}{\mathbb{R}}
\newcommand{\fX}{\mathbf{X}}
\newcommand{\fu}{\mathbf{u}}
\DeclareMathOperator*{\argmax}{arg\,max}
\DeclareMathOperator*{\argmin}{arg\,min}
\def\FISP/{FISP}
\title{Fair Allocation with Interval Scheduling Constraints\thanks{The authors thanks 
Warut Suksompong for reading a draft of this paper and for helpful discussions. }}
\author{
\begin{minipage}{0.45\textwidth}
\centering
\small{Bo Li } \\
  \small{comp-bo.li@polyu.edu.hk}\\
  \small{Department of Computing} \\
  \small{The Hong Kong Polytechnic University} \\
  \small{Hong Kong, China} \\
\end{minipage}
\begin{minipage}{0.45\textwidth}
\centering
\small{Minming Li} \\
  \small{minming.li@cityu.edu.hk}\\
  \small{Department of Computer Science}
\\ 
  \small{City University of Hong Kong} \\
  \small{Hong Kong, China} \\
\end{minipage}
\\
\small{Ruilong Zhang} \\
  \small{ruilzhang4-c@my.cityu.edu.hk}\\
  \small{Department of Computer Science}
\\
  \small{City University of Hong Kong} \\
  \small{Hong Kong, China} \\
}
\date{}
\begin{document}

\maketitle

\begin{abstract}
We study a fair resource scheduling problem,
where a set of interval jobs are to be allocated to heterogeneous machines controlled by agents.
Each job is associated with release time, deadline and processing time such that it can be processed if its complete processing period is between its release time and deadline. 
The machines gain possibly different utilities by processing different jobs, and all jobs assigned to the same machine should be processed without overlap.
We consider two widely studied solution concepts, namely, maximin share fairness and envy-freeness.
For both criteria, we discuss the extent to which fair allocations exist and present constant approximation algorithms for various settings. 

\end{abstract}

\section{Introduction}



With the rapid progress of AI technologies, AI algorithms are widely deployed in many societal settings and used to assist human decision-making such as the distribution of job and education opportunities. 
To motivate our study, let us consider a problem faced by the Students Affairs Office (SAO). An SAO clerk is assigning multiple part-time jobs to the students who submitted job applications. 
Each part-time job occupies a consecutive time period within a possibly flexible interval.
For example, a one-hour math tutorial needs to be given between 8:00am and 11:00am on June 26th.
A {\em feasible} assignment requires that the jobs assigned to an applicant can be scheduled without mutual overlap. 
The students are heterogeneous, i.e., different students may hold different job preferences.
It is important that the students are treated equally in terms of getting job opportunities, and thus the clerk's task is to make the assignment fair.

The SAO problem falls under the umbrella of the research on {\em job scheduling}, which has been studied in numerous fields, including operations research \cite{DBLP:reference/crc/Gentner0ST04}, 
machine learning \cite{DBLP:conf/nips/PalejaSCG20}, parallel computing \cite{DBLP:series/ccn/Drozdowski09}, cloud computing \cite{DBLP:journals/eetcs/Al-ArasiS20}, etc. 
Following the convention of job scheduling research, each part-time job, or {\em job} for short, is associated with release time, deadline, and processing time.
The students are modeled as {\em machines}, who have different utility gains for completing jobs.
Traditionally, the objective of designing scheduling algorithms 
is solely focused on efficiency or profit.
However, motivated by various real-world AI driven deployments where the data points of the algorithms are real human beings who should be treated unbiasedly, addressing the individual fairness becomes important.
Accordingly, the past several years has seen considerable efforts in developing fair AI algorithms \cite{DBLP:conf/nips/Chierichetti0LV17}, where combinatorial structures are incorporated into the design, such as vertex cover \cite{DBLP:conf/nips/RahmattalabiVFR19}, facility location \cite{DBLP:conf/icml/ChenFLM19} and knapsack \cite{DBLP:conf/nips/AmanatidisFLLR20}.

It is noted that people have different criteria on evaluating fairness, and in this work, we consider two of the most widely accepted definitions.
The first is motivated by the max-min objective, i.e.,
maximizing the worst-case utility, which has received observable attention for various learning scenarios \cite{DBLP:conf/nips/RahmattalabiVFR19}.
However, for heterogeneous agents, optimizing the worst case is not enough, as different people have different perspectives and may not agree on the output.
Accordingly, one popular research agenda is centered around computing an assignment such that everyone believes that it (approximately) maximizes the worst case utility.
\cite{DBLP:conf/bqgt/Budish10}.
The second one is {\em envy-freeness} (EF), which has been very widely studied in social sciences and economics.
Informally, an assignment is called EF if everyone believes she has obtained the best resource compared with any other agent's assignment.
We note that, due to the scheduling-feasible constraint, some jobs may not be allocated.
Thus EF alone is not able to satisfy the agents as keeping all resources unallocated does not incur any envy among them, but the agents envy the charity where unallocated/disregarded items are assumed to be donated to a "charity".
To resolve this issue, in this work, we want to understand how we can compute allocations that are simultaneously EF and Pareto efficient (PO), where an allocation is called PO if there does not exist another allocation that makes nobody worse off but somebody strictly better off.

Recently, \cite{DBLP:conf/iwoca/ChiarelliKMPPS20} and \cite{DBLP:journals/corr/abs-2104-06280} studied the fair allocation of conflicting items, where the items are connected via graphs. 
An edge between two items means they are in conflict and should be allocated to different agents.
However, in our model, since we allow the time intervals to be flexible, the conflict among items cannot be described as the edges in a graph. 
For example, two one-hour tutorials between 9:00am and 11:00am can be feasibly scheduled, but three such tutorials are not feasible any more.

\subsection{Main Results}
\label{subsec:results}


We study the fair interval scheduling problem (\FISP/), where fairness is captured by MMS and EF. 
For each of them, we design approximation algorithms to compute MMS or EF1 schedules.

{\bf Maximin Share.} Informally, a machine's MMS is defined to be its optimal worst-case utility in an imaginary experiment: it partitions the items into $m$ bundles but was the last to select one, where $m$ is the number of agents.
It is noted that as the machines are heterogeneous, they may not have the same MMS value.
Our task is to investigate the extent to which everyone agrees on the final allocation.
A job assignment is called $\alpha$-approximate MMS fair if every machine's utility is no less than $\alpha$ fraction of its MMS value. 
Our main result in this part is an algorithmic framework which ensures a $1/3$-approximate MMS schedule,
and thus improves the best known approximation of $1/5$ which is proved for a broader class of valuation functions -- XOS \cite{DBLP:conf/sigecom/GhodsiHSSY18}.
Interestingly, in the independent and parallel work \cite{DBLP:journals/corr/abs-2104-06280}, the authors also show the existence of $1/3$-approximate MMS for graphically conflicting items. 
With XOS valuation oracles, 
\cite{DBLP:conf/sigecom/GhodsiHSSY18} also designed a polynomial-time algorithm to compute a $0.125$-approximate MMS allocation. 
As a comparison, by slightly modifying our algorithm, it returns a $0.24$-approximate MMS allocation in polynomial time, without any oracle assumptions. 
When all jobs are rigid and utilities are identical, i.e., processing time = deadline - release time, our problem degenerates to finding a partition of an interval graph such that the minimum weight of the independent set for each subgraph is maximized. 
Recently, a pseudo-polynomial-time algorithm is given in \cite{DBLP:conf/iwoca/ChiarelliKMPPS20} for a constant number of agents. 
In this sense, we generalize this problem to flexible jobs and design approximation algorithms for an arbitrary number of agents.


{\bf Main Result 1.} 
For an arbitrary \FISP/ instance, there exists a $1/3$-approximate MMS schedule, and a $(0.24-\epsilon)$-approximate MMS schedule can be found in polynomial time, for any constant $\epsilon>0$.

{\bf EF1+PO.} EF is actually a demanding fairness notion, in the sense that any approximation of EF is not compatible with PO.
Instead, initiated by \cite{conf/sigecom/LiptonMMS04}, most research is focused on its relaxation, {\em envy-freeness up to one item} (EF1), which means the envy between two agents may exist but will disappear if some item is removed.   
Unfortunately, EF1 and PO are still not compatible even if all jobs are rigid and agents have unary valuations. 
However, the good news is, if all jobs have unit processing time, an EF1 and PO schedule is guaranteed to exist and can be found in polynomial time. This result continues to hold when agent valuations are weighted but identical, i.e., the jobs have different values.
It is shown in \cite{conf/ijcai/BiswasB18} that under laminar matroid constraint an EF1 and PO allocation exists when agents have identical utilities, but the efficient algorithm is not given.
We improve this result in two perspectives.
First, our feasibility constraints, even for unit jobs, do not necessary correspond to laminar matroid. 
Second, our algorithm runs in polynomial time.

{\bf Main Result 2.} 
No algorithm can return an EF1 and PO schedule for all \FISP/ instances, even if  all jobs are rigid and valuations are unary. 
When all jobs have unit processing time and valuations are (weighted) identical, an EF1 and PO schedule can be computed in polynomial time.


Although exact EF1 and PO are not compatible, we prove that for an  arbitrary \FISP/ instance, there always exists a $1/4$-approximate EF1 and PO schedule, which coincides with \cite{journals/corr/abs-2012-03766}.
In the setting of \cite{journals/corr/abs-2012-03766}, each job has a value and a weight but there is no release time, processing time and deadline.
Every agent has a budget and every subset of jobs that the total weight does not exceed the agent's budget can be assigned to the agent.
If all jobs have unit processing time, a $1/2$-approximate EF1 and PO schedule exists.
To prove this result, we consider Nash social welfare -- the geometric mean of all machines' utilities. 
We show that a Nash social welfare maximizing schedule satisfies the desired approximation ratio. 
This result is in contrast to the corresponding one in \cite{caragiannis2016unreasonable}, which shows that without any feasibility constraints, such an allocation is EF1 and PO.
We also show that both approximations are tight. 


{\bf Main Result 3.} 
For any \FISP/ instance, the schedule maximizing Nash social welfare is PO and $1/4$-approximate EF1.
If all jobs have unit processing time, it is $1/2$-approximate EF1.

\paragraph{EF1+IO} By the above results, we observe that PO is too demanding to measure efficiency in our model. 
One milder requirement is {\em individual optimality} (IO). 
Intuitively, an allocation is called IO if every agent gets the best feasible subset of jobs from the union of her current jobs and unscheduled jobs.
We show that EF1 is still not compatible with IO in the general case.
But for unary valuations, we obtain positive results and design polynomial time algorithms for (1) computing an EF1 and IO schedule for rigid jobs, and (2) computing an EF1 and 1/2-approximate IO schedule for flexible jobs.
To prove these results, we utilize two classic algorithms {\em Earliest Deadline First} and  {\em Round-Robin}.

\subsection{Related Works}


Since computing feasible job sets to maximize the total weight is NP-hard \cite{DBLP:books/fm/GareyJ79}, various approximation algorithms have been proposed \cite{DBLP:journals/siamcomp/Bar-NoyGNS01,DBLP:journals/jco/BermanD00,DBLP:journals/mor/ChuzhoyOR06}, and the best known approximation ratio is $0.644$ \cite{DBLP:journals/siamdm/ImLM20}. 
For rigid instances, the problem is polynomial-time solvable \cite{DBLP:books/daglib/0090562}.
Recently, scheduling has been studied from the perspective of machine learning, including developing learning algorithms to empirically solve NP-hard scheduling problem \cite{DBLP:conf/nips/ZhangSC0TX20,DBLP:conf/nips/PalejaSCG20}, and predicting uncertain data in order to optimize the performance in the online setting \cite{DBLP:conf/nips/PurohitSK18}.
Fairness has been concerned in the scheduling community in the past decades \cite{DBLP:journals/jal/AjtaiANRSW98,DBLP:journals/tc/BaruahL98,DBLP:conf/rtss/Baruah95}. 
Most of these works aim at finding a fair schedule for the jobs, such as balancing the waiting and completion time \cite{DBLP:journals/tcs/BiloFFMM16,DBLP:journals/siamcomp/ImM20}. 

MMS allocation for indivisible resources has been widely studied since \cite{DBLP:conf/bqgt/Budish10}. 
Unfortunately, it is shown in \cite{DBLP:journals/jacm/KurokawaPW18,DBLP:conf/sigecom/GhodsiHSSY18,DBLP:journals/corr/abs-2104-04977} that an exact MMS fair allocation may not exist. 
Thereafter, a string of approximation algorithms for various valuation types are proposed, such as additive \cite{DBLP:conf/sigecom/GargT20}, submodular \cite{DBLP:journals/teco/BarmanK20,DBLP:conf/sigecom/GhodsiHSSY18}, XOS and subadditive \cite{DBLP:conf/sigecom/GhodsiHSSY18}. 
Regarding EF1, in the unconstrained setting, an allocation that is both EF1 and PO is guaranteed to exist \cite{caragiannis2016unreasonable,barman2018finding}.
However, when there are constraints, such as cardinality and knapsack, the general compatibility is still open
\cite{conf/ijcai/BiswasB18,conf/aaai/BiswasB19,journals/corr/abs-2010-07280,journals/corr/abs-2012-03766}.
\section{Preliminaries}
\label{sec:preliminaries}

\subsection{Fair Interval Scheduling Problem}

In a fair interval scheduling problem (\FISP/), we are given a job-machine system, which is denoted by tuple $(J,A,\fu_{A})$.
$J=\set{j_1,\cdots,j_n}$ represents a set of $n$ jobs (also called resources or items) and $A=\set{a_1,\cdots,a_m}$ is a set of $m \geqslant 2$ machines controlled by agents. 
In this work, machines and agents are used interchangeably. 
We consider discrete time, and for $t\in\mathbb{N}_+$, let $[t,t+1)$ denote the $t$-th \textit{time slot}. 
Each $j_i\in J$ is associated with release time $r_i\in\mathbb{N}_{+}$, deadline $d_i\in\mathbb{N}_{+}$, and processing time $p_j\in\mathbb{N}_{+}$ such that $p_i \leqslant d_i - r_i +1$.
The $[r_i,d_i]$ is called a job interval, which can also be viewed as a set of consecutive time slots, $\set{r_i,r_i+1,\cdots,d_i}$.
Job $j_i$ can be processed successfully if it is offered $p_i$ consecutive time slots within $[r_i,d_i]$. 
Each machine can process at most one job at each time slot and a set of jobs $J'\subseteq J$ is called \textit{feasible} if all jobs in $J'$ can be processed without overlap on a single machine. 
For a job $j_k \in J$, agent $a_i \in A$ gains utility $u_i(\set{j_k}) \geqslant 0$ if $j_k$ is successfully processed by $a_i$.
We slightly abuse the notation and assume that $u_i(j_k)=u_i(\set{j_k})$. 
We use $u_i$ to denote $a_i$'s utility function, and define $\fu_{A} = (u_i)_{i\in A}$.
For a feasible set of jobs $S$, the agent's utility is additive, i.e., $u_i(S) = \sum_{j_k \in S} u_i(j_k)$.
For an arbitrary set of jobs that may not be feasible, the agent's utility is the maximum she can obtain by processing a feasible subset, i.e.,
\[
u_i(S)=\max_{S' \subseteq S: \text{ $S'$ is feasible}}\sum_{j_k\in S'} u_i(j_k).
\]
It is noted that $u_i(\cdot)$'s are not additive for infeasible set of jobs and the computation of its value is NP-hard \cite{DBLP:books/fm/GareyJ79}. 
In \cref{app:pre}, we show that they are actually XOS, which is a special type of subadditive functions. 
We call these $u_i(\cdot)$'s {\em interval scheduling} (IS) functions. 



A \textit{schedule} or \textit{allocation} $\fX=(X_1,\cdots,X_m)$ is defined as an ordered partial partition of $J$, where $X_i$ is the jobs assigned to agent $a_i$, such that $X_i \cap X_j = \emptyset$ for $i \neq j$ and $X_1 \cup  \cdots \cup X_m \subseteq J$.
Let $X_0 = J \setminus \bigcup_{i\in [m]} X_i$ denote all unscheduled jobs, which is regarded as the donation to a {\em charity}.
A schedule $\fX$ is called {\em feasible} if $X_i$ is feasible for all $a_i\in A$, i.e., all jobs in $X_i$ can be successfully processed by $a_i$.
Note that since jobs in $X_0$ are not scheduled, $X_0$ is not necessarily feasible.
Observe that any infeasible schedule $\fX$ is equivalent to a feasible schedule $\fX'$ by setting each $X'_i$ to be the feasible subset of $X_i$ that maximizes $a_i$'s utility and $X'_0 = J \setminus\bigcup_{i \in [m]}X_i'$.
We call an instance {\em rigid} if $p_i = d_i - r_i + 1$, for all $j_i\in J$, i.e., the jobs need to occupy the entire job intervals.
For rigid instances, the feasibility constraints can be described via interval graphs and the computation of $u_i(S)$ for any $S \subseteq J$ can be done in polynomial time \cite{DBLP:books/daglib/0015106}.

As we will discuss the approximation algorithms and the existences of MMS/EF1/PO/IO schedules in different settings, we introduce the following notations to simplify the description of different settings. 

Regarding agents' utilities, \FISP/ contains three cases, from the most special to the most general:
\begin{itemize}
    \item Unweighted: $u_i(j_k)=1$ for all $a_i\in A,j_k\in J$, i.e., agents have unary utility for jobs.
    \item Identical: $u_i(j_k)=u_r(j_k)$ for all $a_i,a_r\in A,j_k\in J$, i.e., all agents have the same utility for the same job.
    \item Non-identical: $u_i(j_k)\geqslant 0$ without any restrictions. 
\end{itemize}
Regarding jobs, there are three cases:
\begin{itemize}
    \item Unit: $p_i=1$, for all $j_i\in J$, i.e., all jobs have unit processing time.
    \item Rigid: $r_i+p_i-1=d_i$, for all $j_i\in J$, i.e., the jobs need to occupy the entire time intervals between their release times and deadlines.
    \item Flexible: $r_i+p_i-1\leqslant d_i$, for all $j_i\in J$. 
\end{itemize}

Note that unit jobs may not be rigid and rigid jobs may not be unit either.
In the remainder of the paper, we use notation \FISP/ with $\langle$utility type, job type$\rangle$ to denote a certain case of the general \FISP/, e.g., \FISP/ with $\langle$unweighted, unit$\rangle$ represents the case where the processing time of each job is 1 and each agent has unweighted utility function.

\subsection{Solution Concepts}

We first define the maximin value for any utility function $u$, item set $S$ and the number of agents $k$. 
Let $\cF(S,k)$ be the set of all $k$-partial-partitions of $S$ and
\[
\MMS^{u}(S,k) = \max_{(S_1, \cdots, S_k) \in \cF(S,k)} \min_{i \in[k]} u(X_i).
\]
For any \FISP/ instance $(J,A,\fu_A)$ with $m = |A|$,
agent $a_i\in A$'s maximin share (MMS) is given by
$$
\MMS_i (J,m)= \MMS^{u_i}(J,m). 
$$
When the parameters are clear in the context, we write $\MMS_i = \MMS_i (J,m)$ for simplicity. 
If a schedule $\fX$ achieves $\MMS_i$, i.e., $\min_{k\in [m]} u_i(X_k) = \MMS_i$, it is called an MMS schedule for $a_i$.



\begin{definition}[$\alpha$-MMS Schedule]
For $0<\alpha\leqslant 1$, a schedule $\fX=(X_1,\cdots,X_m)$ is called $\alpha$-approximate MMS ($\alpha$-MMS) if $u_i(X_i)\geqslant \alpha\cdot \MMS_i$. When $\alpha=1$, $\fX$ is called an MMS schedule.
\end{definition}

%

We next introduce envy freeness (EF).
An EF schedule $\fX=(X_1,\cdots,X_m)$ requires everybody's utility to be no less than her utility for any other agent's bundle, i.e., $u_{i}(X_i) \geqslant u_{i}(X_k)$ for any $a_i,a_k \in A$.
Since EF is over demanding for indivisible items, following the convention of fair division literature, in this work, we mainly consider EF1. 


\begin{definition}[$\alpha$-EF1 Schedule]
For $0<\alpha\leqslant 1$, a schedule $\fX=(X_1,\cdots,X_m)$ is called \textit{$\alpha$-approximate envy-free up to one item} ($\alpha$-EF1) if for any two agents $a_i, a_k \in A$, 
\[
u_{i}(X_i) \geqslant \alpha\cdot u_{i}(X_k \setminus \{j\}) \text{ for some $j \in X_k$}.
\]
When $\alpha = 1$, $\fX$ is called an EF1 schedule.
\end{definition}


We observe that an {\em empty} schedule is trivially EF and EF1, i.e., $X_0 = J$ and $X_i = \emptyset$ for all $a_i \in A$. 
However, this is a highly inefficient schedule, and thus we also want the schedule to be Pareto optimal. 

\begin{definition}[PO schedule]
A schedule $\fX=(X_1,\cdots,X_m)$ is called \textit{Pareto Optimal} (PO) if there does not exist an alternative schedule $\fX'=(X_1',\cdots,X_m')$ such that $u_i(X_i')\geqslant u_i(X_i)$ for all $a_i\in A$, and $ u_k(X_k')> u_k(X_k)$ for some $a_k\in A$.
\end{definition}



We note that any approximation of EF is not compatible with PO, even in the very simple setting with two machines and a single job.
In the following, we introduce another efficiency criterion, {\em individual optimality} (IO), which is weaker than PO and study the compatibility between EF1 and IO.  

\begin{definition}[$\alpha$-IO schedule]
A feasible schedule $\fX=(X_1,\cdots,X_m)$ with $X_0=J\setminus\bigcup_{i\in[m]}X_i$ is called \textit{$\alpha$-approximate individual optimal} ($\alpha$-IO) if $u_{i}(X_i)\geqslant \alpha \cdot u_{i}(X_0\cup X_i)$ for all $a_i \in A$,
where $\alpha\in(0,1]$ and when $\alpha = 1$, $\fX$ is called IO schedule.
\end{definition}





It is not hard to see that a PO schedule is also IO, but not vice versa. 
To show the existences and approximation of EF1/PO/IO, we sometimes use the schedule which maximizes the Nash social welfare.

\begin{definition}[MaxNSW Schedule]
A feasible schedule $\fX=(X_1,\cdots,X_m)$ is called MaxNSW schedule if and only if
$$
\fX\in\argmax_{\fX'\in\cF}\prod_{i=1}^{m} u_i(X_i')
$$
where $\cF$ is the set of all feasible schedules and $\fX'=(X_1',\cdots,X_m')$.
\end{definition}
Note that in the standard definition of Nash social welfare maximizing schedule, $\fX$ was supposed to be a member of $\argmax_{\fX'\in\cF}\bigg(\prod_{i=1}^{m} u_i(X_i')\bigg)^{\frac{1}{m}}$.
Here, we ignore the power of $\frac{1}{m}$ to simplify the formula.

\section{Approximately MMS Scheduling}
\label{sec:mms}

Before introducing our algorithmic framework, we first recall the best known existential and computation results for MMS scheduling problems.  

\begin{observation}[\cite{DBLP:conf/sigecom/GhodsiHSSY18}]
For an arbitrary \FISP/ instance, there exists a 1/5-MMS schedule and a 1/8-MMS schedule can be computed in polynomial time, given XOS function oracle. 
\end{observation}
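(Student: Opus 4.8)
The statement restates the XOS guarantees of Ghodsi et al.\ \cite{DBLP:conf/sigecom/GhodsiHSSY18}, so my plan is to \emph{reduce} to their general theorem rather than re-derive the bounds from scratch. The only model-specific fact that needs verifying is that the interval-scheduling utility functions $u_i$ lie in the XOS class to which their result applies; once this is in hand, both the existential $1/5$ bound and the polynomial-time $1/8$ bound (with an XOS value oracle) transfer verbatim to every \FISP/ instance. Thus the proof splits into two parts: (i) show that each $u_i$ is XOS, and (ii) invoke \cite{DBLP:conf/sigecom/GhodsiHSSY18}.

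For part (i), recall that a set function is XOS if it is the pointwise maximum of a family of additive functions that each lower-bound it. Given the definition $u_i(S)=\max_{S'\subseteq S:\ S'\text{ feasible}}\sum_{j_k\in S'}u_i(j_k)$, I would index the additive clauses by the feasible job sets: for each feasible $F\subseteq J$ set $a_F(j_k)=u_i(j_k)$ if $j_k\in F$ and $a_F(j_k)=0$ otherwise, extended additively. For any $S$ and any feasible $F$, the set $S\cap F$ is feasible (a subset of a feasible schedule is feasible), so $a_F(S)=\sum_{j_k\in S\cap F}u_i(j_k)\le u_i(S)$; and taking $F$ to be the feasible subset of $S$ attaining the maximum gives $a_F(S)=u_i(S)$. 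Hence $u_i(S)=\max_F a_F(S)$, which is precisely an XOS representation. This is exactly the content deferred to \cref{app:pre}.

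For part (ii), with XOS-ness established the two bounds follow directly from \cite{DBLP:conf/sigecom/GhodsiHSSY18}. At a high level, their existence argument normalizes each agent so that $\MMS_i=1$ and proceeds by induction on the number of agents: at each step it identifies a bundle worth at least a constant fraction of the share to some not-yet-satisfied agent, assigns it, and recurses on the remaining agents and items, using the additive-clause structure to certify that enough value survives for the inductive hypothesis. The polynomial-time statement follows by replacing exact value computations with XOS-oracle queries at the cost of a worse constant, yielding $1/8$.

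The crux, and the only place a model-specific step is needed, is part (i): since $u_i$ is subadditive rather than additive, removing an assigned bundle can lower the other agents' maximin shares by more than their valuation of that bundle, so the naive additive analysis breaks down. The XOS clause decomposition is exactly the tool that bounds this loss, which is why verifying the XOS representation is the essential (and here the only nonroutine) obstacle, and why the guarantees are weaker than those known in the purely additive setting.
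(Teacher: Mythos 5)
Your proposal is correct and matches the paper's treatment: the paper also handles this observation by reducing to the cited theorem of \cite{DBLP:conf/sigecom/GhodsiHSSY18}, with the only model-specific work being the lemma in \cref{app:pre} that IS functions are XOS, proved via exactly your clause construction (one additive function per feasible job set, using that subsets of feasible sets are feasible). Nothing further is needed.
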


\subsection{Algorithmic Framework}
In this section, we present our algorithmic framework and prove that it ensures a $1/3$-MMS schedule. 
The algorithm has two parameters, a threshold vector $(\gamma_1,\cdots,\gamma_m)$ with $\gamma_i \geqslant 0$ and a $\beta$-approximation algorithm for IS functions, where $0\leqslant \beta \leqslant 1$. 
In this section, we set $\gamma_i = \MMS_i$ for each $a_i \in A$.
We can pretend that $\beta = 1$ to understand the existential result easily.
Note that the computations of each $\MMS_i$ and exact value for IS functions are NP-hard, and in \cref{sec:mms:poly}, we show how to gradually adjust the parameters to make it run in polynomial time.
The high-level idea of the algorithm is to repeatedly fill a bag with unscheduled jobs (which may not be feasible) until some agent values it for no less than a threshold and takes away the bag.
Then this agent reserves her best feasible subset of the bag, and returns the remaining jobs to the algorithm.
By carefully designing the thresholds, we show that everybody can obtain at least $\frac{\beta}{\beta +2}$ of her MMS.
%


\subsubsection{Pre-processing} 
As we will see, the above bag-filling algorithm works well only if the jobs are small, i.e., $u_i(j_k) \leqslant \frac{\beta}{\beta +2}\cdot \gamma_i$ for all $a_i \in A$ and $j_k \in J$.
We first introduce the following property, which is used to deal with large jobs. 
Intuitively, \cref{lem:mms:preprocess} implies that after allocating an arbitrary job to an arbitrary agent, the remaining agents' MMS values in the reduced sub-instance do not decrease.
A similar result for additive valuations is proved in \cite{DBLP:journals/talg/AmanatidisMNS17}.

\begin{lemma}\label{lem:mms:preprocess}
For any instance $\cI = (J, A, \fu_A)$ with $|A| = m$, the following inequality holds for any $a_i \in A$ and any $j_k \in J$,
\[
\MMS_i(J\setminus\set{j_k}, m-1) \geqslant \MMS_i(J,m).
\]
\end{lemma}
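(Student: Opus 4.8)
The plan is to prove the inequality by a direct ``throw away one bundle'' argument that simultaneously disposes of the job $j_k$ and reduces the agent count from $m$ to $m-1$, so that no approximation step or valuation-specific property is needed. First I would unfold the definitions: let $(S_1,\dots,S_m)\in\cF(J,m)$ be an optimal $m$-partial-partition witnessing $\MMS_i(J,m)=\MMS^{u_i}(J,m)$, so that $\min_{\ell\in[m]} u_i(S_\ell)=\MMS_i(J,m)$. Since the bundles are pairwise disjoint, the job $j_k$ belongs to at most one of them, and this is the structural fact that the whole proof hinges on.

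Next I would split into two cases according to where $j_k$ sits. If $j_k\in S_\ell$ for some index $\ell$, I delete the entire bundle $S_\ell$; if instead $j_k$ lies in the unassigned remainder $J\setminus\bigcup_{\ell\in[m]} S_\ell$, I delete an arbitrary bundle, say $S_m$. In either case the surviving $m-1$ bundles contain no copy of $j_k$, and they remain pairwise disjoint, so they form a valid $(m-1)$-partial-partition of the smaller ground set $J\setminus\set{j_k}$, i.e.\ a legitimate element of $\cF(J\setminus\set{j_k},\,m-1)$.

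Finally I would observe that the utilities of the surviving bundles are literally unchanged, since we modify none of them and merely drop one. Hence the minimum of $u_i$ over the remaining $m-1$ bundles is at least the minimum over all $m$ bundles, namely $\MMS_i(J,m)$. Because $\MMS_i(J\setminus\set{j_k},\,m-1)$ is a maximum over all such partial partitions, exhibiting this one particular partition already yields $\MMS_i(J\setminus\set{j_k},\,m-1)\geqslant \MMS_i(J,m)$, which is exactly the claim.

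The argument is almost entirely combinatorial, so I do not expect a serious obstacle; the only points needing care are the following. The subtle one is remembering that $\cF$ consists of \emph{partial} partitions, so $j_k$ need not lie in any bundle at all — this is precisely why the second case (deleting an arbitrary bundle) is required and cannot be folded into the first. The remaining point is the one-line monotonicity of $\min$ under deletion of a bundle; crucially this uses neither additivity nor even monotonicity of $u_i$, which is what allows the same proof to cover the general (XOS) interval-scheduling valuations here, rather than only additive ones as in \cite{DBLP:journals/talg/AmanatidisMNS17}.
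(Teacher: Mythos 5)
Your proposal is correct and follows essentially the same argument as the paper's own proof: take an optimal partial partition witnessing $\MMS_i(J,m)$, delete the bundle containing $j_k$ (or an arbitrary bundle if $j_k$ is unallocated), and observe that the surviving $m-1$ bundles form a valid partial partition of $J\setminus\set{j_k}$ whose minimum utility is at least $\MMS_i(J,m)$. The paper handles the unallocated-$j_k$ case in exactly the same way you do, so there is nothing to add beyond noting that your remark about the argument using no valuation-specific properties is a nice clarification the paper leaves implicit.
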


\begin{proof}
Let $\cI=(J,A,\fu_{A})$ be an arbitrary instance of \FISP/ with $J=\set{j_1,\cdots,j_m}$ and $|A|=m$.
To show  that $\MMS_i(J\setminus\set{j_k},m-1)\geqslant \MMS_i(J,m)$ holds for any $j_k\in J,a_i\in A$, we consider an arbitrary agent $a_i$.
Let $\fX=(X_1,X_2,\cdots,X_m)$ be a feasible schedule for $a_i$, i.e., $\min_{X_r\in\fX}u_i(X_r)=\MMS_i$.
Consider an arbitrary job $j_k$, assume that $j_k\in X_l$.
Then remove job set $X_l$ from $\MMS_i$ schedule.
This generates a new schedule, denoted by $\fX'=\set{X_1',X_2',\cdots,X_{m-1}'}$.
It is easy to see that $\fX'$ is a feasible schedule to the instance with $m-1$ agents and the job set $J\setminus\set{J_k}$.
This implies that $\MMS_i(J\setminus\set{j_k},m-1)\geqslant \min_{X_r'\in\fX'}u_i(X_r')$.
Note that $\min_{X_r'\in\fX'}u_i(X_r')\geqslant \MMS_i(J,m)$.
Therefore, we have
$$
\MMS_i(J\setminus\set{j_k},m-1)\geqslant \min_{X_r'\in\fX'}u_i(X_r')\geqslant \MMS_i(J,m).
$$
In the case where $j_k\notin\bigcup_{r\in[m]}X_r$, we remove an arbitrary job set from $\fX$ and the above analysis still works.
\end{proof}

We use \cref{lem:mms:preprocess} to design \cref{alg:mms:Prepossessing} which repeatedly allocates a large job to some agent and removes them from the instance until there is no large job. 


\begin{algorithm}[htb]
\caption{\hspace{-2pt}{\bf .} Matching Procedure}
\label{alg:mms:Prepossessing}
\begin{algorithmic}[1]
\REQUIRE Arbitrary \FISP/ instance $\cI = (J, A, \fu_A)$; Thresholds $(\gamma_1,\cdots,\gamma_m)$.
\ENSURE (1) Sub-instance $\cI' = (J', A', \fu_{A'})$ such that $u_i(j_k) \leqslant \frac{\beta}{\beta +2}\cdot \gamma_i$ for all $a_i \in A'$ and $j_k \in J'$; (2) Partial Schedule $(X_r)_{a_r\in A\setminus A'}$.
\STATE Initialize $A'=A$ and $J' = J$. 
\WHILE{there is an agent $a_i \in A'$ and a job $j_k \in J'$ with $u_i(j_k) > \frac{\beta}{\beta +2}\cdot \gamma_i$}
    \STATE Set $X_i = \set{j_k}$, $A' = A' \setminus\set{a_i}$, and  $J'=J'\setminus\set{j_k}$. 
\ENDWHILE
\end{algorithmic}
\end{algorithm}


By \cref{lem:mms:preprocess}, it is straightforward to have the following lemma.
\begin{lemma}\label{lem:mms:preprocess:match}
For any instance $\cI = (J,A,\fu_A)$ with $(\gamma_1,\cdots,\gamma_m)$, the partial schedule $(X_r)_{a_r\in A\setminus A'}$ and the reduced instance $\cI'=(J',A',\fu_{A'})$ returned by \cref{alg:mms:Prepossessing} satisfy $u_r(X_r) \geqslant \frac{\beta}{\beta +2}\cdot \gamma_r$ for all $a_r \in A\setminus A'$ and $\MMS_i(J',|A'|) \geqslant \MMS_i(J,|A|)$ for all $a_i \in A'$.
\end{lemma}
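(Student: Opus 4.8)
The plan is to establish the two claims separately: the first is immediate from the termination behaviour of the while loop, whereas the second is obtained by invoking \cref{lem:mms:preprocess} once per iteration and telescoping the resulting inequalities.

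First I would dispatch the lower bound for the removed agents. Every $a_r \in A \setminus A'$ is deleted from $A'$ during some iteration, and in that iteration \cref{alg:mms:Prepossessing} sets $X_r = \set{j_k}$ for a job $j_k$ that triggered the loop guard, i.e.\ $u_r(j_k) > \frac{\beta}{\beta+2}\cdot\gamma_r$. Since a singleton is always feasible, $u_r(X_r) = u_r(j_k) > \frac{\beta}{\beta+2}\cdot\gamma_r$, which already gives (indeed strengthens) the claimed inequality $u_r(X_r) \geqslant \frac{\beta}{\beta+2}\cdot\gamma_r$.

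For the second claim I would trace the instance through the loop. Suppose it runs for $t = m - |A'|$ iterations; write $J^{(0)} = J$, $A^{(0)} = A$, and let $J^{(s)}, A^{(s)}$ be the job and agent sets after $s$ iterations, so that iteration $s$ removes exactly one job $j_{k_s}$ and one agent $a_{i_s}$, giving $J^{(s)} = J^{(s-1)}\setminus\set{j_{k_s}}$, $A^{(s)} = A^{(s-1)}\setminus\set{a_{i_s}}$ and $|A^{(s)}| = m-s$. Fix any $a_i \in A' = A^{(t)}$. Because a surviving agent is never the one selected by the loop, $a_i$ belongs to every $A^{(s)}$, so \cref{lem:mms:preprocess}, applied to the instance $(J^{(s-1)}, A^{(s-1)}, \fu_{A^{(s-1)}})$ and the removed job $j_{k_s}$, yields
\[
\MMS_i(J^{(s)}, m-s) \;=\; \MMS_i\big(J^{(s-1)}\setminus\set{j_{k_s}},\, (m-s+1)-1\big) \;\geqslant\; \MMS_i(J^{(s-1)}, m-s+1)
\]
for each $s \in [t]$. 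Chaining these inequalities telescopes to
\[
\MMS_i(J', |A'|) \;=\; \MMS_i(J^{(t)}, m-t) \;\geqslant\; \MMS_i(J^{(0)}, m) \;=\; \MMS_i(J, m),
\]
which is exactly the second claim.

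The whole argument is essentially bookkeeping layered on top of \cref{lem:mms:preprocess}, so I do not expect a genuine obstacle; the only point needing care is verifying that the hypotheses of \cref{lem:mms:preprocess} hold at every step. Concretely, I must check that the tracked agent $a_i$ stays active throughout (true, since survivors of the loop are never the selected agent) and that each iteration removes exactly one job while decreasing the number of parts by exactly one, so that the ``remove one job, drop one part'' shape of the lemma is matched. Since \cref{lem:mms:preprocess} depends only on $a_i$'s own valuation and the number of parts, the identity of the agent $a_{i_s}$ discarded at step $s$ is irrelevant, and the telescoping goes through unchanged.
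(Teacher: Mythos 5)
Your proof is correct and matches the paper's intent: the paper states this lemma as a direct consequence of \cref{lem:mms:preprocess} without spelling out the details, and your argument — the loop guard giving the threshold bound for removed agents, plus iterating \cref{lem:mms:preprocess} once per iteration and telescoping for the surviving agents — is exactly the omitted bookkeeping. Your care in noting that $\MMS_i$ depends only on $a_i$'s own valuation, the job set, and the number of parts (so the identity of the removed agent is irrelevant) is precisely the observation that makes the chaining valid.
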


\subsubsection{Bag-Filling Procedure} 
Let $\cI = (J, A, \fu_{A})$ be an instance such that $|A|=m$ and $u_i(j_k) \leqslant \frac{\beta}{\beta +2}\cdot \gamma_i$ for all $a_i \in A$ and $j_k \in J$.
We show the Bag-Filling Procedure in \cref{alg:mms:bag}, with parameters $(\gamma_1,\cdots,\gamma_m)$ and $\beta$-approximation algorithm for IS functions.
For each $a_i \in A$, we use $u'_i: 2^J \to \R_+$ to denote the approximate utility, and thus $u'_i(S) \geqslant \beta \cdot u_i(S)$ for any $S \subseteq J$.
Intuitively, it keeps a bag $B$ and repeatedly adds an unscheduled job into it until some agent $a_i$ first values this bag (under the approximate utility function $u'_i$) for at least $\frac{\beta}{\beta +2}\cdot \gamma_i$.
If there are more than one such agents, arbitrarily select one of them.
Then $a_i$ gets assigned a feasible subset $X_i \subseteq B$ with $\sum_{j_l \in X_i}u_i(j_l) = u_i'(B)$, and returns $B \setminus X_i$ to the algorithm. 
This step is crucial, otherwise the other remaining agents may not obtain enough jobs.
It is obvious that if agent $a_i$ gets assigned a bag, then her true utility satisfies 
\[
u_i(X_i) = \sum_{j_l \in X_i}u_i(j_l) = u'_i(X_i) \geqslant \frac{\beta}{\beta +2}\cdot \gamma_i.
\]
The major technical difficulty of our algorithm is to prove that everyone can obtain a bag.

\begin{algorithm}[htb]
\caption{\hspace{-2pt}{\bf .} BagFilling Procedure}
\label{alg:mms:bag}
\begin{algorithmic}[1]
\REQUIRE An FISP instance $\cI = (J, A, \fu_{A})$ such that $u_i(j_k) \leqslant \frac{\beta}{\beta +2}\cdot \gamma_i$ for all $a_i \in A$ and $j_k \in J$; $\beta$-approximation algorithm for IS functions; Thresholds $(\gamma_1,\cdots, \gamma_m)$.
\ENSURE $\frac{\beta}{\beta +2}$-MMS schedule $\fX=(X_1,\cdots,X_m)$.
\STATE Initialize $A'=A, J'=J$, and obtain approximate utility functions $u'_i$ for all $a_i \in A$. 
\WHILE{$A' \neq \emptyset$ and $J' \neq \emptyset$}
    \STATE Set $B=\emptyset$.
    \WHILE{$u'_i(B) < \frac{\beta}{\beta +2} \cdot \gamma_i$ for all $a_i \in A'$ and $J' \neq \emptyset$} \label{step:mms:bag:1}
        \STATE Let $j_k$ be an arbitrary job in $J'$. Set $B = B \cup \set{j_k}$ and $J' = J' \setminus\set{j_k}$.
    \ENDWHILE
    \STATE Let $a_i$ be an arbitrary agent such that $u'_i(B) \geqslant \frac{\beta}{\beta +2} \cdot \gamma_i$. \STATE Let $X_i \subseteq B$ be a feasible subset such that $\sum_{j_l \in X_i} u_i(j_l) = u'_i(B)$. 
    \STATE Set $J' = J' \cup (B \setminus X_i)$ and $A' = A' \setminus\set{a_i}$.
\ENDWHILE
\label{step:mms:bag:10}
\end{algorithmic}
\end{algorithm}

\begin{lemma}\label{lem:mms:bag:small}
Setting $\gamma_i = \MMS_i$ for all $a_i \in A$, \cref{alg:mms:bag} returns a $\frac{\beta}{\beta +2}$-MMS schedule. 
\end{lemma}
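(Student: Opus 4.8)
The plan is to reduce everything to a single claim: \emph{every agent eventually receives a bag}. Indeed, the discussion preceding the lemma already shows that whenever an agent $a_i$ is handed a bag, the kept feasible subset $X_i$ satisfies $u_i(X_i)=u_i'(B)\geqslant \frac{\beta}{\beta+2}\gamma_i=\frac{\beta}{\beta+2}\MMS_i$, so the returned schedule is automatically $\frac{\beta}{\beta+2}$-MMS once we know no agent is left empty-handed. I would therefore argue by contradiction: suppose \cref{alg:mms:bag} gets stuck, i.e.\ it reaches a state in which $J'=\emptyset$ while some agents in $A'$ remain unserved and the current bag $B$ triggers nobody, so $u_i'(B)<\frac{\beta}{\beta+2}\gamma_i$ for every $a_i\in A'$.

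Fix one unserved agent $a_i\in A'$ and let $T=A\setminus A'$ be the set of served agents, with $k=|T|\leqslant m-1$. Since $a_i$ is never served, she is present throughout the construction of every bag; hence just before the last job entered any bag $B_r$ taken by an agent $r\in T$, the inner-loop condition gave $u_i'(B_r^{-})<\frac{\beta}{\beta+2}\gamma_i$, where $B_r^{-}$ denotes $B_r$ without its last job. Combining this with $u_i'\geqslant\beta u_i$, subadditivity of $u_i$, and job-smallness $u_i(j)\leqslant\frac{\beta}{\beta+2}\gamma_i$, I can bound the value to $a_i$ of each permanently removed feasible set by $u_i(X_r)\leqslant u_i(B_r)<\frac{1}{\beta+2}\gamma_i+\frac{\beta}{\beta+2}\gamma_i=\frac{\beta+1}{\beta+2}\gamma_i$. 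The same computation applied to the final leftover bag, which by $J'=\emptyset$ now contains all remaining jobs, yields $u_i(B)\leqslant\frac{1}{\beta}u_i'(B)<\frac{1}{\beta+2}\gamma_i$, and note that $B=J\setminus\bigcup_{r\in T}X_r$.

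The heart of the argument is to contradict this last inequality using $a_i$'s own MMS partition. Let $P_1,\dots,P_m$ be the $m$ bundles witnessing $\MMS_i$, so $u_i(P_t)\geqslant\gamma_i$ for every $t$. The key observation — and the step I expect to be the main obstacle, since it is exactly what lets IS functions beat the $1/5$ bound known for general XOS valuations — is that each removed set $X_r$ is \emph{feasible}, so $u_i$ restricted to subsets of $X_r$ is additive; consequently splitting $X_r$ along the partition does not inflate its value, $\sum_{t} u_i(P_t\cap X_r)=u_i(X_r)$. Without feasibility, subadditivity would let $\sum_t u_i(P_t\cap X_r)$ greatly exceed $u_i(X_r)$, which is precisely the obstruction that forces the weaker ratio for arbitrary subadditive functions.

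Combining this with subadditivity across the pairwise disjoint $X_r$'s gives $\sum_{t=1}^m u_i\!\big(P_t\cap\bigcup_{r\in T}X_r\big)\leqslant\sum_{r\in T}u_i(X_r)<k\cdot\frac{\beta+1}{\beta+2}\gamma_i\leqslant(m-1)\frac{\beta+1}{\beta+2}\gamma_i$. Averaging over the $m$ parts produces an index $t^\ast$ with $u_i\big(P_{t^\ast}\cap\bigcup_{r}X_r\big)<\frac{m-1}{m}\cdot\frac{\beta+1}{\beta+2}\gamma_i$, and one more application of subadditivity, $u_i(P_{t^\ast})\leqslant u_i(P_{t^\ast}\cap B)+u_i(P_{t^\ast}\cap\bigcup_r X_r)$, then forces $u_i(B)\geqslant u_i(P_{t^\ast}\cap B)>\gamma_i-\frac{m-1}{m}\cdot\frac{\beta+1}{\beta+2}\gamma_i\geqslant\frac{1}{\beta+2}\gamma_i$. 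This contradicts $u_i(B)<\frac{1}{\beta+2}\gamma_i$ from the previous paragraph, so the algorithm can never get stuck and every agent is served, which completes the proof.
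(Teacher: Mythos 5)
Your proof is correct and follows essentially the same route as the paper's: both hinge on the last-added-job bound $u_i(X_r) < \frac{\beta+1}{\beta+2}\gamma_i$ for each served agent, on the additivity of IS functions over subsets of a feasible set (exactly the structural fact that avoids the loss incurred by general XOS valuations), and on a counting argument over agent $a_i$'s own MMS bundles against the total $\sum_t u_i(P_t) \geqslant m\gamma_i$. The only difference is organizational rather than substantive: the paper maintains a per-round invariant (at the start of every outer iteration some MMS bundle retains value at least $\frac{1}{\beta+2}\gamma_i$ inside $J'$, proved by assuming all bundles are small and summing), whereas you run the identical accounting once as a contradiction at the final stuck state, replacing the ``all bundles small'' step with an averaging step to locate one bundle whose intersection with the leftover bag must be large.
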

\begin{proof}
As we have discussed, it suffices to prove that at the beginning of any round of the outer while loop, there are sufficiently many remaining jobs in $J'$ for every remaining agent in $A'$, i.e.,
\begin{align*}
    u_i'(J') \geqslant \frac{\beta}{\beta +2}\gamma_i, \text{ for any $a_i \in A'$}.
\end{align*}
To prove the above inequality, in the following, we actually prove a stronger argument.
\begin{claim}\label{claim:mms:a-i}
For any $a_i \in A'$, let $\fX' = (X'_1,\cdots,X'_m)$ be a feasible MMS schedule for $a_i$.
Then there exists $k \in [m]$, such that $u_i(X_k' \cap J') \geqslant \frac{1}{\beta +2} \cdot \gamma_i$.
\end{claim}
Given \cref{claim:mms:a-i} and the $\beta$-approximation of $u'_i$, $u'_i(X_k' \cap J') \geqslant \frac{\beta}{\beta +2} \cdot \gamma_i$ and thus the lemma holds.
We prove by contradiction and assume \cref{claim:mms:a-i} does not hold for agent $a_i$.
Since $\fX' = (X'_1,\cdots,X'_m)$ is a feasible MMS schedule for $a_i$, $u_i(X'_k) \geqslant \MMS_i = \gamma_i$ for all $k\in [m]$ and thus
\begin{align} \label{eq:mms:bag:2}
    \sum_{k\in [m]} u_i(X'_k) \geqslant m\cdot \gamma_i.
\end{align}
Denote by $(X_r)_{a_r \in A \setminus A'}$ the assignments that are allocated to $A\setminus A'$ in previous rounds by \cref{alg:mms:bag}, and for each $a_r$, let $j_{l_r}$ be the last item added to the bag $B$.
Note that $j_{l_r} \in X_r$ otherwise $a_r$ will stop the inner while loop (Step \ref{step:mms:bag:1}) before $j_{l_r}$ was added.
Moreover, since $a_i$ did not break the while loop either, $u'_i(X_r \setminus \set{j_{l_r}}) <  \frac{\beta}{\beta +2} \cdot \gamma_i$.
Thus $u_i(X_r \setminus \set{j_{l_r}}) \leqslant  \frac{1}{\beta +2} \cdot \gamma_i$ as $u'_i$ is $\beta$-approximation of $u_i$.
By the assumption that all jobs are small, i.e., $u_i(j_{l_r}) \leqslant \frac{\beta}{\beta +2} \cdot \gamma_i$, we have the following
\begin{align}\label{eq:mms:bag:3}
    u_i(X_r) = u_i(X_r\setminus \set{j_{l_r}}) + u_i(j_{l_r}) <  \frac{\beta+1}{\beta +2} \cdot \gamma_i.
\end{align}
If $u_i(X_k' \cap J') < \frac{1}{\beta +2} \cdot \gamma_i$ for all $k\in [m]$, then
\begin{align*}
    \sum_{k\in [m]} u_i(X_k') 
    &= \sum_{k\in [m]}\left( u_i(X_k' \cap J') + \sum_{a_r\in A\setminus A'} u_i(X_k' \cap X_r) \right) \\
    &= \sum_{k\in [m]} u_i(X_k' \cap J') +   \sum_{a_r\in A\setminus A'} \sum_{k\in [m]} u_i(X_k' \cap X_r) \\
    & \leqslant \sum_{k\in [m]} u_i(X_k' \cap J') +   \sum_{a_r\in A\setminus A'} u_i( X_r) \\
    &< m \cdot \frac{1}{\beta +2}\cdot\gamma_i + (m - |A'|) \cdot \frac{\beta+1}{\beta +2} \cdot \gamma_i < m\cdot \gamma_i,
\end{align*}
where the first inequality is because the $X'_k$'s are disjoint and the second inequality is because of \cref{eq:mms:bag:3}.
Thus we obtain a contradiction with \cref{eq:mms:bag:2}.
\end{proof}

\subsubsection{Main Existential Theorem}
Combining \cref{lem:mms:preprocess:match} and \cref{lem:mms:bag:small}, it is not hard to prove the main existential result.
\begin{algorithm}[htb]
\caption{\hspace{-2pt}{\bf .} Main Algorithm: Matching-BagFilling}
\label{alg:mms:exist}
\begin{algorithmic}[1]
\REQUIRE An arbitrary FISP instance $\cI = (J, A, \fu_{A})$; $\beta$-approximation algorithm for IS functions; Thresholds $(\gamma_1,\cdots,\gamma_m)$.
\ENSURE $\frac{\beta}{\beta+2}$-MMS schedule $\fX=(X_1,\cdots,X_m)$.
\STATE Run \cref{alg:mms:Prepossessing} on $\cI$ with $(\gamma_1,\cdots,\gamma_m)$. Obtain $\cI'=(J',A',\fu_{A'})$ and $(X_r)_{a_r\in A\setminus A'}$. 
\STATE Run \cref{alg:mms:bag} on $\cI'$ with $(\gamma_1,\cdots,\gamma_m)$ and the $\beta$-approximation algorithm. Obtain $(X_i)_{a_i \in A'}$.
\end{algorithmic}
\end{algorithm}

\begin{theorem}
\cref{alg:mms:exist} with the optimal algorithm for IS functions (i.e., $\beta = 1$) and $\gamma_i = \MMS_i$ for all $a_i \in A$ returns a 1/3-MMS schedule for arbitrary \FISP/ instance.
\label{thm:mms:new}
\end{theorem}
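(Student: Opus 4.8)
The plan is to derive the theorem as a direct consequence of the two structural lemmas already established, using the single numerical observation that setting $\beta = 1$ makes the generic guarantee $\frac{\beta}{\beta+2}$ collapse to exactly $\frac{1}{3}$. Since \cref{alg:mms:exist} runs in two phases---the matching procedure of \cref{alg:mms:Prepossessing} followed by the bag-filling procedure of \cref{alg:mms:bag}---I would argue separately that each phase delivers the $\frac{1}{3}$ guarantee to the agents it serves, and then simply union the two groups of agents.

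First I would account for the agents peeled off during the matching phase. Running \cref{alg:mms:Prepossessing} with thresholds $\gamma_i = \MMS_i(J,m)$, \cref{lem:mms:preprocess:match} states that every agent $a_r \in A\setminus A'$ who is handed a single large job satisfies $u_r(X_r) \geq \frac{\beta}{\beta+2}\gamma_r = \frac{1}{3}\MMS_r$. The same lemma simultaneously certifies the monotonicity $\MMS_i(J',|A'|) \geq \MMS_i(J,m) = \gamma_i$ for each surviving agent $a_i \in A'$, so dropping to the reduced instance never erodes the targets we still owe.

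Next I would handle the bag-filling phase. Because the loop of \cref{alg:mms:Prepossessing} halts only once no large job remains, the reduced instance $\cI'$ meets the smallness precondition $u_i(j_k) \leq \frac{1}{3}\gamma_i$ demanded by \cref{alg:mms:bag}, so \cref{lem:mms:bag:small} is applicable. The one point requiring care is that we pass \cref{alg:mms:bag} the original thresholds $\gamma_i = \MMS_i(J,m)$ rather than the (possibly larger) MMS values of $\cI'$ itself. This is precisely where the monotonicity is needed: the counting argument behind \cref{claim:mms:a-i} only requires a feasible partition of $J'$ into $|A'|$ bundles each worth at least $\gamma_i$ to $a_i$, and such a partition exists because $\MMS_i(J',|A'|) \geq \gamma_i$. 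Hence every surviving agent $a_i \in A'$ ends with $u_i(X_i) \geq \frac{1}{3}\gamma_i = \frac{1}{3}\MMS_i$.

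Combining the two phases, every agent $a_i \in A$ finishes with $u_i(X_i) \geq \frac{1}{3}\MMS_i$, which is exactly the claimed $\frac{1}{3}$-MMS schedule. I expect the only genuinely delicate step to be the threshold bookkeeping just described---confirming that running bag-filling with the unchanged $\gamma_i$ on an instance whose intrinsic MMS values have only grown still certifies the $\frac{1}{3}$ bound---since all of the heavy combinatorial work, namely the disjointness-and-summation argument of \cref{claim:mms:a-i}, is already discharged inside \cref{lem:mms:bag:small}.
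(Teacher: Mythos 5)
Your proof is correct and takes essentially the same approach as the paper, which establishes \cref{thm:mms:new} precisely by combining \cref{lem:mms:preprocess:match} and \cref{lem:mms:bag:small} with $\beta=1$ so that $\frac{\beta}{\beta+2}=\frac{1}{3}$. The one delicate point you flag---that running the bag-filling phase with the original thresholds $\gamma_i = \MMS_i(J,m) \leqslant \MMS_i(J',|A'|)$ still certifies the bound, since \cref{claim:mms:a-i} only needs a partition of $J'$ into bundles each worth at least $\gamma_i$---is exactly the bookkeeping the paper leaves implicit here and later formalizes as \cref{lem:mms:bag:gamma}.
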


Interestingly, in the independent and parallel work \cite{DBLP:journals/corr/abs-2104-06280}, via a similar bag-filling algorithm, the authors prove the existence of $1/3$-approximate MMS allocations under the context of graphically conflicting items.
However, the two models in our work and theirs are not compatible in general.

\subsection{Polynomial-time Implementation}
\label{sec:mms:poly}

Note that, in general, \cref{alg:mms:exist} is not efficient, because if P $\neq$ NP, the computation of exact values for IS functions and MMS values cannot be done in polynomial time. 
For the special case when jobs are rigid or unit, IS functions can be computed in polynomial time.
If the number of machines is constant, MMS values for rigid jobs can be computed in pseudo-polynomial time \cite{DBLP:conf/iwoca/ChiarelliKMPPS20}. 
Thus, in this section, we deal with the general case.
Of course, for IS functions, we can directly use the $\beta$-approximation algorithms, and the best-known approximation ratio is 0.644 \cite{DBLP:journals/siamdm/ImLM20}.
Regarding the MMS barrier, instead of using their approximate values, we utilize a combinatorial trick similar with one used in \cite{DBLP:journals/teco/BarmanK20} such that without knowing their values, we can still execute our algorithm.

First, an important corollary of \cref{lem:mms:preprocess:match} and \cref{lem:mms:bag:small} is that if $\gamma_i \leqslant \MMS_i$ for some $a_i$, no matter what values are set for $\gamma_{j}$, $j\neq i$,
\cref{alg:mms:exist} always assigns a bag to $a_i$ such that $u_i(X_i) \geqslant \frac{\beta}{\beta + 2} \gamma_i$.

\begin{lemma}\label{lem:mms:bag:gamma}
For any $a_i$, if $\gamma_i \leqslant \MMS_i$, \cref{alg:mms:exist} ensures that $u_i(X_i) \geqslant \frac{\beta}{\beta + 2} \gamma_i$, regardless of $\gamma_{-i}$. 
\end{lemma}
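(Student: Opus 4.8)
The plan is to exploit the two-phase structure of \cref{alg:mms:exist} and to observe that the entire analysis behind \cref{lem:mms:bag:small}, when restricted to a single agent $a_i$, never references the thresholds $\gamma_{-i}$ of the other agents. First I would dispose of the case $\gamma_i=0$, which is trivial since utilities are nonnegative, and then split according to where $a_i$ is handled. If $a_i$ is removed during the Matching Procedure (\cref{alg:mms:Prepossessing}), then $X_i=\set{j_k}$ for a job with $u_i(j_k)>\frac{\beta}{\beta+2}\gamma_i$, so the bound holds immediately. The substantive case is therefore $a_i\in A'$, i.e., $a_i$ survives Matching and enters the Bag-Filling Procedure run on $\cI'=(J',A',\fu_{A'})$.

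For that case I would re-run the argument of \cref{claim:mms:a-i} essentially verbatim, reading $\gamma_i$ in place of $\MMS_i$ at every threshold and using $|A'|$ agents on the reduced job set $J'$. The only place the original proof used the equality $\gamma_i=\MMS_i$ is \cref{eq:mms:bag:2}; here I would instead invoke \cref{lem:mms:preprocess:match} to get $\MMS_i(J',|A'|)\geqslant\MMS_i(J,m)\geqslant\gamma_i$, so that a maximin partition $\fX'$ of $u_i$ into $|A'|$ parts on $J'$ satisfies $\sum_k u_i(X_k')\geqslant|A'|\cdot\gamma_i$. Every other ingredient survives unchanged: the small-jobs bound $u_i(j_{l_r})\leqslant\frac{\beta}{\beta+2}\gamma_i$ holds after Matching with threshold $\gamma_i$, and $u_i(X_r\setminus\set{j_{l_r}})\leqslant\frac{1}{\beta+2}\gamma_i$ follows because $a_i$ itself did not stop the inner while loop, a condition governed only by $\gamma_i$. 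Assembling these exactly as in \cref{eq:mms:bag:3} gives $u_i(X_r)<\frac{\beta+1}{\beta+2}\gamma_i$ for every earlier bag, and the same counting contradiction establishes $u_i'(J')\geqslant\frac{\beta}{\beta+2}\gamma_i$ at the start of every outer round for which $a_i\in A'$.

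I would then close the argument by noting that this lower bound on $u_i'(J')$ is strictly positive once $\gamma_i>0$, so $J'$ stays nonempty and, by monotonicity of $u_i'$, at the latest when $B$ has absorbed all of $J'$ we have $u_i'(B)\geqslant\frac{\beta}{\beta+2}\gamma_i$; hence every outer round while $a_i\in A'$ does assign a bag to some agent and removes it from the active set. Consequently the outer loop cannot terminate via $J'=\emptyset$ while $a_i$ remains, and must eventually remove $a_i$ by handing her a bag $X_i$ with $u_i(X_i)\geqslant\frac{\beta}{\beta+2}\gamma_i$, as claimed.

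The step I expect to be the main obstacle is making precise \emph{why} the estimate is insensitive to $\gamma_{-i}$. The delicate point is the per-bag bound on $u_i(X_r)$ when the bag was claimed by some other agent $a_r$: a priori that bag was closed because $a_r$'s threshold $\frac{\beta}{\beta+2}\gamma_r$ was reached, so its size seems governed by $\gamma_r$ rather than $\gamma_i$. The resolution I would emphasize is that the quantity we actually bound is $a_i$'s \emph{own} value of the bag, and since $a_i$ never triggered the stopping condition we have $u_i'(B\setminus\set{j_{l_r}})<\frac{\beta}{\beta+2}\gamma_i$ no matter which agent ultimately claimed $B$; thus the per-bag estimate, and with it the whole contradiction, is driven solely by $\gamma_i$ and is completely oblivious to $\gamma_{-i}$.
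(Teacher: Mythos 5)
Your proposal is correct and takes essentially the same route as the paper's own proof: the paper establishes the lemma via \cref{cla:mms:gamma}, which is exactly your plan of re-running \cref{claim:mms:a-i} with $\gamma_i \leqslant \MMS_i$ in place of $\gamma_i = \MMS_i$, and it rests on the same key observation that the per-bag bound $u_i(X_r) < \frac{\beta+1}{\beta+2}\gamma_i$ uses only the fact that $a_i$ herself never triggered the stopping condition, hence is oblivious to $\gamma_{-i}$. Your explicit handling of the matching phase via \cref{lem:mms:preprocess:match} and of termination of the outer loop is slightly more careful than the paper, which simply assumes large jobs away, but the substance of the argument is identical.
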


We prove \cref{lem:mms:bag:gamma} in \cref{app:mms}. 
Now, we are ready to introduce the trick.
First, we set each $\gamma_i$ to be sufficiently large such that $\gamma_i \geqslant \MMS_i$ for all $a_i$.
Then we run \cref{alg:mms:exist}.
If we found some agent $a_i$ with $u_i(X_i) < \frac{\beta}{\beta+2} \gamma_i$, it means $\gamma_i$ is higher than $\MMS_i$ and we can decrease $\gamma_i$ by $0 <1-\epsilon < 1$ fraction and keep the other MMS values unchanged. 
We repeat the above procedure until everyone is satisfied $u_i(X_i) \geqslant \frac{\beta}{\beta + 2} \gamma_i$.
By \cref{lem:mms:bag:gamma}, it must be that $\gamma_i \geqslant (1-\epsilon) \MMS_i$ for all $a_i$.
We summarize this in \cref{alg:mms:implement}, and it is straightforward to have the following theorem. 

\begin{algorithm}[htb]
\caption{\hspace{-2pt}{\bf .} Efficient Implementation: Matching-BagFilling}
\label{alg:mms:implement}
\begin{algorithmic}[1]
\REQUIRE An arbitrary FISP instance $\cI = (J, A, \fu_{A})$; $\beta$-approximation polynomial-time algorithm for IS functions; Thresholds $(\gamma_1 = \frac{u_1'(J)}{\beta},\cdots,\gamma_m = \frac{u_m'(J)}{\beta})$; $0<\epsilon<1$.
\ENSURE $\frac{\beta}{(\beta+2)}(1-\epsilon)$-MMS schedule $\fX=(X_1,\cdots,X_m)$.
\STATE Run \cref{alg:mms:exist} on $\cI$ with $(\gamma_1,\cdots,\gamma_m)$. Obtain $\fX=(X_1,\cdots,X_m)$. 
    \WHILE{there exist $a_i \in A$ such that $u'_i(X_i) < \frac{\beta}{\beta + 2}\gamma_i$}
    \label{line:mms:implement:2}
        \STATE Set $\gamma_i = (1-\epsilon)\gamma_i$.
        \STATE Run \cref{alg:mms:exist} on $\cI$ with $(\gamma_1,\cdots,\gamma_m)$ and update $\fX=(X_1,\cdots,X_m)$.
    \ENDWHILE
    \label{line:mms:implement:5}
\end{algorithmic}
\end{algorithm}

\begin{theorem}
For any $0 < \epsilon < 1$, \cref{alg:mms:implement} returns a $\frac{\beta}{\beta+2}(1-\epsilon)$-MMS schedule for arbitrary \FISP/ instance with an $\beta$-approximation algorithm for IS functions. 
The running time is polynomial with $|J|$, $|A|$ and $1/\epsilon$.
Particularly, using the 0.64-approximation algorithm in \cite{DBLP:journals/siamdm/ImLM20}, we have $0.24(1-\epsilon)$-approximation polynomial-time algorithm. 
\end{theorem}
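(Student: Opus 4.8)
The plan is to prove the approximation guarantee and the polynomial running time separately, and then instantiate $\beta$. For the approximation, I would track each threshold $\gamma_i$ across the outer loop and maintain the invariant that $\gamma_i > (1-\epsilon)\MMS_i$ at all times. The base case is immediate: since $u_i'$ is a $\beta$-approximation, $u_i'(J)\geq \beta\, u_i(J)\geq \beta\,\MMS_i$, so the initial value $\gamma_i = u_i'(J)/\beta$ satisfies $\gamma_i\geq \MMS_i$. For the inductive step, I would argue that a threshold is scaled down only while it still strictly exceeds that agent's maximin share. Concretely, the loop decreases $\gamma_i$ only when the current schedule has $u_i'(X_i) < \frac{\beta}{\beta+2}\gamma_i$; since each assigned bundle is feasible the oracle is exact on it, so this reads $u_i(X_i) < \frac{\beta}{\beta+2}\gamma_i$, and by the contrapositive of \cref{lem:mms:bag:gamma} this forces $\gamma_i > \MMS_i$ at that moment. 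Multiplying by $(1-\epsilon)$ keeps $\gamma_i > (1-\epsilon)\MMS_i$, and as $\gamma_i$ is never increased the invariant persists. The ``regardless of $\gamma_{-i}$'' clause of \cref{lem:mms:bag:gamma} is exactly what makes this robust: re-running \cref{alg:mms:exist} after shrinking some other threshold can never knock an agent whose threshold already dropped to at most $\MMS_i$ back below $\frac{\beta}{\beta+2}\gamma_i$, so such a threshold is frozen.

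When the loop halts, its guard fails, hence $u_i'(X_i)\geq \frac{\beta}{\beta+2}\gamma_i$ for every $a_i$. Since each assigned bundle $X_i$ is feasible, $u_i(X_i)=u_i'(X_i)$, and the invariant then gives $u_i(X_i)\geq \frac{\beta}{\beta+2}\gamma_i > \frac{\beta}{\beta+2}(1-\epsilon)\MMS_i$, so the returned $\fX$ is $\frac{\beta}{\beta+2}(1-\epsilon)$-MMS. Agents with $\MMS_i=0$ satisfy any approximation vacuously and can simply be excluded from the guard so that the loop does not spin on them.

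For the running time, each outer iteration scales exactly one threshold and then invokes \cref{alg:mms:exist} once; its matching and bag-filling phases, together with the polynomially many calls to the $\beta$-approximation oracle, run in time polynomial in $|J|$ and $|A|$. Hence it suffices to bound the number of iterations. Once $\gamma_i$ falls to at most $\MMS_i$ it is never scaled again, so the number of times $\gamma_i$ is multiplied by $(1-\epsilon)$ is at most $\log_{1/(1-\epsilon)}(\gamma_i^{\mathrm{init}}/\MMS_i)\leq \frac{1}{\epsilon}\ln(\gamma_i^{\mathrm{init}}/\MMS_i)$. Bounding $\gamma_i^{\mathrm{init}}\leq u_i(J)/\beta$ from above and $\MMS_i$ from below by a quantity whose bit-length is polynomial in the input, this count is polynomial in the input size and in $1/\epsilon$; summing over the $m$ agents yields a polynomial total number of iterations, and therefore a polynomial overall running time.

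Finally, plugging in the $0.644$-approximation algorithm of \cite{DBLP:journals/siamdm/ImLM20} gives $\frac{\beta}{\beta+2}=\frac{0.644}{2.644}>0.24$, so the schedule is $0.24(1-\epsilon)$-MMS and is produced in polynomial time. The step I expect to demand the most care is the running-time analysis rather than correctness: one must rule out the loop running forever, which requires isolating the degenerate $\MMS_i=0$ agents and lower-bounding the positive maximin shares so that the geometric shrinkage of each $\gamma_i$ must stop after polynomially many steps; the approximation half then follows essentially mechanically from \cref{lem:mms:bag:gamma}.
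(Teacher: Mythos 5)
Your proposal is correct and takes essentially the same route as the paper: the paper obtains this theorem directly from \cref{lem:mms:bag:gamma} plus the threshold-decreasing trick (start with $\gamma_i \geqslant \MMS_i$, observe that a decrease is triggered only when $\gamma_i > \MMS_i$, hence every final threshold satisfies $\gamma_i \geqslant (1-\epsilon)\MMS_i$), which is exactly your invariant argument. Your write-up is in fact more careful than the paper's, which declares the theorem straightforward and is silent on termination: you correctly note that the loop guard must be evaluated exactly on the assigned (feasible, witnessed) bundles rather than through the $\beta$-approximate oracle, and that agents with $\MMS_i=0$ must be excluded from the guard, with the positive maximin shares bounded below (e.g.\ by the smallest positive job utility) to obtain the polynomial iteration count.
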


\section{Approximately EF1 and PO Scheduling}
\label{sec:EF1vsPO}

In this section, we investigate the extent to which there is a schedule that is both EF1 and PO.
We first show that EF1 and PO are not compatible even if jobs are rigid and valuations are unary, i.e., $u_i(j_k) = 1$ for all $a_i \in A$ and $j_k \in J$.
That is no algorithm can return an EF1 and PO schedule for all instances.
Fortunately, if the jobs have unit processing time, an EF1 and PO schedule exists and can be computed in polynomial time. 
This result continues to hold if the agents have weighted but identical utilities, i.e., $u_i(j_k) = u_r(j_k)$ for any job $j_k$ and any two agents $a_i$ and $a_r$.
We sometimes ignore the subscript and use $u(\cdot)$ to denote the identical valuation.

\subsection{Incompatibility of EF1 and PO}

\begin{lemma}
EF1 and PO are not compatible for \FISP/ with $\langle$unweighted, rigid$\rangle$, i.e., no algorithm can return a feasible schedule that is simultaneously EF1 and PO for all \FISP/ with $\langle$unweighted, rigid$\rangle$ instances.
\label{lem:EF1+PO:noCompatible}
\end{lemma}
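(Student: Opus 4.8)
The plan is to exhibit a single \FISP/ instance with $\langle$unweighted, rigid$\rangle$ for which every feasible PO schedule violates EF1; this suffices, since one bad instance rules out any universal algorithm. First I would record two simplifications valid under the unweighted-rigid assumption. Because jobs are rigid, each job occupies its entire interval, so a set is feasible exactly when its intervals are pairwise non-overlapping; because valuations are unary, in a feasible schedule every bundle $X_i$ is an independent set of the interval conflict graph and $u_i(X_i)=|X_i|$. Consequently (i) a feasible schedule is EF1 iff the bundle sizes differ pairwise by at most one, since removing one job from an independent set lowers its cardinality by exactly one, and (ii) a feasible schedule is PO iff its size vector $(|X_1|,\dots,|X_m|)$ is Pareto-optimal among all size vectors realizable by partial partitions into independent sets.

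Then I would construct the instance with $m=2$ and four jobs: one ``long'' job $A$ occupying time slots $\{1,2,3\}$, and three ``short'' jobs $B_1,B_2,B_3$ occupying slots $\{1\},\{2\},\{3\}$ respectively. The $B_\ell$'s are pairwise disjoint, each overlaps $A$, and no time slot is covered by three jobs, so the instance is $2$-schedulable. The crucial structural fact is that $A$ conflicts with every $B_\ell$: hence in any feasible schedule the bundle containing $A$ can contain nothing else and has size $1$, whereas the $B_\ell$'s, being mutually compatible, can all sit in a single bundle of size $3$.

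Next I would identify the Pareto frontier of realizable size vectors. Packing all four jobs is possible only as $(3,1)$ (all three $B_\ell$'s on one agent, $A$ alone on the other) or its mirror $(1,3)$; the balanced profile $(2,2)$ is \emph{unrealizable}, because giving both agents two disjoint jobs requires all four jobs to be scheduled, yet any bundle containing $A$ has size one. I would then verify that $(3,1)$ and $(1,3)$ are Pareto-optimal, since the total scheduled size is capped at $4$ and no single bundle has size $4$, so neither is dominated; and that every balanced realizable vector, namely $(0,0),(1,0),(1,1),(2,1)$ and their mirrors, is strictly dominated by $(3,1)$ or $(1,3)$. Thus by simplification (ii) every PO schedule has size profile $(3,1)$ or $(1,3)$, and by simplification (i) such a profile has size gap $2$ and is therefore not EF1, which establishes the incompatibility.

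I expect the main obstacle to be the case analysis pinning down the Pareto frontier rigorously — in particular arguing cleanly that $(2,2)$, and more generally any EF1-compatible profile, is either unrealizable or strictly dominated. This is precisely where the non-matroidal nature of interval-scheduling constraints does the work: the maximal feasible sets $\{A\}$ and $\{B_1,B_2,B_3\}$ have different sizes, which is what breaks the exchange structure and distinguishes our setting from the laminar-matroid case of \cite{conf/ijcai/BiswasB18}, where EF1 and PO do coexist under identical valuations.
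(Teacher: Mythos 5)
Your proof is correct and uses essentially the same counterexample as the paper: your instance is exactly the paper's construction specialized to $m=2$ (three mutually compatible unit jobs versus a long job that conflicts with all of them), and both arguments show that Pareto optimality forces the $(3,1)$ split, which has size gap $2$ and hence violates EF1. The only differences are presentational and of scope --- you pin down the Pareto frontier by enumerating realizable size profiles, which is clean for two agents, whereas the paper runs an exchange-style case analysis so that its single parameterized family (with $m-1$ long jobs) witnesses incompatibility for every $m \geq 2$; since one bad instance suffices for the lemma as stated, your restriction to $m=2$ loses nothing.
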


\begin{proof}
To prove \cref{lem:EF1+PO:noCompatible}, we show that any PO schedule must not be an EF1 schedule for the instance in \cref{fig:EF1_PO_noCompatible}.
We consider an arbitrary PO schedule, denoted by $\fX=(X_1,\cdots,X_m)$ and let $X_0=J\setminus\bigcup_{i\in[m]}X_i$.
We claim that $\fX$ must satisfy the following two properties:
\begin{enumerate}
    \item $\exists i\in[m]$ such that $X_i=J_{1}$;
    \item $X_0=\emptyset$.
\end{enumerate}
We first prove that there exists an $i\in[m]$ such that $X_i=J_{1}$.
Suppose, towards to the contradiction, that there is no $i\in[m]$ such that $X_i=J_{1}$.
Note that there must exist $i\in[m]$ such that $X_i\cap J_{1}\ne\emptyset$ otherwise $\fX$ is not a PO schedule.
Now we consider the job set $X_l$ such that $X_l\cap J_{1}\ne\emptyset$.
In the case where no job set except $X_l$ in $\fX$ contains jobs in $J_1$, we can construct another feasible schedule $\fX'=\fX\cup\set{J_{1}}\setminus X_l$.
It is easy to see that $u_i(X_i') \geqslant u_i(X_i)$ for all $a_i\in A$ and $u_l(J_{1})>u_l(X_l)$ for agent $a_l$.
This implies that $\fX$ is not a PO schedule.
In the case where there exist another one or two subsets $X_r,X_p\in\fX$ such that $X_r,X_p\cap J_1\ne\emptyset$.
Since there are only three jobs in $J_1$, there are at most three job sets in $\fX$ that contains some job in $J_1$.
Without loss of generality, we assume that both $X_r$ and $X_p$ exist.
Since every job in $J_{2}$ overlaps with every job in $J_{1}$, we have $X_l,X_r,X_p\cap J_{2}=\emptyset$.
Therefore, $|X_l|=|X_r|=|X_p|=1$.
Since there are $m-1$ long jobs and every job set in $\fX\setminus(X_l\cup X_r\cup X_p)$ contains only one job in $J_2$, $X_0$ contains two jobs from $J_2$.
Now we can construct another feasible schedule $\fX'=(X_1',\cdots,X_m')$ in following way:
move all jobs in $X_r\cup X_p$ to $X_l$;
assign one of two jobs in $X_0$ to $X_r$ and another one to $X_p$;
keep the remaining job sets same as the corresponding one in $\fX$.
It is easy to see that $u_i(X_i') \geqslant u_i(X_i)$ for all $a_i\in A$ and $u_l(X_l')=u_l(J_1)>u_l(X_l)$. 
This implies that $\fX$ is not a PO schedule.

\begin{figure}[htbp]
    \centering
    \includegraphics[width=7cm]{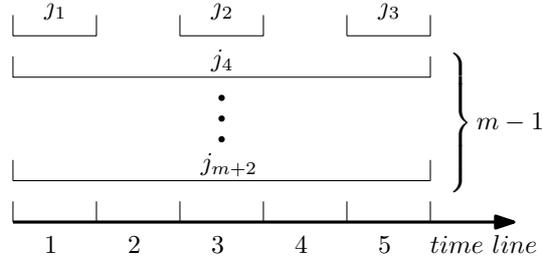}
    \caption{Instance for \cref{lem:EF1+PO:noCompatible}. There are $|A| = m$ agents and $|J|=m+2$ jobs with $m \geqslant 2$.
    Job set $J$ can be partitioned as $J_{1}\cup J_{2}$, where $J_{1}=\set{j_1,j_2,j_3}$ and $J_{2}=\set{j_4,j_5,\cdots,j_{m+2}}$. 
    Each job $J_{1}$ has unit processing time and each job $J_{2}$ has processing time 5.
    All jobs are rigid such that $j_i \in J_{1}$ needs to occupy the entire time slot $2i-1$, where $i\in\set{1,2,3}$.
    And $j\in J_{2}$ occupies the entire time period from 1 to 5.}
    \label{fig:EF1_PO_noCompatible}
\end{figure}

Therefore, we can assume that there must exist an $i\in[m]$ such that $X_i=J_{1}$.
Without loss of generality, we assume that $X_1=J_{1}$. 
Now we show that the second property holds.
Since $|J_2|=m-1$, $X_0\ne\emptyset$ implies that there must exist a job set $X_l\in\fX$ such that $X_l=\emptyset$.
This would imply that $\fX$ is not a PO schedule.
Since $\fX$ holds the above two properties, we assume that every remaining agent in $A\setminus\set{a_1}$ will receive exactly one job in $J_{2}$.
Without loss of generality, we assume that $X_i=\set{j_{i+2}}$.
Therefore, we have $\fX=(X_1,\cdots,X_m)$, where $X_1=J_{1},X_2=\set{j_4},\cdots,X_m=\set{j_{m+2}}$.
Since $u_i(X_1\setminus\set{j})=2>u_i(X_i)=1,\forall a_i\in A \setminus\set{a_1},\forall j\in X_1$, $\fX$ is not an EF1 schedule.
\end{proof}

\subsection{Compatibility of EF1 and PO}

\begin{theorem}\label{thm:EF1+PO:weighted:p}
Given an arbitrary instance of \FISP/ with $\langle$identical, unit$\rangle$, \cref{alg:weight:p_j=1:EF1:polynomial} returns a schedule that is simultaneously EF1 and PO in polynomial time.
\end{theorem}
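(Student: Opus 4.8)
The plan is to exploit the special structure of $\langle$identical, unit$\rangle$ instances. First I would observe that, for unit jobs, a job set is feasible on a single machine exactly when it admits a system of distinct time-slot representatives (each job $j$ placed in some slot of $[r_j,d_j]$), so the feasible sets are precisely the independent sets of a transversal matroid on ground set $J$; feasibility and maximum-weight feasible subsets are then computable in polynomial time by bipartite matching. Since valuations are identical, writing $v_j=u(j)$ we have $u(X)=\sum_{j\in X}v_j$ for every feasible bundle $X$ and $u(X\setminus\{j\})=u(X)-v_j$. Consequently EF1 collapses to a pure \emph{balance} condition: a feasible schedule $\fX$ is EF1 iff for every pair $a_i,a_k$ we have $u(X_i)\ge u(X_k)-\max_{j\in X_k}v_j$. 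I would also record the companion fact for efficiency: because valuations are identical and bundles are feasible, $\sum_i u(X_i)$ equals the total value of all \emph{scheduled} jobs, so any schedule maximizing the total scheduled value over all feasible schedules is automatically PO, since a Pareto improvement $\fX'$ would satisfy $\sum_i u(X_i') > \sum_i u(X_i)$, contradicting maximality. Hence it suffices to produce, in polynomial time, a maximum-total-value schedule that additionally satisfies the balance condition.

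The algorithm I would analyze first computes a maximum-total-value schedule by solving the assignment problem that matches each job to a $(\text{machine},\text{slot})$ pair (a polynomial-time max-weight bipartite matching / min-cost flow), fixing the optimal total value $V^\ast$ and an initial feasible partition. It then \emph{rebalances} while preserving $V^\ast$: as long as balance is violated by the richest agent $a_k$ and poorest agent $a_i$, note that $u(X_k)-\max_{j\in X_k}v_j>u(X_i)$ forces $u(X_i)+v_j<u(X_k)$ for \emph{every} $j\in X_k$, so transferring any single job $j$ from $X_k$ to $X_i$ leaves both new bundle values strictly below $u(X_k)$ and strictly decreases the convex potential $\Phi(\fX)=\sum_i u(X_i)^2$, because $\Delta\Phi = 2v_j\bigl(u(X_i)-u(X_k)+v_j\bigr)<0$, all while keeping the scheduled set (hence $V^\ast$, hence PO) intact. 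Equivalently, I would target the schedule that is leximin-optimal on the sorted bundle-value vector among all max-total-value schedules; its existence is immediate by finiteness, and I would prove it is EF1 by contradiction, using the transfer above to improve leximin whenever balance fails. Each step is a polynomial matching/feasibility computation, and the strictly decreasing bounded potential caps the number of transfers.

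The hard part will be realizing the value transfer while respecting per-machine feasibility. A direct move of $j$ into $X_i$ is legitimate only when $X_i\cup\{j\}$ is schedulable on one machine (checkable by a greedy earliest-deadline test, possibly after rescheduling $X_i$ internally among its slots); when instead machine $i$'s relevant slots are saturated, the transfer must be routed through an alternating/augmenting path in the job–$(\text{machine},\text{slot})$ bipartite graph, i.e. a transversal-matroid exchange that reroutes intervening jobs among machines without changing the scheduled set. Guaranteeing such a feasibility-preserving, value-preserving improving exchange exists \emph{precisely when the balance condition is violated} is the technical crux, and it is where the matroid (rather than merely laminar) structure is essential; the weighted aspect adds the wrinkle that one must choose which job to move so as to strictly improve leximin rather than overshoot, which is exactly what the inequality $u(X_i)+v_j<u(X_k)$ secures for every candidate $j$. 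Once this exchange lemma and its augmenting-path implementation are established, PO follows from maximality of $V^\ast$, EF1 follows from leximin-optimality, and the polynomial running time follows from bounding the number of augmenting-path steps by the decrease of $\Phi$.
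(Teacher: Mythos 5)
Your first paragraph is on solid ground and matches the paper's underpinnings: with unit jobs feasibility is transversal-matroid independence (checkable by bipartite matching), with identical valuations EF1 collapses to the balance condition $u(X_i)\geqslant u(X_k)-\max_{j\in X_k}v_j$, and any schedule maximizing the total scheduled value is PO. The paper indeed proves PO exactly this way. But from there your argument has a genuine gap, and you name it yourself: the claim that whenever the balance condition is violated there exists a \emph{feasibility-preserving, value-preserving} exchange that strictly improves leximin (or strictly decreases $\Phi$) is never proved, only asserted as "the technical crux." This lemma \emph{is} the theorem: a single-job transfer from $X_k$ to $X_i$ generally destroys feasibility of $X_i$, and establishing that a rerouting through alternating paths always exists and can be chosen so as not to overshoot is precisely where all the difficulty lives. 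In fact, the natural way to prove your exchange lemma is to decompose $X_i\cup X_k$ into slot-groups of size at most $2$ (a $2$-matching of jobs to time slots) and reassign each group poorest-agent-first, largest-job-first --- which is exactly the paper's construction, applied globally rather than pairwise: the paper computes a maximum weighted $m$-matching of jobs to time slots, obtaining groups $J_t$ of size at most $m$, and then proves by induction over groups (\cref{lem:weight:p_j=1:EF1:agents:polynomial}) that the greedy poorest-first assignment is EF1, with feasibility automatic because each agent receives at most one job per slot. So the missing lemma cannot be treated as a black box; filling it in essentially reproduces the paper's proof, at which point the local search is superfluous.

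The second gap is the polynomial running time. A strictly decreasing potential $\Phi=\sum_i u(X_i)^2$ only bounds the number of transfers by the number of distinct attainable potential values, which is exponential in general (with arbitrary rational weights each step may decrease $\Phi$ by an arbitrarily small amount); this gives finiteness, or pseudo-polynomiality for integer weights, not polynomiality. Likewise, "leximin-optimal among max-total-value schedules exists by finiteness" proves existence of an EF1 and PO schedule (granting your exchange lemma) but not that it can be \emph{computed} in polynomial time, which is what the theorem asserts. Finally, a smaller but real issue: your assignment-problem formulation matches jobs to $(\text{machine},\text{slot})$ pairs, but deadlines are encoded in binary, so the slot set can be exponentially large in the input size; the paper has to handle this with the condensed-instance construction (\cref{def:condensed_time_slots}, \cref{lem:condensed_time_slots}), restricting attention to $O(n^2)$ slots near deadlines, and your proposal would need the same reduction before any matching computation is polynomial. (Also, minor: transferring a job with $v_j=0$ gives $\Delta\Phi=0$, so "any single job" should be "any job of positive value.")
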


\begin{algorithm}[htb]
\caption{\hspace{-2pt}{\bf .} $m$-Matching + Inner-Greedy}
\label{alg:weight:p_j=1:EF1:polynomial}
\begin{algorithmic}[1]
\REQUIRE  An FISP instance $\cI = (J, A, \fu_{A})$, where all jobs have unit processing time and all agents have identical valuation.
\ENSURE EF1 and PO schedule $\fX=(X_1, \cdots,X_m)$.


\STATE Construct graph $G(J \cup T,E)$, and compute a maximum weighted $m$-matching $\cM^*$. 
\STATE Define $J_t=\set{j \in J \mid (j,t) \in  \cM^*}$ for each $t \in T$.

\STATE Set $X_1=X_2=\cdots=X_m=\emptyset$.

\FOR{$p = 1$ to $|T|$}
\label{line:weighted:p:4}
    \IF{$J_p\ne\emptyset$}
    \STATE Sort $A$ in non-decreasing order of $u_i(X_i)$'s, and $J_p$ in non-increasing order of $u(j_k)$'s.
    
    \label{line:weighted:p:5}
    \FOR{$i=1$ to $|J_p|$}
    \label{line:weighted:p:6}
        \STATE Set $X_i=X_i\cup\set{j_i}$.
    \ENDFOR
    \label{line:weighted:p:10}
    \ENDIF
\ENDFOR
\label{line:weighted:p:11}
\end{algorithmic}
\end{algorithm}

Before give the proof of \cref{thm:EF1+PO:weighted:p}, we first give the definition of the \textit{condensed instance} which is used to improve the running time.

Given an arbitrary instance of \FISP/ with $\langle$identical, unit$\rangle$, denoted by $I$, for each job $j_i\in J$, let $\cT_i$ be the set of time slots included in the job interval of $j_i$, i.e., $\cT_i=\set{r_i,r_i+1,\cdots,d_i}$. Let $\cT$ be the set of \textit{condensed time slots} (\cref{def:condensed_time_slots}).
We construct another instance, denoted by $I'$, by condensing $T_i$, i.e., for every job in $J$, $T_i=T_i\cap T$.
We show that these two instances are equivalent (\cref{lem:condensed_time_slots}).
Let $J'$ be the set of jobs in the instance $I'$.

\begin{definition}
Let $T$ be the \textit{condensed time slots} set.
$$
T=\bigcup_{1\leqslant l \leqslant n} \set{d_l-n+1,d_l-n+2,\cdots,d_l}
$$
where $d_l$ is the deadline of job $j_l$.
\label{def:condensed_time_slots}
\end{definition}

To prove \cref{lem:condensed_instance}, it suffices to prove the following lemma.

\begin{lemma}
Let $\hat{J}\subseteq J$ be an arbitrary subset of jobs in the instance $I$. Let $\hat{J}'\subseteq J'$ be the corresponding jobs in the instance $I'$. Then, $\hat{J}$ is a feasible job set if and only if $\hat{J}'$ is a feasible job set.
\label{lem:condensed_time_slots}
\end{lemma}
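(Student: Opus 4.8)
The plan is to characterize feasibility of a unit-job set as the existence of an injective assignment of each job to a single time slot inside its available-slot set---namely $T_j=[r_j,d_j]$ in the instance $I$ and $T_j'=T_j\cap T$ in the condensed instance $I'$---and then to pass between the two descriptions by an exchange argument. The reverse implication is immediate: since $T_j'\subseteq T_j$ for every job, any assignment witnessing that $\hat J'$ is feasible in $I'$ is already a valid injective assignment for $\hat J$ in $I$, so $\hat J'$ feasible implies $\hat J$ feasible. All the work is in the other direction.

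For the forward (hard) direction I would start from an arbitrary feasible assignment $\sigma$ of $\hat J$ in $I$ (so $\sigma$ is injective and $\sigma(j)\in[r_j,d_j]$ for all $j$) and transform it so that every job lands in a slot of $T$. The key observation is that, because $|\hat J|\le|J|=n$, each job $j$ can be confined to the window $W_j=\set{d_j-n+1,\dots,d_j}$ consisting of the $n$ slots ending at its deadline, and by the very definition of $T$ we have $W_j\subseteq T$; hence any slot in $W_j\cap[r_j,d_j]$ automatically lies in $T_j'$. Call a job \emph{good} under $\sigma$ if $\sigma(j)\ge d_j-n+1$ and \emph{bad} otherwise. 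A job with $r_j>d_j-n+1$ has its whole interval inside $W_j$, so it is good for free, and in that case $T_j'=T_j$ so condensation imposes no restriction on it at all.

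The engine of the argument is a single repair step. If some job $j$ is bad, then $\sigma(j)\le d_j-n$ and in particular $W_j\subseteq[r_j,d_j]$; the window $W_j$ has exactly $n$ slots, while the other at most $n-1$ jobs occupy at most $n-1$ of them, so some slot of $W_j$ is free. I move $j$ to that free slot. This keeps $\sigma$ injective and valid, turns $j$ good, and leaves every previously good job untouched, so the number of good jobs strictly increases. Iterating the repair step at most $n$ times yields a feasible assignment with $\sigma(j)\in W_j\cap[r_j,d_j]\subseteq T_j'$ for every $j$, which is exactly a feasible assignment of $\hat J'$ in $I'$.

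The main obstacle I anticipate is that the condensed available sets $T_j'$ need not be contiguous intervals, so the clean interval-scheduling feasibility test (counting, for each window, the jobs fully contained in it) does not transfer directly from $I$ to $I'$, and one is forced to reason about a general injective assignment rather than an interval condition. The whole point of the exchange argument is precisely to show, via the slot-counting in the repair step, that condensing the availability down to $T$ never deletes a slot that some feasible schedule genuinely needs. A secondary point to handle carefully is the boundary case $r_j>d_j-n+1$, where $W_j$ overruns the release time; there one simply observes that the entire interval already lies in $T$, so such jobs need no repair and the counting in the repair step is only ever applied to jobs whose window sits inside their interval.
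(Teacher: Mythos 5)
Your proof is correct, but it takes a genuinely different route from the paper's. The paper works directly with a schedule of $\hat J$ in $I$, defines \emph{job blocks} (maximal runs of back-to-back jobs), and repeatedly shifts the block containing the first job sitting at a slot outside $T$ to the right, merging blocks as they collide; correctness rests on the observation that a slot outside $T$ is at distance more than $n$ from every later deadline, so there is always room to keep shifting until the job lands in $T$. You instead exploit the unit-processing-time structure to recast feasibility as an injective job-to-slot assignment and then run a per-job repair: a ``bad'' job (one assigned before $d_j-n+1$) necessarily has its full $n$-slot deadline window $W_j=\{d_j-n+1,\dots,d_j\}$ inside $[r_j,d_j]$, and since the other at most $n-1$ jobs cannot fill all $n$ slots of $W_j$, a free slot of $W_j\subseteq T$ exists to receive it; goodness is monotone, so at most $n$ repairs suffice. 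Your argument is shorter and more rigorous than the paper's --- it avoids the informal block-merging case analysis, gives an explicit termination bound, and correctly isolates the boundary case $r_j>d_j-n+1$ --- and the pigeonhole step makes the role of the width-$n$ windows in the definition of $T$ completely transparent. What the paper's shifting argument buys in exchange is that it never needs the matching characterization of feasibility, so it would extend essentially verbatim to jobs with longer processing times, whereas your injective-assignment framing is tied to the unit-job setting (which is all the lemma requires).
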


\begin{proof}
($\Leftarrow$) This direction is straightforward.

($\Rightarrow$) To prove this direction, we define a \textit{job block} as a maximal set of consecutive jobs such that they are scheduled after each other. Since $\hat{J}$ is a feasible job set, there is a feasible schedule for all jobs in $\hat{J}$. We start from the first job $j_l\in\hat{J}$ which is scheduled in time slot $t_l$ such that $t_l\notin T$, we show that we can always shift this job block to the right. Time slot $t_l\notin T$ implies that $t_r$ is in a distance more that $n$ from any elements in deadline set $D=\bigcup_{j_q\in J}\set{d_q}$. We can shift the job block $j_l$ to the right. An example is shown in \cref{fig:condensed_time_slot}. We show that we can always shift $j_l$ to the right until:

\begin{itemize}
    \item Either job $j_l$ is scheduled in a time slot in $T$.
    \item Or the job block starting from $j_l$ reaches another scheduled job and form a bigger job block.
\end{itemize}

\begin{figure}[htbp]
    \centering
    \includegraphics[width=7.5cm]{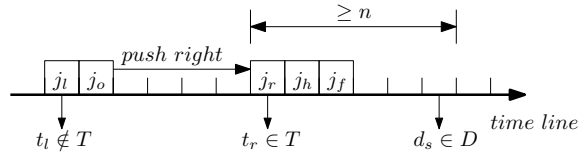}
    \caption{Illustration of \cref{lem:condensed_time_slots}. Initially, job $j_l$ is scheduled in the time slot $t_r$ which is not in $T$. The job block starting from $j_l$ only includes two jobs: $j_l$ and $j_o$. We can always shift job $j_l$ to the right until $j_l$ is scheduled in a time slot in $T$. Or the leftmost time slot in $T$ is occupied by a certain job $j_r$. The job block starting from $j_r$ contains three jobs: $j_r$, $j_h$ and $j_f$. We can still shift the merged job block, which contains $j_l,j_o,j_r,j_h,j_f$, to the right, since the distance between time slot $t_r$ and any deadline in $D$ exceeds $n$.}
    \label{fig:condensed_time_slot}
\end{figure}

If job $j_l$ is scheduled in a time slot in $T$, then the lemma follows. If the job block starting from $j_l$ reaches another scheduled job and form a bigger job block, we keep shifting the bigger job block to the right unit $j_l$ is scheduled in a time slot $t_l'\in T$. Note that no job would miss its deadline, since the distance between $t_l'$ and any deadline in $D$ exceeds $n$ which implies that there is enough time slots to schedule all jobs in the current job block.
\end{proof}

\begin{lemma}
\label{lem:condensed_instance}
For an arbitrary instance of \FISP/ with $\langle$identical, unit$\rangle$,
if there is a polynomial-time algorithm that returns an EF1 and PO schedule for all condensed instances, there also exists one for non-condensed instances.
\end{lemma}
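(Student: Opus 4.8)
The plan is to reduce a non-condensed instance $I$ to its condensed counterpart $I'$, apply the assumed polynomial-time algorithm to $I'$, and transfer the resulting schedule back to $I$ through the natural identification of $J'$ with $J$. Building $T$ and replacing each $T_i$ by $T_i\cap T$ is clearly polynomial in $n$, so I would first construct $I'$, then run the given algorithm to obtain an EF1 and PO schedule $\fX'=(X_1',\cdots,X_m')$, and finally define $\fX=(X_1,\cdots,X_m)$ for $I$ by placing each job in the same bundle as its image in $I'$. Since the correspondence between $J$ and $J'$ is a bijection, $\fX$ is a well-defined ordered partial partition of $J$.

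The key observation that makes the transfer valid is that $I$ and $I'$ induce the same utility function at the level of job subsets, namely $u_i(S)=u_i(S')$ for every subset $S\subseteq J$ and its image $S'\subseteq J'$. This is because condensing only restricts time windows and never changes a job's value, while \cref{lem:condensed_time_slots} guarantees that $\hat{J}\subseteq J$ is feasible in $I$ exactly when $\hat{J}'\subseteq J'$ is feasible in $I'$. Hence the feasible subsets of $S$ are in value-preserving bijection with the feasible subsets of $S'$, and taking the maximum in the definition of $u_i(\cdot)$ yields the stated identity for every (possibly infeasible) $S$. I would prove this identity first, since both EF1 and PO are defined purely through the bundle values $u_i(\cdot)$.

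Granting the identity, both properties transfer directly. For EF1, fix agents $a_i,a_k$ and let $j'\in X_k'$ be the witness in $I'$; its preimage $j\in X_k$ satisfies $u_i(X_i)=u_i(X_i')\geqslant u_i(X_k'\setminus\{j'\})=u_i(X_k\setminus\{j\})$, so the EF1 inequality holds in $I$. For PO, I would argue by contradiction: any schedule $\fY$ that Pareto-dominates $\fX$ in $I$ maps to an ordered partial partition $\fY'$ of $J'$ with $u_i(Y_i')=u_i(Y_i)\geqslant u_i(X_i)=u_i(X_i')$ for all $a_i$ and strict for some, contradicting that $\fX'$ is PO in $I'$.

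The only step demanding genuine care is the subset-utility identity, and it rests entirely on \cref{lem:condensed_time_slots}; everything downstream is a verbatim translation because the utility functions coincide and the job bijection commutes with removing or reassigning individual jobs. As the total cost is the construction of $I'$ plus a single call to the assumed algorithm, the reduction runs in polynomial time.
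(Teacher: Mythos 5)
Your proposal is correct and takes essentially the same route as the paper: both rest entirely on the feasibility equivalence of \cref{lem:condensed_time_slots}, from which the reduction (construct the condensed instance, run the assumed algorithm, and pull the schedule back through the job bijection) follows immediately. The paper leaves the subset-utility identity and the EF1/PO transfer implicit, while you spell them out explicitly, but the underlying argument is the same.
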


\begin{figure}[htbp]
    \centering
    \includegraphics[height=4.5cm]{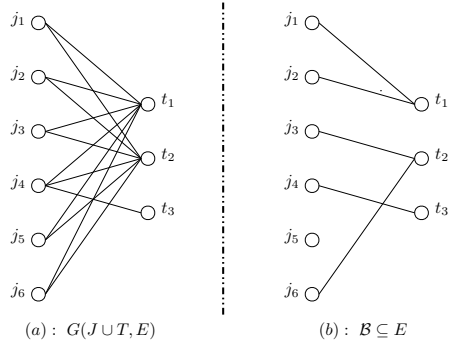}
    \caption{An example of the bipartite graph $G(J \cup T,E)$ and a corresponding maximum $m$-matching, where $J=\set{j_1,\cdots,j_6},A=\set{a_1,a_2},T=\set{t_1,t_2,t_3}$. Assume all jobs have identical utility to the agents. The job intervals are $T_1=T_2=T_3=T_5=T_6=\set{t_1,t_2}$, and $T_4=\set{t_1,t_2,t_3}$. A possible maximum weighted $m$-matching $\cM^*$ is shown one the right, according to which the jobs are partitioned as $J_1=\set{j_1,j_2},J_2=\set{j_3,j_6},J_3=\set{j_4}$. Then, $X_0=J\setminus(J_1 \cup J_2 \cup J_3)=\set{j_5}$.}
    \label{fig:b_matching}
\end{figure}

Arbitrarily fix a maximum weighted $m$-matching $\cM^*$.
For any $t\in T$, let $J_t$ be the set of jobs which are matched with time slot $t$, i.e., 
$
J_t=\set{j \in J \mid (j,t) \in  \cM^*}.
$
Note that the $J_t$'s are mutually disjoint. 
Therefore we can refer $t$ as the {\em type} of jobs in $J_t$. 
An example can be found in \cref{fig:b_matching} (a).
The key idea of \cref{alg:weight:p_j=1:EF1:polynomial} is to first use the above procedure to find the job set with the maximum total weight that can be processed and classify jobs into different types, and then greedily assign each type of jobs to the agents. 
The jobs that are not matched by $\cM^*$, i.e., $J \setminus (\bigcup_{t\in T} J_t)$, are kept unallocated and will be assigned to charity.

Let $\fX=(X_1,\cdots,X_m)$ be the schedule returned by \cref{alg:weight:p_j=1:EF1:polynomial} and let $X_0=J\setminus\bigcup_{i\in[m]}X_i$.
Note that although the agents have identical utilities, we sometimes use $u_i$ for agent $a_i \in A$ to make the comparison clear.


\begin{lemma}
\label{lem:weight:p_j=1:EF1:agents:polynomial}
For any $a_i,a_k\in A$, 
$
u_i(X_i)\geqslant u_i(X_k\setminus\set{j_l}) \text{ for some $j_l\in X_k$.}
$
\end{lemma}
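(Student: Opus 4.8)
The plan is to exploit that the valuations are identical and additive on the bundles the algorithm actually produces, reduce the claim to a single scalar inequality, and then prove that this inequality is preserved round by round in the outer loop of \cref{alg:weight:p_j=1:EF1:polynomial}.

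First I would record that every returned bundle is feasible, so $u$ is additive on it. In each iteration of the outer loop, an agent receives at most one job of type $p$, and distinct types are distinct time slots; hence $X_i$ contains at most one job per slot and can be scheduled by placing each job in the slot it was matched to in $\cM^*$. Consequently $u(X_i)=\sum_{j\in X_i}u(j)$ for every $a_i$. For identical additive valuations, writing $\mu_k=\max_{j\in X_k}u(j)$ (and $\mu_k=0$ when $X_k=\emptyset$), the desired condition ``$u(X_i)\ge u(X_k\setminus\{j_l\})$ for some $j_l\in X_k$'' is equivalent to $u(X_i)\ge u(X_k)-\mu_k$. Thus the whole lemma is equivalent to the single inequality
\[
\max_{a_k\in A}\big(u(X_k)-\mu_k\big)\ \le\ \min_{a_i\in A}u(X_i),
\]
and it suffices to establish this for the final allocation.

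I would prove it as an invariant maintained after every iteration of the outer loop, by induction. The base case is the empty allocation, where both sides are $0$. For the inductive step, consider the configuration at the start of an iteration and sort the agents so that $u(X_1)\le\cdots\le u(X_m)$ with corresponding maxima $\mu_1,\dots,\mu_m$; let the jobs of type $p$ be $h_1\ge\cdots\ge h_q$ with $q=|J_p|\le m$. By the sorting in \cref{line:weighted:p:5} and the inner loop, agent $b$ receives $h_b$ for $b\le q$ and nothing for $b>q$, so the poorest agent gets the largest job. The hypothesis gives $u(X_b)-\mu_b\le u(X_1)$ for all $b$, and every new bundle value is at least $u(X_1)$; hence agents receiving nothing immediately satisfy the invariant against the new minimum.

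The crux is re-establishing the invariant for an agent $b\le q$ that received $h_b$, whose new deficiency is $u(X_b)+h_b-\max(\mu_b,h_b)$. Here the naive ``remove $a_k$'s last job'' argument breaks down: a poorer agent $a_i$ may trail $a_k$ by strictly more than the job $a_k$ just received. I would instead split on whether $h_b\ge\mu_b$ (the new deficiency collapses to the old value $u(X_b)$) or $h_b<\mu_b$ (it equals $u(X_b)-\mu_b+h_b$), and in each case bound the new deficiency by each new bundle value, namely $u(X_a)+h_a$ for $a\le q$ and $u(X_a)$ for $a>q$. The two structural facts that make this go through are the hypothesis $u(X_b)-\mu_b\le u(X_a)$ (the gap between any two old bundles is at most the larger one's maximum job) and the assignment rule $h_a\ge h_b$ whenever $a<b$ (a poorer agent receives a larger job). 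I expect this batch step---showing that a \emph{simultaneous} assignment to several agents preserves the max-deficiency invariant---to be the main obstacle, and tracking the per-bundle maximum $\mu_k$ together with the ``largest job to poorest agent'' rule to be precisely what resolves it.
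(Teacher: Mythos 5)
Your proof is correct and follows essentially the same route as the paper's: an induction over the rounds of the algorithm, driven by the two facts that poorer agents pick earlier (hence receive weakly larger jobs of the current type) and that every bundle remains feasible, so the identical valuation is additive on it. The only difference is bookkeeping: the paper maintains the pairwise EF1 statement with a witness job carried through the induction, whereas you track each bundle's maximum job $\mu_k$ and maintain the equivalent scalar invariant $\max_k\left(u(X_k)-\mu_k\right)\le\min_i u(X_i)$; both devices resolve the same subtlety you correctly flag, namely that removing only the newly added job is not enough.
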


\begin{proof}
We prove the lemma by induction.
Let $X_i^{p}$ be the set of jobs assigned to agent $a_i$ after the $p$-th round of \cref{alg:weight:p_j=1:EF1:polynomial}, and $\fX^p = (X_1^{p}, \cdots, X_m^{p})$.  

{\em Base Case.} When $p=1$, each agent gets at most one job as $|J_t|\leqslant m$ for all $t\in T$, and thus $\fX^1$ is EF1. 

{\em Induction Hypothesis.} For any $p>1$, after the $p$-th round of \cref{alg:weight:p_j=1:EF1:polynomial}, suppose  \cref{lem:weight:p_j=1:EF1:agents:polynomial} holds, i.e., for any $a_i,a_k \in A$, there exists a job, denoted by $j_h$, 
in $X_i^p$ such that
$
    u_k(X_k^p)\geqslant u_k(X_i^p\setminus\set{j_h}).
$

Now we consider the $(p+1)$-th round. 
Arbitrarily fix two agents $a_i,a_k\in A$ and without loss of generality assume $u_i(X_i^p)\geqslant u_k(X_k^p)$. 
In the following we prove that after this round, $a_i$ and $a_k$ continue not to envy each other for more than one item. 
Note that, in the $(p+1)$-th round, $a_k$ chooses a job from $J_{p+1}$ before~$a_i$. 

Suppose that $j_{\hatk}$ is assigned to $a_k$ while $j_{\hati}$ is assigned to $a_i$ in the $(p+1)$-th round.
Therefore, we have $X_k^{p+1}=X_k^p\cup\set{j_{\hatk}}$ and $X_i^{p+1}=X_i^p\cup\set{j_{\hati}}$.
Since $|J_{p+1}|\leqslant m$, $\set{j_{\hatk}}$ and $\set{j_{\hati}}$ may be empty, in which case, we assume that $u(j_{\hatk})=u(j_{\hati})=0$.
Since $a_k$ chooses the job before $a_i$ and all jobs in $J_{p+1}$ are sorted in non-increasing order, $u(j_{\hatk})\geqslant u(j_{\hati})$ always holds no matter whether $\set{j_{\hatk}}$ is empty or not. 

Regarding agent $a_i$, as $u_i(X_i^p)\geqslant u_k(X_k^p)$, we have
$$
    u_i(X_i^{p+1}) \geqslant u_i(X_i^{p}) \geqslant u_i(X_k^p)
    = u_i(X_k^{p+1} \setminus\set{j_{\hatk}}).
$$

Regarding agent $a_k$, because $u_k(j_{\hatk})\geqslant u_k(j_{\hati})$ and $u_i(X_i^p)\geqslant u_i(X_k^p\setminus \set{j_h})$ (induction hypothesis),
$$
    u_k(X_k^{p+1}) = u_k(X_k^{p}) + u_k(j_{\hatk})
    \geqslant  u_k(X_i^p\setminus\set{j_h}) + u_k(j_{\hati}) = u_k(X_i^{p+1}\setminus\set{j_h}).
$$
Thus, after the $(p+1)$-th round, $a_i$ and $a_k$ continue not to envy each other for more than one item. 
By induction, \cref{lem:weight:p_j=1:EF1:agents:polynomial} holds.
\end{proof}

\begin{proof}[Proof of \cref{thm:EF1+PO:weighted:p}]
Since schedule $\fX$ returned by \cref{alg:weight:p_j=1:EF1:polynomial} maximizes social welfare $\sum_{a_i \in A} u_i(X_i)$, $\fX$ must be PO.
According to \cref{lem:weight:p_j=1:EF1:agents:polynomial}, $\fX$ is EF1. For time complexity, we have already discussed that computing a maximum $m$-matching can be done in polynomial time. 
Further, as allocating jobs by types only needs to sort jobs or agents, which can also be done in polynomial time, we finished the proof.
\end{proof}

We note that \cref{alg:weight:p_j=1:EF1:polynomial} fails to return an EF1 and PO schedule if the agents' utilities are not identical. 
Actually, the existence of EF1 and PO schedule for this case is left open in \cite{conf/ijcai/BiswasB18,journals/corr/abs-2010-07280,journals/corr/abs-2012-03766} even when the scheduling constraints degenerate to cardinality constraints. 

\begin{remark}
We noted that the proof of \cref{lem:weight:p_j=1:EF1:agents:polynomial} only uses the ranking of jobs' weight.
Therefore, \cref{alg:weight:p_j=1:EF1:polynomial} is able to return to a feasible schedule that is simultaneously EF1 and PO in the setting where agents
value jobs in the same order but the concrete jobs' weight are not known by the algorithm.
\end{remark}

\subsection{Approximate EF1 and PO}

Although EF1 and PO are only compatible in special cases, in this section we show that approximate EF1 and PO can be always satisfied.
In the following, we show that Nash social welfare maximizing schedule satisfies the desired properties.

\begin{theorem}
Given an arbitrary instance of general \FISP/, any schedule that maximizes the Nash social welfare is a 1/4-EF1 and PO schedule.
\label{thm:EF1+PO:general}
\end{theorem}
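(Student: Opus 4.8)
The plan is to prove the two assertions separately; the Pareto part is routine, while the envy bound is where the interval-scheduling constraint genuinely bites. Throughout let $\fX=(X_1,\dots,X_m)$ maximize $\prod_{r} u_r(X_r)$ over feasible schedules. For PO, suppose a feasible $\fX'$ Pareto-dominates $\fX$, i.e. $u_r(X_r')\geqslant u_r(X_r)$ for all $r$ with at least one strict inequality. If the maximal product is positive then every factor is positive, and the domination makes the product strictly larger, contradicting optimality; the degenerate case of product $0$ is handled by the standard convention of first maximizing the set of agents who receive positive utility (and I will note this positivity assumption once, since it is also needed below to divide out unchanged factors).

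\textbf{Setup for EF1.} Fix two agents $a_i,a_k$. Since every schedule is equivalent to a feasible one, I may assume all bundles are feasible; because feasibility of a job set is agent-independent, $X_k$ being feasible makes \emph{both} utilities additive on it, $u_i(X_k)=\sum_{j\in X_k}u_i(j)$ and $u_k(X_k)=\sum_{j\in X_k}u_k(j)$. Let $j^\ast=\argmax_{j\in X_k}u_i(j)$ and $M=u_i(j^\ast)$. The target is $u_i(X_i)\geqslant\frac14\,u_i(X_k\setminus\set{j^\ast})=\frac14(u_i(X_k)-M)$, which is the hardest choice of removed item and hence implies $1/4$-EF1.

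\textbf{The reallocation inequality.} For any $S\subseteq X_k$, consider the feasible schedule that gives $a_i$ the feasible bundle $S$, leaves $a_k$ with $X_k\setminus S$, donates the old $X_i$ to charity, and keeps everyone else fixed. MaxNSW optimality, after dividing by the unchanged positive factors, yields $u_i(X_i)\,u_k(X_k)\geqslant u_i(S)\,u_k(X_k\setminus S)=u_i(S)\bigl(u_k(X_k)-u_k(S)\bigr)$, whence
\[
u_i(X_i)\ \geqslant\ u_i(S)\Bigl(1-\tfrac{u_k(S)}{u_k(X_k)}\Bigr).
\]
The crucial trick is to make $a_i$ \emph{discard} $X_i$ and adopt a feasible subset of $X_k$: then $a_i$'s new utility is \emph{exactly} the additive value $u_i(S)$, which sidesteps the non-additivity of the IS functions (adding $S$ to $X_i$ could gain almost nothing because of scheduling conflicts).

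\textbf{Choosing $S$ by averaging.} It remains to exhibit $S\subseteq X_k$ that is simultaneously cheap for $a_k$ and valuable for $a_i$. Draw $S$ randomly, placing each job of $X_k$ in $S$ independently with probability $1/2$, and write $U=u_k(X_k)$. A short second-moment computation, using $\E[\mathbf 1_j\mathbf 1_l]=\tfrac14$ for $j\neq l$ and $\tfrac12$ for $j=l$, gives
\[
\E\bigl[u_i(S)\,(U-u_k(S))\bigr]=\tfrac14\,u_i(X_k)\,U-\tfrac14\sum_{j\in X_k}u_i(j)\,u_k(j)\ \geqslant\ \tfrac14\,U\bigl(u_i(X_k)-M\bigr),
\]
where the inequality is $\sum_j u_i(j)u_k(j)\leqslant M\sum_j u_k(j)=M\,U$. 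Hence some realization $S$ satisfies $u_i(S)\bigl(1-u_k(S)/U\bigr)\geqslant\tfrac14(u_i(X_k)-M)$, and feeding this into the reallocation inequality gives $u_i(X_i)\geqslant\tfrac14\,u_i(X_k\setminus\set{j^\ast})$, i.e. $1/4$-EF1.

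The step I expect to be the main obstacle is precisely the non-additivity of the IS valuations: the textbook Nash-welfare argument transfers a single least-liked item and relies on $u_i(X_i\cup\set{g})=u_i(X_i)+u_i(g)$, which fails here. My remedy is the discard-and-replace reallocation (removing the subadditive loss entirely) combined with the probabilistic existence of a good subset; I would double-check that the expectation bound degrades exactly by the factor $M$, which is what forces the approximation from $1$ down to $1/4$, and confirm the positivity conventions used to isolate $u_i(X_i)$ and $u_k(X_k)$ in the denominator.
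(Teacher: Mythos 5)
Your proof is correct, and it rests on the same structural insight as the paper's while differing in the key technical lemma. The paper's proof (stated as a contradiction) also hinges on exactly your discard-and-replace move: agent $a_i$ abandons $X_i$ to charity and adopts a feasible subset of $X_k$, which is precisely what defeats the non-additivity of IS functions. Where the two arguments part ways is in how that subset is chosen. The paper sorts $X_k$ by $a_k$'s values, sets aside $a_k$'s top job $j_1$, splits the rest into alternating (odd/even indexed) halves $X_k^1,X_k^2$ so that each half together with $j_1$ retains at least $\frac12 u_k(X_k)$, and gives $a_i$ whichever half it values at least $\frac12 u_i(X_k\setminus\{j_1\})$; the assumed EF1 violation then makes the Nash product strictly increase. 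You instead draw a uniformly random $S\subseteq X_k$ and compute $\E[u_i(S)(u_k(X_k)-u_k(S))]=\frac14 u_i(X_k)u_k(X_k)-\frac14\sum_{j}u_i(j)u_k(j)$, bounding the cross term by $M\,u_k(X_k)$ with $M=\max_{j\in X_k}u_i(j)$ — so the removed item in your accounting is $a_i$'s top job rather than $a_k$'s, which is equally valid since EF1 only asks for some removed item. Your averaging route is arguably cleaner (the factor $4$ emerges transparently from the two factors of $\frac12$ in the second moment), while the paper's split is constructive and elementary. Two small points: calling $j^\ast$ the ``hardest'' choice is backwards — it minimizes $u_i(X_k\setminus\{j\})$, so it is the choice the EF1 definition most easily tolerates, and proving the bound for it is exactly (necessary and) sufficient; and the positivity conventions you flag (dividing out unchanged factors and dividing by $u_k(X_k)$) are genuinely needed, but the paper's proof silently makes the same assumption, since its strict product comparison also degenerates when some factor is zero.
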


The proof of \cref{thm:EF1+PO:general} is essentially in the same spirit with the corresponding one in \cite{journals/corr/abs-2012-03766}, and we include the proof for completeness in \cref{app:EF1vsPO}.
Although we show that the proof in \cref{thm:EF1+PO:general} is tight in the appendix, when the jobs are unit we can improve the approximation ratio to 1/2.

\begin{theorem}
Given an arbitrary instance of \FISP/ with $\langle$non-identical, unit$\rangle$, a MaxNSW schedule is a 1/2-EF1 and PO schedule.
\label{thm:EF1+PO:general:unit}
\end{theorem}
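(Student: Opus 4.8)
The plan is to get Pareto optimality for free and to spend all the effort on sharpening the envy guarantee from $1/4$ (\cref{thm:EF1+PO:general}) to $1/2$. Since $\langle$non-identical, unit$\rangle$ is a special case of general \FISP/, any MaxNSW schedule $\fX=(X_1,\dots,X_m)$ is already PO by \cref{thm:EF1+PO:general}, so it only remains to prove the $1/2$-EF1 bound. Throughout I may assume each $X_i$ is feasible, so that $u_i(X_i)=\sum_{j\in X_i}u_i(j)$, and I will compare $\fX$ against the feasible schedules obtained by moving a single job from one agent to another.

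The structural fact that separates the unit case from the general one is that, when every job has unit processing time, a job set is feasible if and only if its jobs can be matched to distinct time slots inside their intervals; that is, the feasible sets are exactly the independent sets of a \emph{transversal matroid} $\cM$ common to all agents. Consequently each IS function $u_i(S)=\max\{\sum_{j\in T}u_i(j):T\subseteq S\text{ feasible}\}$ is the weighted rank function of $\cM$, hence monotone and submodular. (This is false for flexible jobs, which is precisely why the general bound degrades to $1/4$.) I would isolate this as a short lemma and prove submodularity via the layer-cake identity $u_i(S)=\int_0^\infty r_{\cM}\bigl(\{j\in S:u_i(j)\ge t\}\bigr)\,dt$ together with submodularity of the unweighted matroid rank $r_{\cM}$.

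With submodularity available, fix agents $a_i,a_k$, write $V=u_i(X_i)$ and $U=u_k(X_k)$, and let $j_0=\argmax_{g\in X_k}u_k(g)$ be the job removed for EF1. For each $g\in X_k\setminus\{j_0\}$, transferring $g$ from $a_k$ to $a_i$ yields a feasible schedule in which $a_i$ has utility $u_i(X_i\cup\{g\})$ and $a_k$ has utility $U-u_k(g)$; maximality of NSW gives $V\cdot U\ge u_i(X_i\cup\{g\})\,(U-u_k(g))$, which rearranges to $\nu_i(g)\le \frac{V\,u_k(g)}{U-u_k(g)}$, where $\nu_i(g)=u_i(X_i\cup\{g\})-V\ge0$ is the marginal of $g$ over $X_i$. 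Summing and telescoping the marginals by submodularity yields $u_i\bigl(X_i\cup(X_k\setminus\{j_0\})\bigr)-V\le\sum_{g\ne j_0}\nu_i(g)\le V\sum_{g\ne j_0}\frac{u_k(g)}{U-u_k(g)}$. The last step is the elementary inequality $\sum_{g\ne j_0}\frac{u_k(g)}{U-u_k(g)}\le 1$: since $u_k(g)\le u_k(j_0)$ for $g\ne j_0$, we have $U-u_k(g)\ge U-u_k(j_0)=\sum_{g\ne j_0}u_k(g)=:T$, so the sum is at most $T/T=1$. Hence $u_i(X_k\setminus\{j_0\})\le u_i\bigl(X_i\cup(X_k\setminus\{j_0\})\bigr)\le 2V$, i.e.\ $u_i(X_i)\ge\tfrac12\,u_i(X_k\setminus\{j_0\})$, which is $1/2$-EF1.

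The main obstacle is not any single computation but pinning the constant at exactly $1/2$: a crude bound on $\sum_{g}\nu_i(g)$ loses an extra factor of $2$ and yields only $1/3$, so the two decisive choices are (i) exploiting the matroid/submodular structure of unit jobs, which makes each marginal drop at most one job, and (ii) removing the job $j_0$ that $a_k$ values most, so that the coefficients $u_k(g)/(U-u_k(g))$ telescope to at most $1$. I would finally dispatch the degenerate case $U=0$ (and MaxNSW schedules of value $0$) by the standard convention of maximizing the product over the largest set of agents that can simultaneously receive positive utility, assigning empty bundles to the rest, so that we may assume $U>0$; and I would remark that the factor $1/2$ is tight via the accompanying lower-bound instance.
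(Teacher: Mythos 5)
Your proposal is correct, but it proves the theorem by a genuinely different route than the paper. The paper argues by contradiction using the unit-slot structure directly: it fixes maximum matchings of $X_i$ and $X_k$ to time slots, pairs up the two agents' jobs slot-by-slot (padding with dummy jobs), assigns each pair $(j^i,j^k)$ the ratio $\bigl(u_i(j^k)-u_i(j^i)\bigr)/\bigl(u_k(j^k)-u_k(j^i)\bigr)$, and shows that \emph{swapping} the maximum-ratio pair in $\cP_+$ strictly increases the Nash product. Your argument never swaps anything back to $a_k$: you only consider one-directional transfers of a single job $g\in X_k\setminus\{j_0\}$ to $a_i$, bound each marginal $\nu_i(g)$ by NSW-optimality, and then glue these local bounds together via the telescoping inequality for submodular functions, which is exactly where your structural lemma (unit-job feasibility is a transversal matroid, so each $u_i$ is a weighted matroid rank function) earns its keep; the choice of $j_0=\argmax_{g\in X_k}u_k(g)$ then makes the coefficients $u_k(g)/(U-u_k(g))$ sum to at most $1$. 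What the paper's proof buys is self-containedness -- no matroid or submodularity machinery, just an explicit NSW-improving swap -- at the cost of the dummy-pair bookkeeping and the $0/\infty$ ratio case analysis. What yours buys is generality and cleanliness: it works verbatim whenever all agents' valuations are weighted rank functions of a single common matroid (unit-job FISP being one instance), and it isolates precisely why the bound degrades to $1/4$ for flexible jobs (loss of submodularity). Two small points: (i) citing \cref{thm:EF1+PO:general} for PO is unnecessary -- MaxNSW implies PO directly, which is all the paper invokes; (ii) your explicit convention for zero-value bundles (maximize first the set of agents with positive utility) is needed to make the cancellation $\prod_{r\neq i,k}u_r(X_r)>0$ legitimate, and is in fact \emph{more} careful than the paper, whose stated definition of MaxNSW would make the theorem false in degenerate instances where every feasible schedule has Nash product zero; keep that remark in your write-up.
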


\begin{proof}
We show that a feasible schedule $\fX=(X_1,\cdots,X_m)$ that maximizes Nash social welfare is simultaneously 1/2-EF1 and PO.
Since any MaxNSW schedule must be a PO schedule, we only prove that $\fX$ is a 1/2-EF1 schedule
Hence, we only show that $\fX$ is an 1/2-EF1 schedule, i.e., $\forall i,k\in[m],u_i(X_i)\geqslant \frac{1}{2}\cdot u_i(X_k\setminus\set{j}),\exists j\in X_k$.

We prove by contradiction and assume that there exists $i,k\in[m]$ such that $u_i(X_i)<\frac{1}{2}\cdot u_i(X_k\setminus\set{j}),\forall j\in X_k$.
Then, we have 
\begin{align}\label{equ:EF1+PO:unit:key1}
    u_i(X_i)+ u_i(j) < u_i(X_k) - u_i(X_i),\forall j\in X_k.
\end{align}
Since $X_i,X_k$ are feasible job set, there is a maximum weighted matching in $G(X_i\cup T,E_i),G(X_k\cup T,E_k)$ with size $|X_i|,|X_k|$, respectively.
Let $M_i,M_k$ be the maximum weighted matching in $G(X_i\cup T,E_i),G(X_k\cup T,E_k)$, respectively.
An example can be found in \cref{fig:matrix_12EF1_example}.

\begin{figure}[htbp]
    \centering
    \includegraphics[height=4cm]{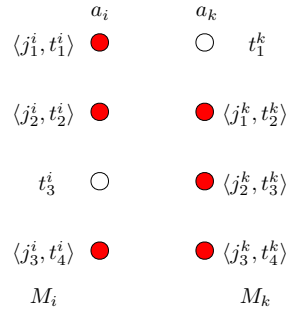}
    \caption{Illustration for $M_i,M_k$. In the above example, we have $X_i=\set{j_1^i,j_2^i,j_3^i}$, $X_k=\set{j_1^k,j_2^k,j_3^k}$ and $T=\set{t_1,t_2,t_3,t_4}$. We have matching $M_i=\set{\langle j_1^i,t_1^i \rangle ,\langle j_2^i,t_2^i \rangle ,\langle j_3^i,t_4^i \rangle }$ and $M_k=\set{\langle j_1^k,t_2^k \rangle ,\langle j_2^k,t_3^k \rangle ,\langle j_3^k,t_4^k \rangle }$. Moreover, we have $M_i(J)=X_i,M_i(T)=\set{t_1,t_2,t_4}$ and $M_k(J)=X_k,M_k(T)=\set{t_2,t_3,t_4}$.}
    \label{fig:matrix_12EF1_example}
\end{figure}

For every time slot $t_l\in M_i(T)\cup M_k(T)$, we find the pair $\langle j^i,t_l^i \rangle \in M_i,\langle j^k,t_l^k \rangle \in M_k$.
Note that there may exist some time slot $t_l$ such that $t_l$ is only matched in $M_i$ or $M_k$, e.g., time slot $t_1,t_3$ in the example shown in \cref{fig:matrix_12EF1_example}.
In this case, we add a dummy pair to $M_k$ or $M_i$, e.g., in the example shown in \cref{fig:matrix_12EF1_example}, $M_i=M_i\cup\set{\langle j_o,t_3^i \rangle },M_k=M_k\cup\set{\langle j_o,t_1^k \rangle }$ and let $u_i(j_o)=u_k(j_o)=0,\forall i\in[m]$.
For every time slot $t_l\in M_i(T)\cup M_k(T)$, we find the pair $\langle j^i,t_l^i \rangle \in M_i,(j^k,t_l^k)\in M_k$ and define the big pair $[\langle j^i,t_l^i \rangle ,\langle j^k,t_l^k \rangle ]$ as $(j^i,j^k)$ for convenience.
For each pair $(j^i,j^k)$, we define $|(j^i,j^k)|$ as its value, where 
$$
|(j^i,j^k)|=\frac{u_i(j^k)-u_i(j^i)}{u_k(j^k)-u_k(j^i)}.
$$
Note that there may exist two pairs: $\langle j^i,t_l^i \rangle \in M_i,\langle j^k,t_l^k \rangle\in M_k$ such that $u_i(j^k)-u_i(j^i)=0$ and $u_k(j^k)-u_k(j^i)=0$.
In this case, we have
\begin{equation*}
|(j^i,j^k)|=
\begin{cases}
0, \text{ if } & u_i(j^k)-u_i(j^i)=0,  u_k(j^k)-u_k(j^i)\ne 0; \\
\infty, \text{ if } & u_i(j^k)-u_i(j^i)\ne 0, u_k(j^k)-u_k(j^i)=0.
\end{cases} 
\end{equation*}

Let $\cP_{+},\cP_{-}$ be the set of all $(j^i,j^k)$ such that $u_i(j^k)-u_i(j^i)>0$ and $u_i(j^k)-u_i(j^i)\leqslant 0$, respectively.
We consider an arbitrary pair $(j^i_{+},j^k_{+})$ in $\cP_{+}$, i.e., $u_i(j^k_{+})-u_i(j^i_{+})>0$.
Note that $u_k(j^k_{+})-u_k(j^i_{+0})$ holds; otherwise, we can construct a new feasible schedule by swapping job $j^i_{+}$ and $j^k_{+}$ will have larger Nash Social Welfare.
This would imply that $\fX$ does not maximize the Nash Social Welfare.
Let $X_i^{+},X_k^{+}$ be the set of jobs in $X_i,X_k$ that are covered by some pair in $\cP_{+}$, respectively, i.e., $X_i^{+}=\set{j^i\in X_i|\exists (j^i,j^k)\in\cP_{+}}$ and $X_k^{+}=\set{j^k\in X_k|\exists (j^i,j^k)\in\cP_{+}}$.
Notations $X_i^{-},X_k^{-}$ are defined in similar ways.
Note that
\begin{equation*}
u_i(X_k)-u_i(X_i)=\bigg(u_i(X_k^+)-u_i(X_i^+)\bigg)+\bigg(u_i(X_k^-)-u_i(X_i^-)\bigg).
\end{equation*}
Since $u_i(X_k^-)-u_i(X_i^-)\leqslant 0$, we have
\begin{equation}\label{equ:EF1+PO:unit:key2}
u_i(X_k)-u_i(X_i)\leqslant u_i(X_k^+)-u_i(X_i^+).
\end{equation}
Then, we have:
\begin{equation}
\frac{u_i(X_k^{+})-u_i(X_i^{+})}{u_k(X_k^{+})} \geqslant \frac{u_i(X_k^{+})-u_i(X_i^{+})}{u_k(X_k)}
\geqslant \frac{u_i(X_k)-u_i(X_i)}{u_k(X_k)},
\label{equ:EF1+PO:unit:key3}    
\end{equation}
where the first inequality is due to $u_i(X_k^+)-u_i(X_i^+)>0$ and $u_k(X_k)\geqslant u_k(X_k^+)$, the last inequality is due to \cref{equ:EF1+PO:unit:key2}.
Now, we define $(g^i,g^k)$ as:
$$
(g^i,g^k)=\argmax_{(j^i,j^k)\in\cP_{+}}\bigg\{|(j^i,j^k)|\bigg\}.
$$
Note that $\cP_{+}\ne\emptyset$, i.e., there must exist a pair $(j^i_+,j^k_+)$ such that $u_i(j^k_+)-u_i(j^i_+)>0$ because of $u_i(X_k)>u_i(X_i)$.
Since every pair $(j^i,j^k)$ in $\cP_{+}$ has property $u_i(j^k)-u_i(j^i)>0$ and $u_k(j^k)-u_i(j^i)>0$, we have:
\begin{equation}\label{equ:EF1+PO:unit:key4}
\frac{u_i(g^k)-u_i(g^i)}{u_k(g^k)-u_k(g^i)} \geqslant \frac{u_i(X_k^+)-u_i(X_i^+)}{u_k(X_k^+)-u_k(X_i^+)} \geqslant \frac{u_i(X_k^+)-u_i(X_i^+)}{u_k(X_k^+)},
\end{equation}
where the last inequality is due to $u_i(X_k^+)-u_i(X_i^+)> 0$ and $u_k(X_i^+)\geqslant 0$.
By combining \cref{equ:EF1+PO:unit:key3} and \cref{equ:EF1+PO:unit:key4}, we have
\begin{equation}
\frac{u_i(g^k)-u_i(g^i)}{u_k(g^k)-u_k(g^i)} \geqslant \frac{u_i(X_k)-u_i(X_i)}{u_k(X_k)}> \frac{u_i(X_i)+u_i(g^k)}{u_k(X_k)},
\label{equ:EF1+PO:unit:key5}
\end{equation}
where the last inequality is due to \cref{equ:EF1+PO:unit:key1}.

Since $u_i(g^k)-u_i(g^i)>0$ and $(u_k(g^k)-u_k(g^i)>0$, we have:
\begin{equation}
\bigg(u_i(g^k)-u_i(g^i)\bigg)\cdot u_k(X_k)
>\bigg(u_k(g^k)-u_k(g^i)\bigg)\cdot\bigg(u_i(X_i)+u_i(g^k)\bigg).
\label{equ:EF1+PO:unit:key6}
\end{equation}

Holding \cref{equ:EF1+PO:unit:key6} on our hand, we are ready to prove that $\fX$ does not maximize the Nash social welfare.
Now, we construct another feasible schedule, denoted by $\fX'=(X_1',\cdots,X_m')$.
We construct $\fX'$ by swapping the job $g^i$ with $g^k$, i.e., $X_o'=X_o,\forall o\in[m] \text{ and } o\ne i,k$, $X_i'=X_i\cup\set{g^k}\setminus\set{g^i}$ and $X_k'=X_k\cup\set{g^i}\setminus\set{g^k}$.
Note that all job sets in $\fX'$ except $X_i',X_k'$ are the same as the corresponding job sets in $\fX$.
Observe that if we can show that $u_i(X_i')u_k(X_k')>u_i(X_i)u_k(X_k)$, then it implies that $\fX$ does not maximize the Nash social welfare.
Note that 
\begin{align*}
u_i(X_i')&=u_i(X_i)+u_i(g^k)-u_i(g^i); \\
u_k(X_k')&=u_k(X_k)+u_k(g^i)-u_k(g^k).
\end{align*}

We define $\Gamma$ as follows for convenience:
\begin{equation*}
\Gamma = \bigg(u_i(g^k)-u_i(g^i)\bigg)\cdot u_k(X_k)-\bigg(u_k(g^k)-u_k(g^i)\bigg)\cdot\bigg(u_i(X_i)+u_i(g^k)\bigg),
\end{equation*}
where $\Gamma>0$ because of \cref{equ:EF1+PO:unit:key6}.
Then, we have 
\begin{equation*}
u_i(X_i')u_k(X_k')-u_i(X_i)u_k(X_k)
=\Gamma+\bigg(u_k(g^k)-u_k(g^i)\bigg)\cdot u_i(g^i).
\end{equation*}
Since $u_k(g^k)-u_k(g^i)>0$ and $\Gamma>0$, we have $u_i(X_i')u_k(X_k')-u_i(X_i)u_k(X_k)>0$. Hence $\fX$ does not maximize the Nash social welfare which contradicts our assumption. Therefore, $\forall i,k\in[m],u_i(X_i)\geqslant \frac{1}{2}\cdot (X_k\setminus\set{j}),\exists j\in X_k$.
\end{proof}

In the following, we show that our proof of \cref{thm:EF1+PO:general:unit} is tight.

\begin{lemma}\label{thm:EF1+PO:general:unit:tight}
The schedule which maximizes the Nash social welfare can only guarantee 1/2-EF1 and PO for \FISP/ with $\langle$non-identical,unit$\rangle$.
\end{lemma}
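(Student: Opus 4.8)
The goal is to exhibit a single \FISP/ instance with $\langle$non-identical, unit$\rangle$ valuations for which \emph{every} MaxNSW schedule fails to be $(1/2+\epsilon)$-EF1 for any $\epsilon>0$, thereby showing the factor $1/2$ in \cref{thm:EF1+PO:general:unit} cannot be improved. The plan is to construct a small, explicit instance — ideally with two agents and a handful of unit jobs on a short time horizon — where the feasibility constraints force the NSW-maximizing partition into a configuration in which agent $a_i$'s utility is essentially exactly half of $u_i(X_k\setminus\{j\})$ for the best removable job $j\in X_k$.

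First I would set up the instance. Since the bound involves a ratio of $2$ between $u_i(X_i)$ and $u_i(X_k\setminus\{j\})$, I expect the extremal example to have agent $a_i$ receive a bundle worth (say) one unit while agent $a_k$ holds a bundle that $a_i$ values at $2+u_i(j)$ for every $j$ we might delete. Concretely I would try two agents and jobs whose time windows are arranged so that only a few of them can be scheduled simultaneously (e.g.\ several jobs all competing for the same one or two time slots, mimicking the conflict structure already used in \cref{fig:EF1_PO_noCompatible}). The non-identical valuations would then be tuned so that the product $u_1(X_1)\cdot u_2(X_2)$ is uniquely maximized by the asymmetric allocation, while simultaneously the envy $a_i$ has toward $a_k$ is pushed to the $1/2$ threshold. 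I would parametrize the values by a variable and take a limit, so that the achievable approximation ratio tends to $1/2$ from above but never reaches $1/2+\epsilon$.

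The verification has two parts that I would carry out in order. The first is to confirm that the intended asymmetric schedule $\fX$ is indeed a MaxNSW schedule: because the instance is small and jobs are unit, I can enumerate the finitely many feasible partitions (respecting the interval-scheduling constraint on each agent) and check that $\fX$ attains the maximum product — here I would lean on the fact that any swap of jobs between $a_i$ and $a_k$ either violates feasibility or strictly decreases the NSW, exactly the swap argument already used inside the proof of \cref{thm:EF1+PO:general:unit}. The second part is to compute, for this $\fX$, the quantity $\min_{j\in X_k} u_i(X_k\setminus\{j\})$ and show $u_i(X_i) < (1/2+\epsilon)\,u_i(X_k\setminus\{j\})$ for every removable $j$, which is a direct substitution into the valuation numbers.

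The main obstacle I anticipate is the tension between the two requirements: I need the allocation to be the \emph{unique} (or at least a genuine) NSW maximizer \emph{and} to saturate the $1/2$ bound at the same time. Pushing the envy ratio toward $1/2$ tends to make competing allocations nearly tie in NSW, so I must choose the numerical parameters carefully so that the swap argument still gives a \emph{strict} inequality in the product while the EF1 ratio is forced arbitrarily close to $1/2$. I would handle this by introducing a small perturbation parameter that I can send to zero, tracking both the NSW gap and the approximation ratio as functions of it and verifying they degenerate at the right relative rates; designing the interval windows so that no extra jobs can be scheduled (which would inadvertently raise $a_i$'s own utility and relax the bound) is the delicate step.
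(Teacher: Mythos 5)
Your proposal correctly identifies the generic shape of a tightness argument (exhibit an instance, verify that the asymmetric allocation maximizes NSW via swap arguments, then compute the envy ratio), and in that sense it is aligned with the paper. But as written it contains a genuine gap: the entire mathematical content of this lemma \emph{is} the explicit instance, and you never produce one. Every load-bearing step is left as ``I would try'' / ``I would tune'': no job windows, no valuation matrix, no verification that the intended allocation actually maximizes the product, no computation of $\min_{j\in X_k}u_i(X_k\setminus\set{j})$. A referee cannot check a construction that does not exist, so the proposal does not yet prove the statement.

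Moreover, your guiding intuition --- ``a handful of unit jobs on a short time horizon,'' with the limit taken over a value-perturbation parameter in a \emph{fixed-size} instance --- points in a direction that cannot succeed. With unit jobs, the EF1 slack is the removal of one whole job, so for the ratio $u_i(X_i)/u_i(X_k\setminus\set{j})$ to approach $1/2$ the envied bundle must contain an unbounded number of jobs: if $|X_k|=K$, the NSW swap conditions force a ratio of roughly $K/(2K-1)=1/2+\Theta(1/K)$, bounded away from $1/2$ for fixed $K$ no matter how you tune the values. The paper's proof instead uses a \emph{family} of instances growing with a parameter $n$: two agents, $n$ time slots, and $2n$ unit jobs $J_1\cup J_2$ all with window $[1,n]$, where $a_1$ values $J_1$-jobs at $2$ and $J_2$-jobs at $1$, while $a_2$ values them at $1$ and $0$ respectively. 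Writing $x$ and $y$ for the numbers of $J_1$- and $J_2$-jobs given to $a_1$, the Nash product is $(2x+y)(n-x)$, which is maximized at $x=0$, $y=n$; hence the unique MaxNSW schedule gives $a_1$ all of $J_2$ and $a_2$ all of $J_1$, yielding
\[
\frac{u_1(X_1)}{u_1(X_2\setminus\set{j})}=\frac{n}{2(n-1)}\longrightarrow \frac12 .
\]
So the limit must be taken over instance size, not over a perturbation of values; to repair your proposal you would need to replace the ``small instance plus perturbation'' plan with such a growing family and carry out the (here, one-line) NSW-maximization explicitly.
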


\begin{proof}
To prove \cref{thm:EF1+PO:general:unit:tight}, we give an instance for which a schedule that maximizes the Nash social welfare is an 1/2-EF1 schedule.

We consider the job set $J=\set{j_1,\cdots,j_n,j_{n+1},\cdots,j_{2n}}$ which contains $2n$ jobs. All jobs have the same release time $1$ and deadline $n$. Moreover, all jobs have unit processing time. The agent set $A=\set{a_1,a_2}$ contains two agents. The utilities matrix is as follows:

$$
\begin{array}{c|ccccccc}
  & j_1 & j_2 & \cdots & j_n & j_{n+1} & \cdots & j_{2n} \\
\hline
a_1 & 2 & 2 & \cdots & 2 & 1   & \cdots & 1  \\
a_2 & 1 & 1 & \cdots & 1 & 0   & \cdots & 0  \\
\end{array}
$$
To find the schedule that maximizes the Nash social welfare, we consider an arbitrary schedule $\fX=(X_1,X_2)$ and assume that $X_0=J\setminus(X_1\cup X_2)$. 
We define $J_{1}=\set{j_1,\cdots,j_n}$ and $J_{2}=\set{j_{n+1},\cdots,j_{2n}}$.
Observe that $X_0=\emptyset$ otherwise $\fX$ does not maximize the value of $u_1(X_1)\cdot u_2(X_2)$.
We assume that $x$ jobs in $J_{1}$ are assigned to $a_1$ and $y$ jobs in $J_{2}$ are assigned to $a_2$, where $0\leqslant x,y\leqslant n$. Then, we have 
$$
f(x,y)=u_1(X_1)\cdot u_2(X_2)=(2x+y)\cdot (n-x).
$$
To find the maximum value of $f(x,y)$ under the constraints $0\leqslant x,y\leqslant n$, we compute partial derivative.

$$
\begin{cases}
\frac{\partial f(x,y)}{\partial x}=2n-4x-y=0 \vspace{0.2cm}\\
\frac{\partial f(x,y)}{\partial y}=n-x=0
\end{cases}
$$

The solution to the above two equations is $(n,-2n)$.
Since the point $(n,-2n)$ is not in $\set{(x,y)|0\leqslant x,y\leqslant n}$, the maximum value will be taken at a certain vertex. We can find that the maximum value will be taken at the point  $(x,y)=(0,n)$ by computing the value of $f(0,0),f(0,n),f(n,0),f(n,n)$.

Hence, we found the schedule $\fX=(X_1,X_2)$ maximizes the Nash social welfare, where $X_1=J_{2},X_2=J_{1}$. Then, we have $u_1(X_1)=2n$ and $u_1(X_1\setminus\set{j})=2(n-1),\forall j\in X_1$. Then, we have
$$
\lim_{n\to+\infty}\frac{u_1(X_1)}{u_1(X_1\setminus\set{j})}=\frac{n}{2(n-1)}=\frac{1}{2}.
$$
\end{proof}

\section{EF1 and IO Scheduling}
\label{app:EF1vsIO}

\cref{lem:EF1+PO:noCompatible} shows that PO is very demanding since even if agents have unweighted utilities, EF1 and PO are not compatible. 
Accordingly, in this section, we will consider the weaker efficiency criterion -- Individual Optimality. 
As we will see, although EF1 and IO are still not compatible for weighted utilities, they are when agents have unweighted utilities. 


\subsection{An Impossibility Result}

We first show that EF1 and IO are not compatible even for \FISP/ with $\langle$identical, rigid$\rangle$, i.e., given an arbitrary instance of \FISP/ with $\langle$identical, rigid$\rangle$, there is no algorithm can always find a feasible schedule that is simultaneously EF1 and IO (\cref{lem:EF1+IO:noCompatible}).

\begin{lemma}
EF1 and IO are not compatible even for \FISP/ with $\langle$identical, rigid$\rangle$.
\label{lem:EF1+IO:noCompatible}
\end{lemma}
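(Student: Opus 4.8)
The plan is to prove this impossibility by exhibiting a single \FISP/ instance with $\langle$identical, rigid$\rangle$ valuations that admits no feasible schedule which is simultaneously EF1 and IO; since incompatibility is a statement about all instances, one such witness suffices. I would take $m=2$ agents and four rigid jobs under a common (identical) valuation $u$: three unit jobs $H_1,H_2,H_3$ occupying the single slots $1,2,3$ respectively, each with $u(H_i)=3$, and one ``blocker'' job $W$ occupying the whole interval $[1,3]$ with $u(W)=5$. The only feasibility facts I need are that $H_1,H_2,H_3$ are pairwise compatible (disjoint slots) while $W$ overlaps all three of them; hence any feasible bundle containing $W$ contains no $H_i$, and $u(\set{H_1,H_2,H_3})=9$ whereas $u(\set{W})=5$.

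First I would pin down exactly which schedules are IO, and argue there is essentially only one. If some $H_i\in X_0$, then because only one agent can hold the unique job $W$, there is an agent whose bundle covers no job in slot $i$; adding $H_i$ to that bundle strictly raises its value by $u(H_i)=3>0$, contradicting IO. Hence all of $H_1,H_2,H_3$ are scheduled. Next, if $W\in X_0$, then since the three $H$'s are split between only two agents, some agent receives at most one of them and so has value at most $3<5=u(W)$; adding $W$ to that agent's bundle would raise its value to $5$, again contradicting IO, so $W$ is scheduled too. Since $W$ conflicts with every $H_i$, the agent holding $W$ holds nothing else, forcing the other agent to hold all of $H_1,H_2,H_3$. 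Thus the unique IO schedule, up to relabeling the two agents, is $\big(\set{W},\set{H_1,H_2,H_3}\big)$, with utility profile $(5,9)$ and $X_0=\emptyset$.

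It then remains to check that this forced schedule is not EF1: the agent holding $W$ has value $5$, while deleting any single job from $\set{H_1,H_2,H_3}$ still leaves value $6>5$, so this agent envies the other beyond one item. Combining the two steps shows no schedule is both EF1 and IO, which proves the lemma. The delicate part is the choice of weights, and I expect this to be the main obstacle: I need $u(W)$ to lie strictly between a single $H$-value and a double one, the lower bound $u(W)>u(H_i)$ forcing $W$ (and, via $u(H_i)>0$, each $H_i$) to be scheduled under IO, and the upper bound $u(W)<2\,u(H_i)$ guaranteeing that the forced profile violates EF1. Tuning these inequalities is precisely what rules out every rebalancing escape — spreading the compatible $H_i$'s across the two agents, absorbing an $H_i$ into the $W$-bundle, or dumping a job to charity — each of which is killed by the IO constraint once the weights are set. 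Note that the gap between the two bounds is what makes weights essential, consistent with the paper's positive result that EF1 and IO are compatible in the unweighted case.
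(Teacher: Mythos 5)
Your proof is correct and follows essentially the same approach as the paper's: both construct a witness instance with pairwise-compatible small unit jobs plus a long ``blocker'' job whose weight lies strictly between one and two small-job values, show that IO forces every job to be scheduled (hence forces the unique partition where one agent gets the blocker and another gets all small jobs), and then observe that this forced partition violates EF1. The only differences are cosmetic: you fix $m=2$ with three small jobs of weight $3$ and a blocker of weight $5$, whereas the paper uses a family over all $m\geqslant 2$ with four small jobs of weight $2$ and $m-1$ blockers of weight $6-\epsilon$; a single witness instance suffices for the lemma as stated.
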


\begin{proof}
To prove \cref{lem:EF1+IO:noCompatible}, it suffices to consider the instance in \cref{fig:EF1_IO_noCompatible}, and prove the following two claims.

\begin{claim}
For any IO schedule $\fX=(X_1,\cdots,X_m)$, $X_0 =J\setminus\bigcup_{i\in[m]}X_i= \emptyset$.
\end{claim}

We prove this claim by contradiction.
If $X_0 \cap J_{1} \neq \emptyset$, as $|J_{2}| = m-1$, there will be at least one agent, without loss of generality say $a_1$, for whom $X_1 \cap J_{2} = \emptyset$. 
Note that by the design of the instance, $X_1 \cup (X_0 \cap J_{1})$ is feasible, and thus by allocating $X_0 \cap J_{1}$ to $a_1$, $a_1$'s utility strictly increases. 

If $X_0 \cap J_{2} \neq \emptyset$, as $|J_{2}| = m-1$, there will be at least two agents, without loss of generality say $a_1$ and $a_2$, for whom $X_1 \cap J_{2} = \emptyset$ and $X_2 \cap J_{2} = \emptyset$. 
Furthermore, as $|J_{1}| = 4$ one of them gets at most two jobs in $J_{1}$.
Again without loss of generality assume this is agent $a_1$. 
Accordingly, $u_1(X_1) \leqslant 4$ and by exchanging $X_1$ with one job in $X_0 \cap J_{2}$, $a_1$'s utility strictly increases. 

\begin{claim}
For any EF1 schedule $\fX=(X_1,\cdots,X_m)$, $X_0 =J\setminus\bigcup_{i\in[m]}X_i\ne \emptyset$.
\end{claim}

We note that the only possible and feasible schedule $\fX$ such that $X_0 = \emptyset$ is that some agent, say $a_1$, gets entire $J_{1}$ and every other agent gets one job in $J_{2}$. 
Then to prove this claim, it suffices to prove $\fX$ cannot be EF1.
It is not hard to check that under $\fX$, for any agent $a_i$ with $i\geqslant 2$ and any job $j \in X_1$,
\[
u_i(X_i) = 6-\epsilon < u_i(X_1\setminus\{j\})=6.
\]
That is all $a_i$ envies $a_1$ for more than one item.

Combing the above two claims, we complete the proof of \cref{lem:EF1+IO:noCompatible}.
\end{proof}

\begin{figure}[htb]
    \centering
    \includegraphics[height=3.5cm]{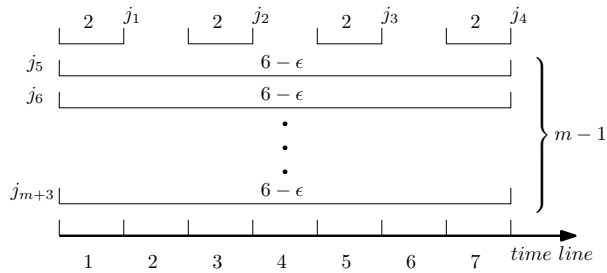}
    \caption{Instance for \cref{lem:EF1+IO:noCompatible}.
    There are $|A| = m$ agents and $|J|=m+3$ jobs with $m \geqslant 2$. Job set $J$ can be partitioned as $J_{1}\cup J_{2}$ with $J_{1}=\set{j_1,j_2,j_3,j_4}$ and $J_{2}=\set{j_5,\cdots,j_{m+3}}$. 
    Each job in $J_{1}$ has unit processing time with weight $2$ 
    and each job in $J_{2}$ has processing time 7 with weight $6-\epsilon$. All jobs are rigid such that $j_i \in J_{1}$ needs to occupy the entire time slot $2i-1$, and $j\in J_{2}$ occupies the entire time period from 1 to 7.}
    \label{fig:EF1_IO_noCompatible}
\end{figure}




\subsection{A Polynomial-time Algorithm for \FISP/ with $\langle$unweighted, rigid$\rangle$}

In the following, we design a polynomial-time algorithm to compute a schedule that is EF1 and IO for any instance of \FISP/ with $\langle$unweighted, rigid$\rangle$. 

\begin{theorem}\label{thm:EF1+IO:card:p=d-r}
Given an arbitrary instance of \FISP/ with $\langle$unweighted, rigid$\rangle$, \cref{alg:card:p_j=d_j-r_j:EF1+IO} returns a feasible schedule that is simultaneously EF1 and IO in polynomial time.
\end{theorem}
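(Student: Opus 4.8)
The plan is to first translate the two target properties into purely combinatorial conditions on bundle cardinalities, exploiting that valuations are unary and jobs are rigid. Since a feasible bundle $X_i$ is a set of pairwise non-overlapping fixed intervals and every job is worth $1$, we have $u_i(X_i)=|X_i|$, and for any nonempty $X_k$ and any $j\in X_k$ the set $X_k\setminus\{j\}$ is still feasible with $u_i(X_k\setminus\{j\})=|X_k|-1$. Hence EF1 is equivalent to the balance condition $|X_k|-|X_i|\le 1$ for all $a_i,a_k\in A$. Likewise, writing $u_i(S)$ as the maximum number of pairwise non-overlapping intervals in $S$ (which for rigid jobs is a maximum independent set of an interval graph, computable greedily by Earliest Deadline First), IO becomes the statement that for every agent the maximum non-overlapping subset of $X_0\cup X_i$ has size exactly $|X_i|$; equivalently, no agent can raise her throughput by reclaiming charity jobs. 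So it suffices to show that \cref{alg:card:p_j=d_j-r_j:EF1+IO} outputs a feasible schedule whose bundle sizes are balanced to within one and each of whose bundles is an optimal single-machine schedule of itself together with the charity.

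Next I would record the structural facts that make the EDF-based assignment tractable. Processing jobs in nondecreasing order of deadline, I would maintain for each agent the finish time of its latest assigned interval and observe that, in this order, a job is compatible with an agent precisely when that finish time lies strictly before the job's release; once an agent's latest finish exceeds a release it stays incompatible with every already-processed job, and a job is sent to charity only when all $m$ agents are simultaneously occupied at its release. These two observations reduce every compatibility test to a single comparison and pin down exactly why a job is discarded.

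For IO I would fix an agent $a_i$ and argue that $X_i$ is a maximum-cardinality single-machine schedule of $X_0\cup X_i$. Since EDF is optimal for single-machine throughput, it is enough to compare $X_i$ with the EDF-greedy solution on $X_0\cup X_i$ and show, by an exchange argument, that each charity job conflicts with the intervals $a_i$ has already committed, so that the throughput of $X_0\cup X_i$ cannot exceed $|X_i|$. The discard rule from the previous paragraph, namely that a charity job was rejected only because $a_i$ (among others) was busy at its release, is the hook that drives this argument. For EF1 I would prove, by induction on the assignment steps, the invariant that the maximum and minimum bundle sizes differ by at most one. The only dangerous step is assigning a job to an agent that is not currently least-loaded; there I would show that every strictly less loaded agent must be busy at the job's release, and then use the EDF ordering of the already-assigned intervals to contradict the existence of a gap of two. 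Finally, sorting the jobs and performing one linear-time scan over the agents per job gives a polynomial running time, completing the proof.

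The step I expect to be the main obstacle is the simultaneous maintenance of balance and individual optimality: proving the cardinality-gap invariant when a job is forced onto a heavily loaded agent while a lighter agent is blocked, and dovetailing this with the exchange argument that certifies IO. I anticipate the balance invariant's inductive step to be the delicate point, since it requires a global argument relating the EDF processing order to the agents' finish times rather than a purely local comparison of loads.
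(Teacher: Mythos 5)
Your proposal is correct and takes essentially the same route as the paper's proof: you reduce EF1 to the cardinality-balance condition and IO to $X_i$ being a maximum compatible subset of $X_0\cup X_i$, enforce the balance via monotonicity of the assigned deadlines/finish times along the EDF processing order (the paper's \cref{lem:card:p_j=d_j-r_j:EF1:agents:key,lem:card:p_j=d_j-r_j:EF1:agents:key2}), and certify IO by a first-difference comparison of $X_i$ against the EDF greedy run on $X_0\cup X_i$ (the paper's \cref{lem:card:p_j=d_j-r_j:EF1:charity}). The global step you flag as the main obstacle---showing that a blocked, lighter-loaded agent forces every subsequent agent to be blocked, so the load gap never reaches two---is exactly what those two monotonicity lemmas establish by induction over the round-robin rounds.
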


\begin{algorithm}[htb]
\caption{\hspace{-2pt}{\bf .} Earliest Deadline First + Round-Robin}
\label{alg:card:p_j=d_j-r_j:EF1+IO}
\begin{algorithmic}[1]
\REQUIRE Agent set $A$ and job set $J$.
\ENSURE EF1 schedule $\fX=(X_1,\cdots,X_m)$
\STATE Sort all jobs by their deadline in non-decreasing order.
\label{line:card:p_j=d_j-r_j:1}
\STATE $X_1=X_2=\cdots=X_m=\emptyset$.
\STATE $i=1,k=1$. // The index.
\FORALL {$j_k\in J$}
\label{line:card:p_j=d_j-r_j:5}
    \IF {$X_i\cup\set{j_k}$ is a feasible job set}
        \STATE $X_i=X_i\cup\set{j_k}$.
        \STATE $J=J\setminus\set{j_k}$.
        \STATE $i=(i+1)\mod m$.
    \ELSE
        \STATE $i=i \mod m$.
    \ENDIF
\ENDFOR
\STATE $X_0=J\setminus\bigcup_{i\in[m]}X_i$.
\label{line:card:p_j=d_j-r_j:11}
\end{algorithmic}
\end{algorithm}


Let $\fX=(X_1,\cdots,X_m)$ be the schedule returned by \cref{alg:card:p_j=d_j-r_j:EF1+IO} and let $X_0 =J\setminus\bigcup_{i\in[m]}X_i$. 
Suppose that all agents receive a job at every round of first $L$ rounds of \cref{alg:card:p_j=d_j-r_j:EF1+IO}, i.e., in the $(L+1)$-th round, $\exists i\in [m]$ such that $a_i$ receives nothing.
Note that $L\leqslant n$, where $n$ is the number of jobs.
Let $j_{i}^l,1\leqslant i \leqslant m,1\leqslant l \leqslant L,$ be the job assigned to agent $a_i$ in the $l$-th round of \cref{alg:card:p_j=d_j-r_j:EF1+IO}.
Let $r_{i}^l,d_{i}^l$ be the release time and deadline of job $j_{i}^l$. Let $X_i^{l}$ be the job set that is assigned to agent $a_i$ after the $l$-th round of \cref{alg:card:p_j=d_j-r_j:EF1+IO}.

\begin{lemma}
$d_1^l\leqslant d_2^l \leqslant \cdots \leqslant d_m^l,\forall l\in[L]$. 
\label{lem:card:p_j=d_j-r_j:EF1:agents:key}
\end{lemma}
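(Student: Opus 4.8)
The plan is to exploit the fact that \cref{alg:card:p_j=d_j-r_j:EF1+IO} first sorts all jobs by non-decreasing deadline (\cref{line:card:p_j=d_j-r_j:1}) and then makes a single left-to-right pass over this sorted list, so the order in which jobs are \emph{successfully assigned} respects the deadline order. First I would record the two structural facts that drive everything. Observe that the round-robin pointer $i$ is advanced only when the current job is actually handed to $a_i$ (the \textbf{if} branch); when a job does not fit, it is simply discarded into $X_0$ and the pointer is left unchanged (the \textbf{else} branch). Consequently, if we enumerate the successful assignments in the order they occur as the $1$st, $2$nd, $\dots$, then (i) the $t$-th successful assignment goes to agent $a_{((t-1)\bmod m)+1}$, i.e. the successful assignments cycle through $a_1,\dots,a_m,a_1,\dots$; and (ii) successive successful assignments occur at strictly increasing positions in the sorted job list, since each job is examined exactly once and the pass moves only forward.

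Next I would translate the notion of ``round'' into this numbering. Because rounds $1,\dots,L$ are by assumption complete (every agent receives a job in each of them), round $l$ consists precisely of the successful assignments numbered $(l-1)m+1,(l-1)m+2,\dots,lm$, and by fact (i) these go to $a_1,a_2,\dots,a_m$ respectively. Hence $j_i^l$ is exactly the job of the $\big((l-1)m+i\big)$-th successful assignment, so as $i$ runs from $1$ to $m$ the jobs $j_1^l,\dots,j_m^l$ are handed out at strictly increasing positions in the deadline-sorted list, by fact (ii). Since that list is sorted by non-decreasing deadline, the deadlines satisfy $d_1^l \leqslant d_2^l \leqslant \cdots \leqslant d_m^l$, which is exactly the claim.

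The argument is essentially a bookkeeping observation rather than a substantive inequality, so there is no genuinely hard step; the one place that needs care is fact (i) --- making sure that a skipped (infeasible) job never advances the pointer, so the cyclic correspondence between the index of a successful assignment and the receiving agent is maintained exactly across round boundaries, and that the modular indexing in \cref{alg:card:p_j=d_j-r_j:EF1+IO} indeed realizes the cycle $a_1,\dots,a_m,a_1,\dots$. Once this is pinned down, the within-round monotonicity of deadlines follows immediately from the sortedness of the list together with the strict increase of the assignment positions.
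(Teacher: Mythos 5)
Your proof is correct under the literal reading of \cref{alg:card:p_j=d_j-r_j:EF1+IO}, and it takes a genuinely different route from the paper's. The paper proves the lemma by induction on the round index $l$, using the rigidity of the jobs: assuming $d_i^{l+1} > d_k^{l+1}$ for some $i<k$, it argues that $a_i$ must have rejected $j_k^{l+1}$, so $X_i^l \cup \{j_k^{l+1}\}$ is infeasible, which for rigid interval jobs forces $r_k^{l+1} \leqslant d_i^l$; the induction hypothesis $d_i^l \leqslant d_k^l$ then transfers the conflict to $a_k$'s own bundle, contradicting the feasibility of $X_k^l \cup \{j_k^{l+1}\}$. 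Your argument is instead purely positional: since the pointer advances only on a successful assignment and the single pass over the deadline-sorted list moves only forward, the $t$-th successful assignment goes to agent $a_{((t-1)\bmod m)+1}$ at a strictly later list position than the $(t-1)$-th, and within a complete round this immediately yields monotone deadlines. Your route is shorter, needs no induction, and uses nothing about rigidity or even about what ``feasible'' means, so it also gives the companion cross-round statement ($d_m^l \leqslant d_1^{l+1}$, \cref{lem:card:p_j=d_j-r_j:EF1:agents:key2}) for free. What the paper's argument buys is robustness to the other natural reading of round-robin, in which a job rejected by the current agent is \emph{not} discarded to charity but stays in the pool and may be taken by a later agent in the same round; under that semantics your fact (ii) --- that successive assignments occupy increasing positions in the sorted list --- is precisely the content of the lemma and cannot be assumed, and it is the rigidity argument that rules out a later agent picking up an earlier-deadline job that an earlier agent skipped (indeed, for flexible jobs this can happen, which is why the paper's \cref{alg:cardinality:p_j:EF1} requires a different analysis). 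Since the pseudocode as written (pointer unchanged in the else-branch, each job examined once, unassigned jobs sent to $X_0$) matches your single-pass reading --- modulo the off-by-one sloppiness in the modular indexing, which you correctly flag --- your proof stands; just note that it is tied to that reading, whereas the paper's inductive argument also covers the pool variant.
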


\begin{proof}
We consider two agents $a_i,a_k$ such that $1\leqslant i < k \leqslant m$. 
Note that $j_{i}^l,j_{k}^l$ must exist since, in the first $L$ rounds, all agents receive a job.
We prove by induction.

\paragraph{Base case and induction hypothesis.} In the base case where $l=1$, it is strainghtforward to see that $d_{i}^1 \leqslant d_{k}^1$, otherwise $j_{k}^1$ will be assigned to agent $a_i$ in the first round of \cref{alg:card:p_j=d_j-r_j:EF1+IO}. 
Now, we have induction hypothesis $d_{i}^l\leqslant d_{k}^l$.

We need to prove $d_{i}^{l+1} \leqslant d_{k}^{l+1}$.
We prove by contradiction and assume that $d_{i}^{l+1}>d_{k}^{l+1}$.
Since agent $a_i$ chooses $j_{i}^{l+1}$ instead of $j_{k}^{l+1}$ in the $(l+1)$-th round, we know that $X_i^{l}\cup\set{j_{k}^{l+1}}$ is a not feasible job set which implies that $r_{k}^{l+1} \leqslant d_{i}^{l}$.
By induction hypothesis, we have $r_{k}^{l+1}\leqslant d_{i}^{l} \leqslant d_{k}^{l} $.
This implies that $X_k^{l}\cup\set{j_{k}^{l+1}}$ is not a feasible job set. This contradicts our assumption.
Thus, $d_{i}^{l+1} \leqslant d_{k}^{l+1}$.
\end{proof}

\begin{lemma}
$d_m^l\leqslant d_1^{l+1},\forall l\in[L-1]$. Moreover, $d_m^{L}\leqslant d_1^{L+1}$ if $j_1^{L+1}$ exists. 
\label{lem:card:p_j=d_j-r_j:EF1:agents:key2}
\end{lemma}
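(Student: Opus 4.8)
The plan is to reduce the claim to a single monotonicity property of \cref{alg:card:p_j=d_j-r_j:EF1+IO}: its main loop scans every job exactly once, in the non-decreasing deadline order fixed by the initial sort, and every assignment is performed at one of these scan steps. Hence the jobs that are actually assigned, listed in the order in which the algorithm assigns them, have non-decreasing deadlines. Since the pointer $i$ is advanced cyclically only when an assignment is made, the $t$-th assignment is given to agent $a_{((t-1)\bmod m)+1}$; throughout the first $L$ complete rounds this means the assignment order is precisely $j_1^1,\dots,j_m^1,\dots,j_1^L,\dots,j_m^L$, and it continues with $j_1^{L+1}$ whenever that job exists. In particular $j_m^l$ is the $(lm)$-th assignment and $j_1^{l+1}$ is the very next one.

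Given this indexing, the inequality is immediate. Because $j_1^{l+1}$ is assigned at a strictly later scan step than $j_m^l$, and the scan visits jobs in non-decreasing deadline order, I obtain $d_m^l\le d_1^{l+1}$ for every $l\in[L-1]$. The additional claim is the same statement at the boundary $l=L$: provided $j_1^{L+1}$ exists, that is, provided $a_1$ does receive a job in round $L+1$, it is again the assignment immediately following $j_m^L$, so $d_m^L\le d_1^{L+1}$. This also explains the existence guard in the statement, since round $L+1$ need not be complete and $a_1$ may be the agent that receives nothing, in which case $j_1^{L+1}$ is undefined and there is nothing to prove.

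I expect the only delicate part to be the bookkeeping behind the assignment order, not the deadline comparison itself. Two points need to be nailed down: first, that a job failing the feasibility test ``$X_i\cup\set{j_k}$ is feasible'' is skipped once and never re-examined, so the scan really is a single monotone pass; and second, that the cyclic behaviour of the pointer faithfully realises the round structure used to define $j_i^l$, so that $j_m^l$ and $j_1^{l+1}$ are genuinely consecutive assignments. Once both are established, no release-time or feasibility reasoning is required.

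Should an argument in the exact style of \cref{lem:card:p_j=d_j-r_j:EF1:agents:key} be preferred, one can argue by induction on $l$ and contradiction: if $d_m^l>d_1^{l+1}$, then $j_1^{l+1}$ is an unassigned job with a strictly smaller deadline at the time $a_m$ selects $j_m^l$ in round $l$, so the earliest-deadline-first rule forces $j_1^{l+1}$ to be infeasible for $a_m$, i.e.\ it overlaps the latest job $j_m^{l-1}$ of $X_m^{l-1}$, giving $r_1^{l+1}\le d_m^{l-1}$; the inductive hypothesis $d_m^{l-1}\le d_1^l$ then yields $r_1^{l+1}\le d_1^l$, so $j_1^{l+1}$ overlaps $j_1^l\in X_1^l$ and is infeasible for $a_1$, contradicting that $a_1$ takes it in round $l+1$. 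The monotone-scan route above is the shorter of the two, and is the one I would present.
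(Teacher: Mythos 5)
Your proposal is correct, and its primary route is genuinely different from the paper's. The paper proves this lemma exactly like its companion \cref{lem:card:p_j=d_j-r_j:EF1:agents:key}: induction on $l$ plus contradiction via rigidity and release times, which is precisely the fallback you sketch at the end (one nit: $j_1^{l+1}$ need only overlap \emph{some} job of $X_m^{l-1}$, not necessarily the latest one, but since every job in $X_m^{l-1}$ ends by $d_m^{l-1}$ the bound $r_1^{l+1}\leqslant d_m^{l-1}$ follows anyway). Your monotone-scan argument is shorter and proves more: since the loop makes a single pass over the jobs in non-decreasing deadline order and the pointer advances by one agent per assignment, the deadlines of assigned jobs are non-decreasing in assignment order, which yields \cref{lem:card:p_j=d_j-r_j:EF1:agents:key} and the present lemma simultaneously, with no induction and no release-time reasoning. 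What each approach buys is tied to how one reads \cref{alg:card:p_j=d_j-r_j:EF1+IO}: your scan argument exploits the literal pseudocode, under which a job failing the feasibility test is discarded forever and the pointer stays put, so assignments really do follow the sorted order; the paper's argument instead treats the algorithm as if each agent, on its turn, picks the earliest-deadline \emph{remaining} job feasible for it, with skipped jobs staying available to later agents --- that is the semantics its proofs of this lemma and of \cref{lem:card:p_j=d_j-r_j:EF1:charity} implicitly use, and under that reading the global scan order of assignments is no longer automatic (it is exactly the content of the two key lemmas), so the induction/contradiction argument is the one that survives. Since you supply both arguments, your proof is sound under either reading; just state explicitly which semantics of the algorithm you assume when presenting the scan version.
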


\begin{proof}
Let $a_i,a_k$ be two agents such that $1\leqslant i < k\leqslant m$.
We assume that $j_1^{L+1}$ exists and prove that the lemma holds for all $l\in[L]$.
We prove by contradiction.

In the base case where $l=1$, it is not hard to see that $d_m^l\leqslant d_1^2$; otherwise $a_m$ will choose $j_1^2$ in the first round.
Now, we have induction hypothesis $d_m^{l-1}\leqslant d_1^{l}$. 

Suppose, towards to a contradiction, that there exist $l\in[L]$ such that $d_m^l>d_1^{l+1}$.
In the $l$-th round, agent $a_m$ selects $j_m^{l}$ instead of $j_1^{l+1}$ because $X_m^{l-1}\cup\set{j_1^{l+1}}$ is not a feasible job set; otherwise $a_m$ will select $j_1^{l+1}$.
Since $X_m^{l-1}\cup\set{j_1^{l+1}}$ is not a feasible job set, we have $r_1^{l+1}\leqslant d_m^{l-1}$.
By induction hypothesis, we have $d_m^{l-1}\leqslant d_1^{l}$.
Therefore, we have $r_1^{l+1}\leqslant d_1^{l}$ which implies that $X_1^l\cup\set{j_1^{l+1}}$ is not a feasible job set.
This contradicts our assumption.
\end{proof}

\begin{lemma}
$|X_i|-|X_k|\in\set{-1,0,1},\forall i,k\in[m]$.
\label{lem:card:p_j=d_j-r_j:EF1:agents}
\end{lemma}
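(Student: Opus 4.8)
The plan is to read off the conclusion directly from the round-robin bookkeeping of \cref{alg:card:p_j=d_j-r_j:EF1+IO}, using the notation already introduced ($L$, $X_i^l$, $j_i^l$). First I would note that, by the very definition of $L$, every agent receives exactly one job in each of the first $L$ rounds, so $|X_i^L| = L$ for all $i \in [m]$. Hence all discrepancy between bundle sizes must come from round $L+1$, and the whole claim reduces to showing that in round $L+1$ each agent receives either zero or one additional job, with the agents who receive one forming an initial segment $a_1,\dots,a_s$ of the index order.

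The crucial mechanic is that the pointer $i$ advances \emph{only} when a job is actually assigned: in the \textbf{else} branch (infeasible job) the current job is discarded and $i$ is left unchanged, so the pointer reaches $a_{i+1}$ only after $a_i$ has successfully taken a job. Consequently, in round $L+1$ the agents are served strictly in increasing index order, and the first agent that ``receives nothing'' is exactly the agent at which the remaining job list gets exhausted (every leftover job conflicts with that agent's current feasible set). Because the pointer never moves past such an agent, no higher-indexed agent is served either, and the algorithm terminates within round $L+1$. This yields $|X_i| = L+1$ for $i \le s$ and $|X_i| = L$ for $i > s$, so every bundle has size $L$ or $L+1$ and therefore $|X_i|-|X_k| \in \{-1,0,1\}$ for all $i,k \in [m]$. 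Equivalently, one may phrase the same argument without rounds: the $N=\sum_{i\in[m]}|X_i|$ scheduled jobs are handed out cyclically as $a_1,a_2,\dots,a_m,a_1,\dots$, so writing $N=qm+s$ with $0\le s<m$ gives $|X_i|\in\{q,q+1\}$ immediately.

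I expect the only (mild) obstacle to be the second step: verifying that an agent ``receiving nothing'' in round $L+1$ truly forces termination rather than merely being skipped while later agents keep receiving jobs. This is precisely what the pointer-stays-on-failure semantics guarantee, and it is worth stating it explicitly since the conclusion would fail for a strict round-robin that skips an unservable agent. I would also remark that, unlike the EF1 analysis, this cardinality bound does \emph{not} require the deadline orderings of \cref{lem:card:p_j=d_j-r_j:EF1:agents:key,lem:card:p_j=d_j-r_j:EF1:agents:key2}; it is a purely combinatorial consequence of the cyclic assignment rule, and the counting is insensitive to whether a discarded job could have fit some other agent (such jobs simply join $X_0$).
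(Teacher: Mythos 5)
Your proof is correct, and it is genuinely more elementary than the paper's. The paper derives the bound from the scheduling structure: it first proves the deadline orderings $d_1^l\leqslant\cdots\leqslant d_m^l$ and $d_m^l\leqslant d_1^{l+1}$ (\cref{lem:card:p_j=d_j-r_j:EF1:agents:key,lem:card:p_j=d_j-r_j:EF1:agents:key2}), then uses rigidity to argue that once the smallest-indexed agent $a_f$ receives nothing in round $L+1$, every remaining job $j$ satisfies $r_j\leqslant d_f^L\leqslant d_k^L$, hence every $a_k$ with $k\geqslant f$ receives nothing in that round and every agent receives nothing afterwards; this yields the same prefix structure ($|X_i|=L+1$ for $i<f$ and $|X_i|=L$ for $i\geqslant f$) that you obtain. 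You instead read the bound off the dispatch discipline of \cref{alg:card:p_j=d_j-r_j:EF1+IO} alone: since the pointer advances only upon a successful assignment, the assigned jobs are handed out cyclically, and the size bound is immediate counting, with no use of EDF sorting, rigidity, or the feasibility structure at all. Your route is shorter and strictly more general; what the paper's heavier argument buys is robustness to the ``skip'' reading of round-robin, in which an unservable agent is passed over while the leftover jobs remain available to later agents --- exactly the semantics of \cref{alg:cardinality:p_j:EF1} in the flexible case, where the counterpart statement (\cref{lem:card:p_j:EF1:agents}) genuinely requires such a structural argument, so the paper treats both algorithms uniformly. One small correction: your closing remark that the conclusion ``would fail'' for a skipping round-robin is overstated in this setting --- with EDF ordering and rigid jobs the lemma still holds under that semantics (this is essentially what the paper's proof establishes); it is only your counting argument, not the conclusion, that would no longer apply.
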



\begin{proof}
We consider the $(L+1)$-th round of \cref{alg:card:p_j=d_j-r_j:EF1+IO} in which $\exists f\in[m]$ such that $a_f$ receives nothing in this round.
Let $J_f^L$ be the set of remaining jobs in $J$ after $a_f$ chooses in the $(L+1)$-th round.
We consider the agent $a_k$ such that $1\leqslant f \leqslant k < m$.
Since $a_f$ receives nothing, we have $r_j\leqslant d_{f}^L,\forall j\in J_f^L$.
According to \cref{lem:card:p_j=d_j-r_j:EF1:agents:key}, we have $d_{f}^L \leqslant d_{k}^L$. Then, we have $r_j\leqslant d_{f}^L \leqslant d_{k}^L,\forall j\in J_f^L$ which implies that agent $a_k$ also receives nothing in this round.
Therefore, $a_m$ must receive nothing in the $(L+1)$-th round because there exist an agent that does not receive job in the $(L+1)$-th round.
Let $J_m^L$ be the remaining jobs in $J$ before $a_m$ chooses in the $(L+1)$-th round.
Since $a_m$ receives nothing in the $(L+1)$-th round, we have $r_j\leqslant d_m^{L},\forall j\in J_m^L$.
According to \cref{lem:card:p_j=d_j-r_j:EF1:agents:key2}, we have $d_m^L\leqslant d_1^{L+1}$ if $j_1^{L+1}$ exists.
Therefore, we have $r_j\leqslant d_1^{L+1},\forall j\in J_m^L$.
Thus, $a_1$ will receive nothing in the $(L+2)$-th round.
Now, we consider an arbitrary agent $a_h,1\leqslant h \leqslant m$, it is strainghtforward to see that if $a_h$ receives nothing in $(L+1)$-th round, then $a_h$ will receive nothing in any $L'$-th round, where $L+1<L'$.
Note that, in the $(L+1)$-th round, there may exist many agents that receive nothing.
Without loss of generality, we assume that $a_f$ is the agent with the smallest index who receives nothing in the $(L+1)$-th round.
Therefore, we have 
\begin{equation*}
|X_i|=
\begin{cases}
L, & \forall f \leqslant i \leqslant m ;\\
L+1, & \forall 1\leqslant i <f.
\end{cases}
\end{equation*}
Thus, we have $|X_i|-|X_k|\in\set{-1,0,1},\forall i,k\in[m]$.
\end{proof}

\begin{lemma}
$u_i(X_i) \geqslant u_i(X_0\cup X_i),\forall i\in[m]$.
\label{lem:card:p_j=d_j-r_j:EF1:charity}
\end{lemma}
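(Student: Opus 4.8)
The plan is to exploit that, under $\langle$unweighted, rigid$\rangle$ valuations, $u_i(S)$ equals the maximum number of pairwise non-overlapping intervals contained in $S$, which is computed exactly by the classical \emph{Earliest Deadline First} greedy: scan the jobs of $S$ in non-decreasing deadline order and select a job whenever its release time strictly exceeds the deadline of the last selected job. Since $X_i\subseteq X_0\cup X_i$ and $u_i$ is monotone, we already have $u_i(X_0\cup X_i)\geqslant u_i(X_i)=|X_i|$, so it suffices to prove the reverse inequality, i.e. that $X_i$ is a \emph{maximum} compatible subset of $X_0\cup X_i$. I would do this by proving the stronger statement that EDF greedy, run on $X_0\cup X_i$ using the same deadline order fixed in Line~\ref{line:card:p_j=d_j-r_j:1} of \cref{alg:card:p_j=d_j-r_j:EF1+IO}, selects exactly the set $X_i$; optimality of EDF greedy then gives $u_i(X_0\cup X_i)=|X_i|$.

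First I would make the block structure of \cref{alg:card:p_j=d_j-r_j:EF1+IO} explicit. Because the pointer advances only on a successful assignment, the scanned sequence of jobs decomposes into consecutive blocks, one per (agent, round) turn, where a block consists of zero or more jobs that are skipped into $X_0$ followed by the single job assigned in that turn. The crucial local fact is that a job $j$ skipped while the pointer is at $a_{i'}$ in round $l$ is infeasible with $X_{i'}^{l-1}$; since $j$ comes later than $j_{i'}^{l-1}$ in deadline order, it must overlap $j_{i'}^{l-1}$, and hence $r_j\leqslant d_{i'}^{l-1}$. Thus every $X_0$ job carries a release-time bound expressed through a deadline of some assigned job, which is exactly what I will compare against $a_i$'s selected deadlines via \cref{lem:card:p_j=d_j-r_j:EF1:agents:key,lem:card:p_j=d_j-r_j:EF1:agents:key2}.

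The core is an induction on rounds with the hypothesis: after EDF greedy has processed all jobs of $X_0\cup X_i$ belonging to rounds $1,\dots,l$, the selected set is exactly $\{j_i^1,\dots,j_i^l\}$, the last selected deadline is $d_i^l$, and every $X_0$ job from rounds $\leqslant l$ has been rejected. In the inductive step I process the round-$(l+1)$ blocks in their global order. For an agent $a_{i'}$ with $i'<i$ the block contributes only skipped ($X_0$) jobs, each with $r\leqslant d_{i'}^l\leqslant d_i^l$ by \cref{lem:card:p_j=d_j-r_j:EF1:agents:key}, so all are rejected and the last selected deadline stays $d_i^l$. In $a_i$'s own block, the skipped jobs have $r\leqslant d_i^l$ and are rejected, while the assigned job $j_i^{l+1}$ is feasible with $j_i^l$ and thus satisfies $r(j_i^{l+1})>d_i^l$, so it is selected and the last deadline becomes $d_i^{l+1}$. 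Finally, for an agent $a_{i'}$ with $i'>i$ the skipped jobs satisfy $r\leqslant d_{i'}^l\leqslant d_m^l\leqslant d_1^{l+1}\leqslant d_i^{l+1}$ by \cref{lem:card:p_j=d_j-r_j:EF1:agents:key2,lem:card:p_j=d_j-r_j:EF1:agents:key}, hence overlap $j_i^{l+1}$ and are rejected. This closes the induction, and since skipped jobs are always rejected because they overlap a \emph{previously} selected job, tie-breaking in the deadline order causes no difficulty.

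The step I expect to be the main obstacle is the terminal round $L+1$. By \cref{lem:card:p_j=d_j-r_j:EF1:agents} the pointer stalls at the first agent $a_f$ receiving nothing and sweeps all remaining jobs into $X_0$; these are infeasible with $X_f^L$, so each has release time at most $d_f^L$. For a target agent $a_i$ with $i\geqslant f$ (so $|X_i|=L$) I would use $d_f^L\leqslant d_i^L$ to conclude that these dumped jobs overlap $j_i^L$ and are rejected, while for $i<f$ (so $|X_i|=L+1$) the same release bounds, together with $d_f^L\leqslant d_m^L\leqslant d_1^{L+1}\leqslant d_i^{L+1}$, give overlap with $j_i^{L+1}$. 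Either way EDF greedy selects precisely $X_i$, so $u_i(X_0\cup X_i)=|X_i|=u_i(X_i)$, which is the claim. Handling these dumped jobs together with the cross-agent skipped jobs (the case $i'>i$) uniformly through the two deadline-ordering lemmas is where the care is needed; the remaining verifications are routine.
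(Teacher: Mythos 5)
Your proposal is correct, and it shares the paper's outer skeleton: both proofs reduce the lemma to the claim that Earliest Deadline First, run on $X_0\cup X_i$ with the sorted order fixed in Line~\ref{line:card:p_j=d_j-r_j:1} of \cref{alg:card:p_j=d_j-r_j:EF1+IO}, selects exactly $X_i$, and then invoke optimality of EDF for maximum-cardinality interval scheduling to conclude $u_i(X_0\cup X_i)=|X_i|=u_i(X_i)$. The two proofs of that inner claim are, however, genuinely different. The paper argues by contradiction on the first index $R$ at which the EDF output and $X_i$ disagree: EDF's pick $j_R'$ is then a charity job, compatible with the common prefix and with deadline no larger than that of $j_R$, and the paper concludes that the algorithm ``will assign $j_R'$ instead of $j_R$ to $a_i$.'' That step is asserted without treating the case in which $j_R'$ was scanned, and skipped, while the round-robin pointer sat at an agent other than $a_i$; compatibility with $a_i$'s prefix alone does not rule this out. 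Your round-by-round induction is precisely what closes that case: you exploit the block structure of the scan, bound the release time of every skipped job by the deadline $d_{i'}^{l}$ of the skipping agent's last assigned job, and then use \cref{lem:card:p_j=d_j-r_j:EF1:agents:key,lem:card:p_j=d_j-r_j:EF1:agents:key2} to turn that into a bound by $a_i$'s most recently EDF-selected deadline, so EDF rejects the job as well; the terminal round is handled uniformly through $d_f^L$. The trade-off is clear: the paper's exchange argument is shorter, while yours is longer but self-contained and makes the cross-agent rejections explicit, at the cost of leaning on the two ordering lemmas --- which the paper proves anyway for the EF1 half of \cref{thm:EF1+IO:card:p=d-r}, so no new machinery is actually needed.
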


We will use the optimal argument for classical interval scheduling to prove \cref{lem:card:p_j=d_j-r_j:EF1:charity}. We restate the problem and optimal argument for completeness. 

In classical interval scheduling, we are given a set of intervals $\cI=\set{I_1,I_2,\cdots,I_n}$. Each interval is associated with a release time and a deadline. A set of intervals $\cI'$ is called a compatible set if and only if, for every two intervals $I_k,I_h\in\cI'$, $I_k,I_h$ do not intersect. The goad is to find the compatible set with the maximum size. This problem can be easily solved by \textit{Earlier Deadline First} (EDF) \cite{DBLP:books/daglib/0015106}.

\begin{proof}
We consider an arbitrary agent $a_i$. We prove by constructing an instance of classical interval scheduling problem.
Let $\cI=X_0\cup X_i$. Let $\alge$ be the interval set selected by EDF algorithm.
Observe that if we can prove that $\alge=X_i$, then it implies that $u_i(X_i)\geqslant u_i(X_0\cup X_i)$ since $\alge$ is the optimal solution.
Suppose that $\alge=\set{j_1',j_2',\cdots,j_h'}$ and assume that the interval is added to $\alge$ by EDF algorithm in this order. 
Suppose that $X_i=\set{j_1,j_2,\cdots,j_k}$ and assume that the job is added to $X_i$ by \cref{alg:card:p_j=d_j-r_j:EF1+IO} in this order. Note that $|X_i|\leqslant |\alge|$ since $\alge$ is the compatible set with the maximum size.

We prove by comparison.
Assume that $\alge$ and $X_i$ become different from the $R$-th element, i.e., $j_l=j_l',\forall l\in[R-1]$ and $j_R\ne j_R'$.
This implies that $j_R'$ instead of $j_R$ is the job with the smallest deadline in $X_0\cup X_i\setminus\set{j_1,\cdots,j_{R-1}}$ to make $\set{j_1,\cdots,j_{R-1}}\cup\set{j_R'}$ be compatible.
Note that both $\set{j_1,\cdots,j_{R-1}}\cup\set{j_R'}$ and $\set{j_1,\cdots,j_{R-1}}\cup\set{j_R}$ are feasible.
Since $j_R'$ is left to charity, there is no agent takes it away.
Therefore, \cref{alg:card:p_j=d_j-r_j:EF1+IO} will assign $j_R'$ instead of $j_R$ to $a_i$.
Hence, we proved $X_i\subseteq \alge$.
It is easy to see that there is no interval $j_u'\in \alge$ such that $j_u'\notin X_i$ which implies that $X_i = \alge$.
\end{proof}

Now, we are ready to prove \cref{thm:EF1+IO:card:p=d-r}.

\begin{proof}[Proof of \cref{thm:EF1+IO:card:p=d-r}]
According to \cref{lem:card:p_j=d_j-r_j:EF1:agents}, we know that the feasible schedule $\fX$ returned by \cref{alg:card:p_j=d_j-r_j:EF1+IO} is an EF1 schedule. According to \cref{lem:card:p_j=d_j-r_j:EF1:charity}, $\fX$ is also an IO schedule. Hence, \cref{alg:card:p_j=d_j-r_j:EF1+IO} returns a feasible schedule that is simultaneously EF1 and IO.

Now, we prove the running time. Line \ref{line:card:p_j=d_j-r_j:1} requires running time $O(n\log n)$, where $n$ is the number of jobs. Line \ref{line:card:p_j=d_j-r_j:5}-\ref{line:card:p_j=d_j-r_j:11} requires running time $O(n)$. Hence, the running time of \cref{alg:card:p_j=d_j-r_j:EF1+IO} can be bounded by $O(n\log n)$.
\end{proof}

Now, we show an instance that \cref{alg:card:p_j=d_j-r_j:EF1+IO} returns a schedule that is not PO schedule.
See \cref{fig:EDF_noPO}.
By applying \cref{alg:card:p_j=d_j-r_j:EF1+IO} to the instance in \cref{fig:EDF_noPO}, let $\fX$ be the returned schedule.
Then, we have $\fX=(X_1,X_2)$, where $X_1=\set{j_1,j_4},X_2=\set{j_2,j_5}$ and $X_0=\set{j_3,j_6}$.
But a possible PO schedule is $\fX'=(X_1',X_2')$, where $X_1'=\set{j_1,j_4,j_5},X_2'=\set{j_2,j_6}$ and $X_0'=\set{j_3}$.

\begin{figure}[htb]
    \centering
    \includegraphics[width=7.5cm]{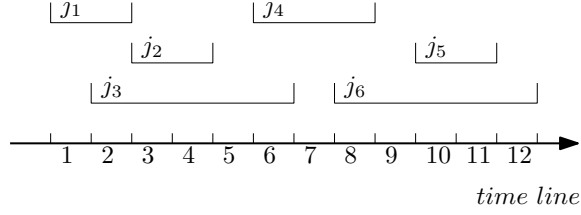}
    \caption{Instance for which \cref{alg:card:p_j=d_j-r_j:EF1+IO} fails to return a PO schedule. In the above instance, we have $J=\set{j_1,j_2,j_3,j_4,j_5,j_6},A=\set{a_1,a_2}$. The job windows are $T_1=\set{1,2},T_2=\set{3,4},T_3=\set{2,3,4,5,6},T_4=\set{6,7,8},T_5=\set{10,11},T_6=\set{8,9,10,11,12}$, respectively.}
    \label{fig:EDF_noPO}
\end{figure}

\subsection{A polynomial time algorithm for \FISP/ with $\langle$unweighted, flexible$\rangle$}

Note that \cref{alg:card:p_j=d_j-r_j:EF1+IO} can be modified to run on instances of \FISP/ with $\langle$unweighted, flexible$\rangle$.
But this modified algorithm fails to return an IO schedule.
This is not surprising as it has been proved in \cite{DBLP:books/fm/GareyJ79} that even with a single machine, finding an IO schedule is NP-hard. 
Fortunately, the modified algorithm still runs in polynomial time and always returns a schedule that is EF1 and 1/2-IO.

Before giving the round-robin algorithm, we first re-state the following classical scheduling problem.

\paragraph{Scheduling to find the maximum compatible job set} We are given a job set $J$ which contains $n$ jobs, i.e., $J=\set{j_1,j_2,\cdots,j_n}$, with each job regraded as a tuple, i.e., $j_i=(r_i,p_i,d_i),i\in[n],1\leqslant p_i\leqslant d_i-r_i+1,$ where $r_i,p_i,d_i$ are the release time, processing time and deadline, respectively. There is one machine which is used to process jobs. A subset $J'$ of jobs is called \textit{compatible job set} if and only if all jobs in $J'$ can be finished without preemption before their deadlines. The objective is to find a compatible job set with the maximum size.

The above scheduling problem is the optimization version of the scheduling problem SEQUENCING WITH RELEASE TIMES AND DEADLINES, which is strongly NP-complete \cite{DBLP:books/fm/GareyJ79}. In \cite{DBLP:journals/siamcomp/Bar-NoyGNS01}, they give an $\frac{(m+1)^m}{(m+1)^m-m^m}$-approximation algorithm for $m$ identical machines case. In particular, the approximation ratio is $2$ when $m=1$. We restate the greedy algorithm for completeness (\cref{alg:cardinality:p_j:scheduling}).

\begin{algorithm}[htb]
\caption{\hspace{-2pt}{\bf .} 2-approximation for scheduling problem on single machine.}
\label{alg:cardinality:p_j:scheduling}
\begin{algorithmic}[1]
\STATE $\algc=\emptyset$.
\STATE $J^*=J$.
\STATE $D=0$.
\WHILE{$J^*\ne\emptyset$}
\label{line:scheduling:4}
    \STATE $J^*=\emptyset$. // reset $J^*$.
    \FOR{ every job $j\in J$ }
    \label{line:scheduling:6}
        \IF{$d_j\leqslant\max\{D,r_j\}+p_j$}
            \STATE $J^*=J^*\cup\set{j}$.
        \ENDIF
    \ENDFOR
    \label{line:scheduling:10}
    \STATE $j^*=\argmin\limits_{j\in J^*}\{\max\{D,r_j\}+p_j\}$.
    \STATE Schedule job $j^*$ at time slot $\max\{D,r_j\}$.
    \STATE $D=\max\{D,r_{j^*}\}+p_{j^*}$.
    \STATE $\algc=\algc\cup\set{j^*}$.
\ENDWHILE
\label{line:scheduling:15}
\end{algorithmic}
\end{algorithm}

\begin{theorem}\label{thm:EF1+IO:cardi}
A schedule that is simultaneously EF1 and 1/2-IO exists and can be found in polynomial time for all instance of \FISP/ $\langle$unweighted, flexible$\rangle$.
\end{theorem}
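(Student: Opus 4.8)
The plan is to replace the exact feasibility / EDF test inside the round-robin of \cref{alg:card:p_j=d_j-r_j:EF1+IO} by the greedy single-machine scheduler of \cref{alg:cardinality:p_j:scheduling}, run in synchronous rounds. Concretely, each agent keeps a greedy partial schedule with current completion time $D_i$ (initially $0$); the agents act in the fixed cyclic order $a_1,\dots,a_m$, and on her turn $a_i$ appends, from the pool of still-unassigned jobs, the job $j$ minimizing the resulting completion time $\max\{D_i,r_j\}+p_j$ subject to $\max\{D_i,r_j\}+p_j\le d_j$; if no such job exists she is \emph{stuck}, and the process repeats rounds until a round in which no agent can append a job. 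Since valuations are unary, $u_i(X_i)=|X_i|$, so EF1 reduces to $\bigl||X_i|-|X_k|\bigr|\le 1$ and $1/2$-IO reduces to $|X_i|\ge\tfrac12\,u_i(X_0\cup X_i)$.

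For EF1 I would prove the flexible analogues of \cref{lem:card:p_j=d_j-r_j:EF1:agents:key,lem:card:p_j=d_j-r_j:EF1:agents:key2}, replacing the deadline $d_i^l$ by the completion time of $a_i$'s $l$-th pick. Writing $D_i^{(l)}$ for $a_i$'s completion after round $l$, the within-round claim $D_1^{(l)}\le\cdots\le D_m^{(l)}$ holds because any job feasible for the more loaded machine $a_{i+1}$ is also feasible for $a_i$ (feasibility is monotone in the completion time), so $a_i$'s greedy pick completes no later. The cross-round claim $D_m^{(l)}\le D_1^{(l+1)}$ I would establish by induction: the job $a_1$ appends in round $l+1$ was already present when $a_m$ chose in round $l$, so its completion on $a_m$'s machine is at least $D_m^{(l)}$, and using the inductive hypothesis $D_m^{(l-1)}\le D_1^{(l)}$ together with monotonicity of completion time in $D$, this lower bound transfers to $a_1$'s machine. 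These two facts make the per-pick completion times non-decreasing in the cyclic order; combined with the fact that feasibility only becomes harder as $D$ grows, they show that once an agent is stuck every later agent is stuck, and that after one further round every agent is stuck. Hence there is at most one incomplete round, the bundle sizes take only the two values $L$ and $L+1$, and EF1 follows exactly as in \cref{lem:card:p_j=d_j-r_j:EF1:agents}.

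For $1/2$-IO I would show, mirroring \cref{lem:card:p_j=d_j-r_j:EF1:charity}, that each $X_i$ is precisely the output of \cref{alg:cardinality:p_j:scheduling} run on $X_0\cup X_i$. The key point is that at every turn of $a_i$ the pool contains all of $X_0$ and every not-yet-taken job of $X_i$ (these are never claimed by the other agents), so the job $a_i$ greedily selects is the earliest-completion feasible job of $X_0\cup X_i$ relative to her current schedule; replaying the greedy on $X_0\cup X_i$ therefore reproduces $X_i$ step for step, and termination (every agent stuck) guarantees that no remaining job of $X_0$ can be appended to $X_i$, so the greedy run is also maximal. Invoking the stated $2$-approximation of \cref{alg:cardinality:p_j:scheduling} then yields $|X_i|\ge\tfrac12\,u_i(X_0\cup X_i)$, i.e.\ $1/2$-IO, with the factor $2$ being exactly the source of the loss, consistent with the NP-hardness of exact IO.

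The main obstacle I anticipate is the cross-round completion-time monotonicity: unlike the rigid case, where a job released before an already-scheduled deadline is permanently in conflict, a flexible job can be delayed, so the clean deadline argument of \cref{lem:card:p_j=d_j-r_j:EF1:agents:key2} does not transfer verbatim. Closing the induction requires isolating the correct invariant, namely that $a_1$'s new completion time dominates the previous round's maximum $D_m^{(l)}$, and using that any job still available to $a_1$ was passed over by the more loaded machines in the previous round; this is the step I would write most carefully, after which both EF1 and $1/2$-IO follow from the structural facts above.
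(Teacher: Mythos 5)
Your proposal is correct and follows essentially the same route as the paper: the algorithm you describe is exactly \cref{alg:cardinality:p_j:EF1} (round-robin where each agent greedily appends her earliest-completion feasible job), your within-round and cross-round completion-time monotonicity claims are the paper's \cref{lem:card:p_j:EF1:agents:key1,lem:card:p_j:EF1:agents:key2}, leading to bundle sizes in $\{L,L+1\}$ as in \cref{lem:card:p_j:EF1:agents}, and your replay argument showing each $X_i$ coincides with the output of \cref{alg:cardinality:p_j:scheduling} on $X_0\cup X_i$ is precisely \cref{lem:card:p_j:EF1:charity}. The obstacle you flag (cross-round monotonicity for flexible jobs) is resolved by the same induction the paper uses, so there is no gap.
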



Now, we are ready to give the algorithm (\cref{alg:cardinality:p_j:EF1}) for instance of \FISP/ $\langle$unweighted, flexible$\rangle$.

\begin{algorithm}[hbt]
\caption{\hspace{-2pt}{\bf .} Round-Robin for \FISP/ $\langle$unweighted, flexible$\rangle$}
\label{alg:cardinality:p_j:EF1}
\begin{algorithmic}[1]
\REQUIRE Agent set $A$ and job set $J$.
\ENSURE EF1 schedule $\fX=(X_1,\cdots,X_m)$.
\STATE $X_1=\cdots=X_m=\emptyset$.
\STATE $J^*_1=\cdots=J^*_m=J$.
\STATE $D_1=\cdots=D_m=0$.
\STATE $i=1$. // The index.
\WHILE{there is a $J^*_i\ne\emptyset$}
\label{line:card:p_j:6}
\FORALL{$a_{i}\in A$}
\label{line:card:p_j:7}
\STATE $J_i^*=\emptyset$. // reset $J_i^*$.
    \FOR{ every job $j\in J$ }
    \label{line:card:p_j:9}
        \IF{$d_j\leqslant\max\{D_i,r_j\}+p_j$}
            \STATE $J^*_i=J^*_i\cup\set{j}$.
        \ENDIF
    \ENDFOR
    \label{line:card:p_j:13}
\STATE $j^*_i=\argmin\limits_{j\in J^*_i}\{\max\{D_i,r_{j}\}+p_j\}$.
\label{line:card:p_j:14}
\STATE $X_i=X_i\cup\set{j^*_i}$.
\STATE Schedule job $j^*_i$ at $\max\{D_i,r_j\}$.
\STATE $D_i = \max\{D_i,r_{j^*_i}\}+p_{j^*_i}$.
\STATE $J=J\setminus\set{j^*_i}$.
\ENDFOR
\label{line:card:p_j:19}
\ENDWHILE
\STATE $X_0=J\setminus\bigcup_{i\in[m]}X_i$.
\label{line:card:p_j:20}
\end{algorithmic}
\end{algorithm}

Let $\fX=(X_1,\cdots,X_m)$ be the schedule returned by \cref{alg:cardinality:p_j:EF1} and $X_0=J\setminus\bigcup_{i\in[m]}X_i$.
We first show that $\fX$ is an 1/2-IO schedule and then prove that $\fX$ is an EF1 schedule.

\begin{lemma}
$u_{i}(X_i) \geqslant \frac{1}{2} \cdot u_{i}(X_i\cup X_0),\forall a_i\in A$.
\label{lem:card:p_j:EF1:charity}
\end{lemma}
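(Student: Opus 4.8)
The plan is to show that the bundle $X_i$ produced for agent $a_i$ by the round-robin \cref{alg:cardinality:p_j:EF1} is exactly the schedule that the single-machine greedy \cref{alg:cardinality:p_j:scheduling} would output if it were run on the job set $X_i \cup X_0$ in isolation. Once this is established, the $2$-approximation guarantee of \cref{alg:cardinality:p_j:scheduling} (which follows from the $\frac{(m+1)^m}{(m+1)^m-m^m}$-bound of \cite{DBLP:journals/siamcomp/Bar-NoyGNS01} at $m=1$) immediately yields $|X_i| \geq \frac{1}{2}\,\mathrm{OPT}(X_i \cup X_0)$. Since valuations are unweighted and $X_i$ is feasible, $u_i(X_i)=|X_i|$, while $u_i(X_i \cup X_0)$ equals the maximum cardinality of a feasible subset of $X_i \cup X_0$, so the inequality above is exactly $u_i(X_i) \geq \frac{1}{2} u_i(X_i \cup X_0)$.

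To prove the equivalence I would induct on the successive jobs $j_i^1, j_i^2, \dots$ that $a_i$ receives, maintaining the invariant that after $k$ steps the round-robin run and the isolated greedy run have scheduled the same jobs in the same order, and hence hold the same cutoff time $D_i$. The two key structural facts are: first, the jobs in $X_0$ are never taken by anyone, so they remain available to $a_i$ at every one of its turns, exactly as in the isolated run; and second, any job of $X_i$ not yet scheduled is likewise still present, since it is ultimately taken by $a_i$ rather than by another agent. Consequently, at $a_i$'s $(k{+}1)$-th turn the set of candidate jobs from $X_i \cup X_0$ available to it in the round-robin run coincides with the candidate set of the isolated greedy.

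The crux of the inductive step is to argue that $a_i$'s actual choice agrees with the isolated greedy's choice even though the round-robin pool $J$ is a \emph{superset} of $X_i \cup X_0$ (it may still contain untaken jobs destined for other agents). Here I would use that both algorithms pick the feasible job minimizing the completion time $\max\{D_i, r_j\} + p_j$, a quantity depending only on $D_i$ and the job's own parameters and therefore identical across the two runs. Since $a_i$'s pick $j_i^{k+1}$ lies in $X_i \subseteq X_i \cup X_0$ and attains the minimum over the larger pool $J$, it also attains the minimum over the smaller pool $X_i \cup X_0$; with a fixed tie-breaking rule the isolated greedy selects the same job. A symmetric observation handles termination: $a_i$ stops once no remaining job is schedulable for it, and because $X_0$ is always in its pool, no feasible extension exists in $X_i \cup X_0$ at that moment either, so the isolated greedy halts with the same bundle.

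The step I expect to be the main obstacle is precisely this comparison between the round-robin pool and the isolated pool: one must rule out the possibility that the presence of other agents' jobs, or their earlier removal from $J$, causes $a_i$ to deviate from the choice it would make on $X_i \cup X_0$ alone. The argmin-over-superset argument above is the tool that closes this gap, and I would isolate it as a small standalone claim before running the induction, so that the rest reduces to bookkeeping and an appeal to the known $2$-approximation bound.
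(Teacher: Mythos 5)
Your proposal is correct and follows essentially the same route as the paper's proof: both arguments show that $X_i$ coincides with the output of the single-machine greedy (\cref{alg:cardinality:p_j:scheduling}) run on $X_i\cup X_0$ in isolation --- via a step-by-step comparison using the facts that jobs in $X_0$ are never taken, untaken jobs of $X_i$ remain available, and the argmin of $\max\{D_i,r_j\}+p_j$ over the smaller pool is attained by $a_i$'s actual pick --- and then invoke the $2$-approximation guarantee of \cite{DBLP:journals/siamcomp/Bar-NoyGNS01}. Your treatment of termination and of the superset-vs-isolated-pool comparison is, if anything, slightly more explicit than the paper's contradiction-at-first-difference phrasing, but the underlying argument is the same.
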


\begin{proof}
We consider an arbitrary agent $a_i\in A$ and the job set $X_i\cup X_0$.
Let $\algc$ be the set of jobs selected from $X_0\cup X_i$ by \cref{alg:cardinality:p_j:scheduling}.
Let $\optc$ be the set of jobs selected by the optimal algorithm.
Since \cref{alg:cardinality:p_j:scheduling} is a $2$-approximation algorithm, we have $|\algc|\geqslant \frac{1}{2}\cdot |\optc|$.
Observe that if we can prove that $\algc=X_i$, then we have $u_{i}(X_i) \geqslant \frac{1}{2} \cdot u_{i}(X_i\cup X_0)$ since $\algc$ has the size at least half of the optimal solution.
Let $\algc=\set{j_1,j_2,\cdots,j_k}$ and assume that the jobs are added to the solution by \cref{alg:cardinality:p_j:scheduling} in this order.
Let $X_{i}=\set{j_1',j_2',\cdots,j_r'}$ and assume that the jobs are added to $X_i$ by \cref{alg:cardinality:p_j:EF1} in this order.
We prove by comparison.
Assume that $\algc$ and $X_i$ become different from the $R$-th element, i.e., $j_l=j_l',\forall l\in[R-1]$ and $j_R\ne j_R'$.
Assume that the completion time of $j_{R-1}$ is $D_{R-1}$. Then, we have
$$j_R'\ne j_R=\argmin_{j\in J_R^*}\{\max\{D_{R-1},r_j\}+p_{j}\},$$
where $J_R^* \subseteq \cJ_r=X_0\cup X_i \setminus\set{j_1,\cdots,j_R}$ is a set of jobs which can be feasibly scheduled after $j_R$, i.e., $$J_R^*=\set{j\in\cJ_r|\max\{D_{R-1},r_j\}+p_{j}\leqslant d_j}.$$
Note that $j_R'\in J_R^*$. Since $j_R'$ instead of $j_R$ is assigned to agent $a_i$ in a certain round, we know that $j_R$ must be assigned to a certain agent before agent $a_i$ chooses, i.e., $j_R\in X_k,\exists k\in[m]$.
This contradicts our assumption since $j_R\in X_0$.

\end{proof}

Let $J^{l}_{i}$ be the job set $J^*_i$ for agent $a_i\in A$ in the $l$-th round, where $1\leqslant i \leqslant m$ and $1\leqslant l \leqslant L$. 
Suppose that in first $L$-th rounds of \cref{alg:cardinality:p_j:EF1}, $J^*_i\ne\emptyset,\forall i\in[m]$, i.e., in the $(L+1)$-th round, $\exists i\in[m]$ such that $J^*_i=\emptyset$.
Let $D_i^l$ be the parameter $D_i$ in \cref{alg:cardinality:p_j:EF1} for agent $a_i\in A$ at the end of the $l$-th round, where $1\leqslant i \leqslant m$ and $1\leqslant l \leqslant L$.


\begin{lemma}
$D_1^l\leqslant D_2^l \leqslant \cdots \leqslant D_m^l,\forall l\in[L]$. Moreover, we have $D_m^l \leqslant D_1^{l+1},\forall l\in[L-1]$, and $D_m^{L}\leqslant D_1^{L+1}$ if $J_1^{L+1}\ne\emptyset$.
\label{lem:card:p_j:EF1:agents:key1}
\end{lemma}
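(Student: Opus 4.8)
The plan is to establish both chains of inequalities by induction on the round index $l$, in close analogy with \cref{lem:card:p_j=d_j-r_j:EF1:agents:key} and \cref{lem:card:p_j=d_j-r_j:EF1:agents:key2}, but replacing the deadlines used there with the machine completion times $D_i^l$. For flexible jobs the quantity that governs whether a further job can be appended to a machine is how far that machine's timeline has been filled, not the deadline of the last accepted job, so completion times are the right bookkeeping variable. Throughout I would use the two defining features of \cref{alg:cardinality:p_j:EF1}: when agent $a_i$ acts in round $l$ it starts from completion time $D_i^{l-1}$ (with $D_i^0=0$) and selects the feasible job $j$ minimizing $\max\{D_i,r_j\}+p_j$, so that $D_i^l$ equals this minimum value; and the shared pool $J$ only shrinks, so any job taken later in the global acting order was still available to every agent that acted earlier.

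First I would prove the within-round chain $D_1^l\leqslant\cdots\leqslant D_m^l$ by induction on $l$, taking as base the trivial fact that all $D_i^0=0$. For the step, fix $i<k$. Since $a_i$ acts before $a_k$ in round $l$ and $j_k^l$ is removed from the pool by $a_k$ only afterwards, $j_k^l$ was still available when $a_i$ acted. The inductive hypothesis at round $l-1$ gives $D_i^{l-1}\leqslant D_k^{l-1}$, hence $\max\{D_i^{l-1},r_{j_k^l}\}\leqslant\max\{D_k^{l-1},r_{j_k^l}\}$; because $j_k^l$ is feasible when started from $D_k^{l-1}$ it is a fortiori feasible when started from the smaller $D_i^{l-1}$, so $j_k^l\in J_i^l$. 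The greedy minimum-completion-time rule then forces
\[
D_i^l\leqslant\max\{D_i^{l-1},r_{j_k^l}\}+p_{j_k^l}\leqslant\max\{D_k^{l-1},r_{j_k^l}\}+p_{j_k^l}=D_k^l,
\]
which completes the induction.

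The across-round inequalities $D_m^l\leqslant D_1^{l+1}$ follow by an entirely parallel induction, now comparing the last agent in round $l$ with the first agent in round $l+1$. The job $j_1^{l+1}$ is taken by $a_1$ only after $a_m$ has acted in round $l$, so it was available to $a_m$ then; the inductive hypothesis $D_m^{l-1}\leqslant D_1^l$ (with base value $D_m^0=0$) transfers feasibility as before and, via the same greedy rule, yields $D_m^l\leqslant\max\{D_m^{l-1},r_{j_1^{l+1}}\}+p_{j_1^{l+1}}\leqslant\max\{D_1^l,r_{j_1^{l+1}}\}+p_{j_1^{l+1}}=D_1^{l+1}$. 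The boundary statement at $l=L$ is the identical argument, where the hypothesis $J_1^{L+1}\neq\emptyset$ is exactly what guarantees that $a_1$ actually receives a job $j_1^{L+1}$, so that $D_1^{L+1}$ is defined. I expect the only genuinely delicate point — and the sole place the flexible setting departs from the rigid proofs — to be this feasibility transfer: one must check that the monotonicity of $\max\{D,r_j\}+p_j$ in $D$ simultaneously preserves membership in the feasible set $J_i^l$ and the ordering of the resulting completion times, so that the earlier-acting machine, having the smaller starting time, can always replicate the later machine's choice while the greedy rule can only improve upon it.
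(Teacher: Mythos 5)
Your proof is correct and takes essentially the same approach as the paper's: both parts proceed by induction on the round index, using the fact that the later-acting agent's job is still in the pool when the earlier agent acts, the monotonicity of $\max\{D,r_j\}+p_j$ in $D$ to transfer feasibility (membership in $J_i^l$, resp.\ $J_m^l$), and the greedy argmin rule of \cref{alg:cardinality:p_j:EF1} to bound the completion time. The only cosmetic difference is that you argue directly from the greedy rule, whereas the paper phrases both induction steps as proofs by contradiction; the underlying inequalities are identical.
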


\begin{proof}
We first prove $D_1^l\leqslant D_2^l \leqslant\cdots \leqslant D_m^l,\forall l\in[L]$. Let $a_k,a_h\in A$ be two agents, where $1\leqslant k < h \leqslant n$. We only need to prove $D_k^l\leqslant D_h^l,\forall l\in[L]$. We prove by induction. In the base case where $l=1$, $D_k^1\leqslant D_h^1$ obviously holds since agent $a_k$ chooses the job before $a_h$. Now, we have induction hypothesis $D_k^l\leqslant D_h^l$ and we need to prove $D_k^{l+1}\leqslant D_h^{l+1}$. We prove by contradiction and assume that $D_k^{l+1} > D_h^{l+1}$. Let $j_{k}^{l+1},j_{h}^{l+1}$ be the jobs that are selected by agent $a_k,a_h$ in the $(l+1)$-round of \cref{alg:cardinality:p_j:EF1}, respectively. Hence, we have
\begin{align*}
D_k^{l+1}&=\max\{D_k^{l},r_{j_{k}^{l+1}}\}+p_{j_{k}^{l+1}}, \\
D_h^{l+1}&=\max\{D_h^{l},r_{j_{h}^{l+1}}\}+p_{j_{h}^{l+1}}.
\end{align*}
By induction hypothesis $D_k^l\leqslant D_h^l$, we have
$$\max\{D_k^{l},r_{j_h^{l+1}}\}+p_{j_h^{l+1}}\leqslant\max\{D_h^{l},r_{j_h^{l+1}}\}+p_{j_h^{l+1}} \leqslant d_{j_h^{l+1}}.$$
This implies that $j_{h}^{l+1}\in J_k^{l+1}$.
Since 
$$\max\{D_k^{l},r_{j_{h}^{l+1}}\}+p_{j_{h}^{l+1}}\leqslant D_h^{l+1} <D_k^{l+1},$$
we have
$$\max\{D_k^{l},r_{j_{h}^{l+1}}\}+p_{j_{h}^{l+1}}< \max\{D_k^{l},r_{j_{k}^{l+1}}\}+p_{j_{k}^{l+1}}.$$
This implies that $j_{h}^{l+1}$ instead of $j_{k}^{l+1}$ will be chosen by agent $a_k$ in the $(l+1)$-th round of \cref{alg:cardinality:p_j:EF1}. This contradicts our assumption.

We assume that $J_1^{L+1}\ne\emptyset$ and prove that $D_m^l \leqslant D_1^{l+1},\forall l\in[L]$.
We prove $D_m^l\leqslant D_1^{l+1}$ holds for any $1\leqslant l \leqslant L$. Note that $D_r^0=0,\forall r\in[m]$.
We prove by induction.
In the base case where $l=1$, if $D_m^{1}>D_1^{2}$, $a_m$ will choose $j_1^2$ instead of $j_m^1$ in the first round.
Now, we have induction hypothesis $D_m^l\leqslant D_1^{l+1}$ and we need to prove that $D_m^{l+1}\leqslant D_1^{l+2}$ holds.
We prove by contradiction and assume that $D_m^{l+1}> D_1^{l+2}$. Let $j_{m}^{l+1},j_{1}^{l+2}$ be the jobs that are selected by agent $a_m,a_1$ in the $(l+1),(l+2)$-th round of \cref{alg:cardinality:p_j:EF1}, respectively.
Note that in the case where $l=L-1$, there always exists a job $j_{m}^{l+2}$ since $\set{j_{m}^{L+1}}\ne\emptyset$.
Hence, we have
\begin{align*}
D_m^{l+1}&=\max\{D_m^l,r_{j_{m}^{l+1}}\}+p_{j_{m}^{l+1}}, \\
D_1^{l+2}&=\max\{D_1^{l+1},r_{j_{1}^{l+2}}\}+p_{j_{1}^{l+2}}.
\end{align*}
By induction hypothesis $D_m^l\leqslant D_1^{l+1}$, we have
$$\max\{D_m^l,r_{j_{1}^{l+2}}\}+p_{j_{1}^{l+2}} \leqslant \max\{D_1^{l+1},r_{j_{1}^{l+2}}\}+p_{j_{1}^{l+2}} \leqslant d_{j_{1}^{l+2}}.$$
This implies that $j_{1}^{l+2}\in J_m^{l+1}$. Since $$\max\{D_m^l,r_{j_{1}^{l+2}}\}+p_{j_{1}^{l+2}} \leqslant D_1^{l+2} < D_m^{l+1},$$
we have
$$\max\{D_m^l,r_{j_{1}^{l+2}}\}+p_{j_{1}^{l+2}} < \max\{D_m^l,r_{j_{m}^{l+1}}\}+p_{j_{m}^{l+1}}.$$
This implies that $j_{1}^{l+2}$ instead of $j_{m}^{l+1}$ will be chosen by agent $a_m$ in the $(l+1)$-th round of \cref{alg:cardinality:p_j:EF1}. This contradicts our assumption.
\end{proof}

\begin{lemma}
$J_1^l \supseteq J_2^l \supseteq \cdots \supseteq J_m^l, \forall l\in[L+1]$. Moreover, we have $J_m^{l}\supseteq J_1^{l+1},l\in[L]$.
\label{lem:card:p_j:EF1:agents:key2}
\end{lemma}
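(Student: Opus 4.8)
The plan is to derive the lemma directly from the completion-time monotonicity already established in \cref{lem:card:p_j:EF1:agents:key1}, together with the purely structural observation that, in a round-robin schedule, an agent who acts earlier always faces a larger pool of still-unassigned jobs. The starting point is to recall the membership criterion: when $a_i$ computes $J_i^l$ in round $l$, its current completion time is $D_i^{l-1}$ (the value is updated only after $a_i$ actually picks), so a remaining job $j$ belongs to $J_i^l$ precisely when $\max\{D_i^{l-1}, r_j\} + p_j \leqslant d_j$. Hence $J_i^l$ is enlarged both by shrinking $D_i^{l-1}$ and by enlarging the available pool, and the entire proof reduces to checking that, in each comparison, both quantities move in the favourable direction.

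For the first chain $J_1^l \supseteq \cdots \supseteq J_m^l$, I would fix a round $l \in [L+1]$ and two consecutive agents $a_i, a_{i+1}$. Since $a_i$ is processed before $a_{i+1}$ and removes at most one job from $J$ in between, the pool seen by $a_{i+1}$ is contained in the pool seen by $a_i$. Moreover $D_i^{l-1} \leqslant D_{i+1}^{l-1}$: this is trivial for $l=1$ (all $D_r^0 = 0$), and for $l \geqslant 2$ it is the intra-round inequality of \cref{lem:card:p_j:EF1:agents:key1}. Consequently any $j \in J_{i+1}^l$ lies in $a_i$'s pool and satisfies $\max\{D_i^{l-1}, r_j\} + p_j \leqslant \max\{D_{i+1}^{l-1}, r_j\} + p_j \leqslant d_j$, so $j \in J_i^l$.

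For the cross-round inclusion $J_m^l \supseteq J_1^{l+1}$ with $l \in [L]$, the argument is parallel. Between the moment $a_m$ acts in round $l$ and the moment $a_1$ acts in round $l+1$, the only job removed from $J$ is $j_m^l$, which exists because all agents pick in each of the first $L$ rounds; hence the pool of $a_1$ in round $l+1$ is contained in the pool of $a_m$ in round $l$. The completion-time comparison I need is $D_m^{l-1} \leqslant D_1^l$, which is trivial for $l=1$ and, for $2 \leqslant l \leqslant L$, is exactly the inter-round inequality $D_m^{l-1} \leqslant D_1^{(l-1)+1}$ of \cref{lem:card:p_j:EF1:agents:key1}. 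Then every $j \in J_1^{l+1}$ satisfies $\max\{D_m^{l-1}, r_j\} + p_j \leqslant \max\{D_1^l, r_j\} + p_j \leqslant d_j$ and lies in $a_m$'s pool, giving $j \in J_m^l$; the case $J_1^{L+1} = \emptyset$ is vacuous.

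The proof is essentially mechanical once \cref{lem:card:p_j:EF1:agents:key1} is available, so the only genuine hazard is index bookkeeping: keeping straight that $J_i^l$ is governed by $D_i^{l-1}$ rather than $D_i^l$, invoking the correct half (intra-round versus inter-round) of \cref{lem:card:p_j:EF1:agents:key1} in each case, and handling the boundary round $l = L+1$ in the first chain, where some agents may pick nothing but the pool can still only shrink from $a_i$ to $a_{i+1}$, so the inclusion continues to hold.
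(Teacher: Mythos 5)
Your proposal is correct and matches the paper's own proof in essentially every respect: both arguments combine the pool-containment observation (an earlier-acting agent sees a superset of the remaining jobs) with the completion-time monotonicity of \cref{lem:card:p_j:EF1:agents:key1}, applied intra-round for the chain $J_1^l \supseteq \cdots \supseteq J_m^l$ and inter-round for $J_m^{l} \supseteq J_1^{l+1}$, with the same treatment of the boundary cases $l = L+1$ and $J_1^{L+1} = \emptyset$. The only cosmetic differences are that you compare consecutive agents and chain the inclusions while the paper compares arbitrary pairs $a_k, a_h$ directly, and that your reading of the membership condition as $\max\{D_i^{l-1}, r_j\} + p_j \leqslant d_j$ is the intended one despite the reversed inequality written in the paper's pseudocode and set definitions.
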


\begin{proof}
We first prove that $J_1^l \supseteq J_2^l \supseteq \cdots \supseteq J_m^l, \forall l\in[L]$. Let $a_k,a_h\in A$ be two agents, where $1\leqslant k < h \leqslant n$.
We only need to prove $J^l_k \supseteq J^l_h$.
To prove $J^l_h \subseteq J^l_k$, we consider an arbitrary job $j\in J^l_h$ and show that $j\in J^l_k$. Note that $J_k^l,J_h^l\ne\emptyset,\forall l\in[L]$.
Let $\mathcal{J}^k_s,\mathcal{J}^h_s$ be the set of jobs that are already assigned to the agents before agent $a_k$ and $a_h$ select, respectively. Note that $\mathcal{J}^k_s \subseteq \mathcal{J}^h_s$. According to \cref{alg:cardinality:p_j:EF1}, we have
\begin{align*}
    J^l_k&=\set{j\in(J\setminus\mathcal{J}_s^k)|d_j \leqslant \max\{D^{l-1}_k,r_j\}+p_j}, \\
    J^l_h&=\set{j\in(J\setminus\mathcal{J}_s^h)|d_j \leqslant \max\{D^{l-1}_h,r_j\}+p_j}.
\end{align*}
Since $\mathcal{J}^k_s \subseteq \mathcal{J}^h_s$, we have $J\setminus\mathcal{J}_s^k \supseteq J\setminus\mathcal{J}_s^h$. Now we consider an arbitrary job $j\in J_h^l$ and show that $j$ is also a member of $J_k^l$. Since $j\in(J\setminus\mathcal{J}_s^h)$ and $J\setminus\mathcal{J}_s^k \supseteq J\setminus\mathcal{J}_s^h$, we have $j\in(J\setminus\mathcal{J}_s^k)$. Since $j\in J_h^l$, we have
$$d_j\leqslant\max\{D_h^{l-1},r_j\}+p_j.$$
According to \cref{lem:card:p_j:EF1:agents:key1}, $D_h^{l-1}\geqslant D_k^{l-1}$, we have $$d_j\leqslant\max\{D_k^{l-1},r_j\}+p_j,$$
which implies that $j\in J_k^l$. Now we consider the case where $l=L+1$. Note that in the $(L+1)$-th round of \cref{alg:cardinality:p_j:EF1}, $\exists i\in[m]$ such that $J_i^{L+1}=\emptyset$. Now we prove that $J_k^{L+1}\supseteq J_h^{L+1}$. If $J_h^{L+1}=\emptyset$, then we are done. Hence, we assume that $J_h^{L+1}\ne\emptyset$. By a similar argument, a job $j\in J_h^{L+1}$ has the property $d_j\leqslant\max\{D_h^{L},r_j\}+p_j$. Then we have $d_j\leqslant\max\{D_k^{L},r_j\}+p_j$ holds since $D_k^{L}\leqslant D_h^{L}$. Then we have $j\in J_k^{L+1}$.

Now, we prove that $J_m^{l}\supseteq J_1^{l+1},\forall l\in[L]$. Note that it is possible that $J_1^{L+1}=\emptyset$. In this case $J_m^{L}\supseteq J_1^{L+1}$ trivially holds. Hence, we assume that $J_1^{L+1}\ne\emptyset$. To prove  $J_m^{l}\supseteq J_1^{l+1}$, we consider an arbitrary job $j\in J_1^{l+1}$ and show that $j\in J_m^{l}$. Let $\mathcal{J}^m_s,\mathcal{J}^1_s$ be the set of jobs that are already assigned to the agents before agent $a_m$ and $a_1$ select in the $l,(l+1)$-th round of \cref{alg:cardinality:p_j:EF1}, respectively. Note that $\mathcal{J}^m_s\subseteq\mathcal{J}^1_s$. According to \cref{alg:cardinality:p_j:EF1}, we have
\begin{align*}
J_m^l&=\set{j\in(J\setminus\mathcal{J}^m_s)|d_j\leqslant\max\{D_m^{l-1},r_j\}+p_j}, \\
J_1^{l+1}&=\set{j\in(J\setminus\mathcal{J}^1_s)|d_j\leqslant\max\{D_1^{l},r_j\}+p_j}.
\end{align*}
Since $\mathcal{J}^m_s\subseteq\mathcal{J}^1_s$, we have $J\setminus\mathcal{J}^m_s\supseteq J\setminus\mathcal{J}^1_s$. Now we consider an arbitrary job $j\in J_1^{l+1}$ and show that $j\in J_m^{l}$. Since $j\in J_1^{l+1}$, we have
$$d_j\leqslant \max\{D_1^{l},r_j\}+p_j.$$
According to \cref{lem:card:p_j:EF1:agents:key1}, we have $D_m^{l-1} \leqslant D_1^{l}$. Then, we have $$\max\{D_1^{l},r_j\}+p_j \geqslant \max\{D_m^{l-1},r_j\}+p_j.$$
Hence, we have $d_j\leqslant \max\{D_m^{l-1},r_j\}+p_j$ which implies that $j\in J_m^{l}$.
\end{proof}


\begin{lemma}
$|X_i|-|X_k|\in\set{-1,0,1},\forall i,k\in[m]$.
\label{lem:card:p_j:EF1:agents}
\end{lemma}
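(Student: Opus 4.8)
The plan is to reproduce the ``staircase'' argument of the rigid-case proof \cref{lem:card:p_j=d_j-r_j:EF1:agents}, but now driven by the two monotonicity lemmas \cref{lem:card:p_j:EF1:agents:key1} and \cref{lem:card:p_j:EF1:agents:key2} instead of their rigid counterparts. The target is to show that the round-by-round allocation is staircase-shaped: there is a threshold index $f$ such that every agent $a_i$ with $i<f$ ends with exactly $L+1$ jobs and every agent $a_i$ with $i\geqslant f$ ends with exactly $L$ jobs. Once this shape is established, $|X_i|-|X_k|\in\set{-1,0,1}$ is immediate. Throughout I rely on one elementary observation about \cref{alg:cardinality:p_j:EF1}: if $a_i$ is skipped in round $l$ (i.e.\ $J_i^{l}=\emptyset$), then its horizon $D_i$ is frozen while the remaining job pool only shrinks, so $J_i^{l'}\subseteq J_i^{l}=\emptyset$ for every later round $l'$; hence an agent that is ever skipped is skipped in all subsequent rounds.

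First I would fix notation as in the lemma set-up: rounds $1,\dots,L$ serve every agent, and in round $L+1$ at least one agent is skipped; let $f$ be the smallest index with $J_f^{L+1}=\emptyset$. Using the within-round nesting $J_1^{L+1}\supseteq\cdots\supseteq J_m^{L+1}$ from \cref{lem:card:p_j:EF1:agents:key2}, the emptiness of $J_f^{L+1}$ forces $J_k^{L+1}=\emptyset$ for all $k\geqslant f$; in particular $a_m$ is skipped in round $L+1$, while $a_1,\dots,a_{f-1}$ each still receive a job.

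The crux is to certify that \emph{no} agent receives anything from round $L+2$ onward, which pins down all bundle sizes. If $f=1$ then every agent is already skipped in round $L+1$, all bundles have size $L$, and the claim is trivial; so I would assume $f>1$, which guarantees $J_1^{L+1}\neq\emptyset$ and therefore, via the boundary clause of \cref{lem:card:p_j:EF1:agents:key1}, $D_m^{L}\leqslant D_1^{L+1}$. Because $a_f,\dots,a_m$ all take nothing in round $L+1$, the job pool available to $a_1$ at the start of round $L+2$ coincides with the pool $a_m$ faced when it was skipped; every such remaining job is infeasible for $a_m$ at horizon $D_m^{L}$, and since $D_m^{L}\leqslant D_1^{L+1}$ pushes the completion time only later, each such job is a fortiori infeasible for $a_1$ at horizon $D_1^{L+1}$. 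Hence $J_1^{L+2}=\emptyset$, the within-round nesting forces $J_k^{L+2}=\emptyset$ for all $k$, and the ``skipped forever'' observation propagates this to every round beyond $L+2$.

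Combining the pieces, agents $a_1,\dots,a_{f-1}$ are served in rounds $1$ through $L+1$ and then stop, giving $|X_i|=L+1$, whereas agents $a_f,\dots,a_m$ are served only in rounds $1$ through $L$, giving $|X_i|=L$; thus any two bundle sizes differ by at most one. I expect the main obstacle to be the bookkeeping at the $L+1/L+2$ boundary: one must verify that the two relevant ``remaining job'' pools really coincide (so that infeasibility for $a_m$ transfers verbatim to $a_1$), and that the precondition $J_1^{L+1}\neq\emptyset$ needed to invoke $D_m^{L}\leqslant D_1^{L+1}$ is exactly the regime $f>1$, with $f=1$ dispatched separately.
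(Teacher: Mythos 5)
Your proposal is correct and takes essentially the same route as the paper's proof: locate the first skipped agent $a_f$ in round $L+1$, apply the nesting of \cref{lem:card:p_j:EF1:agents:key2} to conclude every $a_k$ with $k\geqslant f$ is also skipped, argue that nothing is assigned from round $L+2$ onward, and read off the staircase sizes $|X_i|=L+1$ for $i<f$ and $|X_i|=L$ for $i\geqslant f$. You are in fact more careful than the paper, which merely asserts $J_i^{L'}=\emptyset$ for all $L'>L+1$ without justification; the only caveat is that your appeal to within-round nesting at round $L+2$ sits one round beyond the stated range of \cref{lem:card:p_j:EF1:agents:key2}, but the extension is routine because agents $a_1,\dots,a_{f-1}$ all pick in round $L+1$, so the horizon-monotonicity induction of \cref{lem:card:p_j:EF1:agents:key1} (and hence the nesting argument among those agents) carries over verbatim.
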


\begin{proof}
We consider the $(L+1)$-th round of \cref{alg:cardinality:p_j:EF1} in which there exists an agent $a_i\in A$ such that $a_i$ does not choose any jobs for the first time. Note that there may exist many agents that do not choose any job for the first time in $(L+1)$-th round. We assume that $a_f$ is the first agent that chooses nothing in the $(L+1)$-th round. Since $a_f$ chooses nothing, we have $J_f^{L+1}=\emptyset$. According to \cref{lem:card:p_j:EF1:agents:key2}, we have $J_i^{L+1}=\emptyset,\forall f\leqslant i \leqslant n $. Moreover, we have $J_i^{L'}=\emptyset,\forall i\in[m]$ and $L+1<L'$. Therefore, we have
\begin{equation*}
|X_i|=
\begin{cases}
L, & \forall f\leqslant i \leqslant m; \\
L+1, & \forall 1\leqslant i < f.
\end{cases}
\end{equation*}
This implies that \cref{lem:card:p_j:EF1:agents} holds.
\end{proof}

Now we are ready to prove \cref{thm:EF1+IO:cardi}.

\begin{proof} [Proof of \cref{thm:EF1+IO:cardi}]
According to \cref{lem:card:p_j:EF1:agents} and \cref{lem:card:p_j:EF1:charity}, we know that \cref{alg:cardinality:p_j:EF1} will return a feasible schedule that is simultaneously EF1 and 1/2-IO.

Now we bound the running time. According to \cref{lem:card:p_j:EF1:agents} and \cref{lem:card:p_j:EF1:charity}, we know that line \ref{line:card:p_j:6}-\ref{line:card:p_j:20} will be run at most $\lceil \frac{n}{m} \rceil$ times, where $n$ is the number of jobs and $m$ is the number of agents. In each while loop, line \ref{line:card:p_j:7}-\ref{line:card:p_j:19} will be run at most $m$ times. In each for loop, line \ref{line:card:p_j:9}-\ref{line:card:p_j:13} will be run at most $n$ times and the running time of line \ref{line:card:p_j:14} can be bounded by $O(n)$. Hence, we have the running time of \cref{alg:cardinality:p_j:EF1} $O(\lceil \frac{n}{m} \rceil \cdot m \cdot (n^2+n))=O(mn^3)$.
\end{proof}

\section{Experiment}
\label{sec:experiment}

We now empirically test the performance of \cref{alg:mms:implement} when jobs are rigid, comparing it against a simple Round-Robin algorithm.
In this simple Round-Robin algorithm, all jobs are sorted by their deadlines in non-decreasing order.
Then every agent picks a job in round-robin manner.
Finally, every agent computes the compatible intervals with the maximum weight and all the remaining jobs will be assigned to charity.
The formal description can be found in \cref{alg:exper:RR} with $J'=J$ and $A'=A$.
For the experiments, we have implemented both \cref{alg:mms:implement} and the above round-robin algorithm.

\begin{algorithm}[htb]
\caption{\hspace{-2pt}{\bf .} Round-Robin (RR)}
\label{alg:exper:RR}
\begin{algorithmic}[1]
\REQUIRE Agent set $A'$ and job set $J'$.
\ENSURE EF1 schedule $\fX=(X_1,\cdots,X_{|A'|})$
\STATE Sort all jobs by their deadline in non-decreasing order.
\STATE $X_1=X_2=\cdots=X_{|A'|}=\emptyset$.
\STATE $i=1,k=1$. // The index.
\FORALL {$j_k\in J'$}
    \FORALL {$a_i\in A'$}
    \IF {$k \mod |A'| = i$}
        \STATE $X_i=X_i\cup\set{j_k}$.
    \ENDIF
    \ENDFOR
\ENDFOR
\STATE $i=1$. // Reset the index
\FORALL{$X_i$}
    \STATE Let $X_i'\subseteq X_i$ be the compatible job set with the maximum weight for agent $a_i$.
    \STATE $X_i=X_i'$.
\ENDFOR
\STATE $X_0=J\setminus\bigcup_{i\in[|A'|]}X_i$.
\end{algorithmic}
\end{algorithm}

We run our experiments on three job sets with different sizes: 100 (\cref{fig:experiment} (a)), 500 (\cref{fig:experiment} (b)) and 1000 (\cref{fig:experiment} (c)).
The release time and deadline of each job is uniformly randomly sampled from the interval [0,50].
For each job set, we further set up three subgroups according to the agents' utility of every job: (i) the utility gain is sampled uniformly randomly from [1,20]; (ii) the utility gain follows Poisson Distribution with means 50; (iii) the utility gain follows Normal Distribution with means 25 and variance 10.
For each subgroup, we further set up three subsubgroups according to the size of agent set: 5, 10 and 15.

In total, our experiment contains $3\times 3 \times 3$ groups.
For each group, we run \cref{alg:mms:implement} and Round-Robin algorithm on 1000 different instances.
Noted that \cref{alg:mms:implement} does not have a good performance when the number of jobs is much larger than the number of agents, e.g., the groups with 5 agents (U.1, P.1, N.1) in \cref{fig:experiment}. 
The reason \cref{alg:mms:implement} performances unsatisfactorily is that \cref{alg:mms:implement} stops at the threshold while there are a lot of remaining jobs.
To fix this problem, we add the Round-Robin procedure at the end of \cref{alg:mms:implement}, i.e., if there exist some unallocated jobs at the end of \cref{alg:mms:implement}, we run Round-Robin algorithm on the remaining job set.
Finally, every agent computes the maximum compatible job set from the union of the job set returned by \cref{alg:mms:implement} and Round-Robin algorithm.
The formal description can be found in \cref{alg:exper:BAG+}.
Let BAG+ be the updated version of \cref{alg:mms:implement} and BAG be the original one. 
With the help of the Round-Robin procedure, the performance of \cref{alg:mms:implement} is better than the Round-Robin algorithm in all groups.
Note that BAG+ does not have better theoretical performance than BAG.
We give a hard instance to prove above argument in the appendix.

\begin{algorithm}[htb]
\caption{\hspace{-2pt}{\bf .}  Matching-BagFilling + Round-Robin (BAG+)}
\label{alg:exper:BAG+}
\begin{algorithmic}[1]
\REQUIRE Agent set $A$ and job set $J$.
\ENSURE EF1 schedule $\fX=(X_1,\cdots,X_{m})$
\STATE Run \cref{alg:mms:implement}.
\STATE Let $\fX=(X_1,\cdots,X_m)$ be the schedule returned by \cref{alg:mms:implement}.
\STATE Let $X_0=J\setminus\bigcup_{i\in[m]}X_i$.
\IF{$X_0\ne \emptyset$}
    \STATE Run \cref{alg:exper:RR} with job set $X_0$ and agent set $A$.
    \STATE Let $\fX'=(X_1',\cdots,X_m')$ be the schedule returned by \cref{alg:exper:RR}.
\ENDIF
\STATE $i=1$. // The index.
\FORALL{$X_i$}
    \STATE Let $X_i''\subseteq (X_i\cup X_i')$ be the compatible job set with the maximum weight for agent $a_i$.
    \STATE $X_i=X_i''$.
\ENDFOR
\STATE $X_0=J\setminus\bigcup_{i\in[m]}X_i$.
\end{algorithmic}
\end{algorithm}

\begin{figure}[htbp]
\setcounter{figure}{0}
\centering
\subfigure[Groups with $|J|=100$.]{
\begin{minipage}[t]{0.33\linewidth}
\centering
\includegraphics[width=5cm,height=5cm]{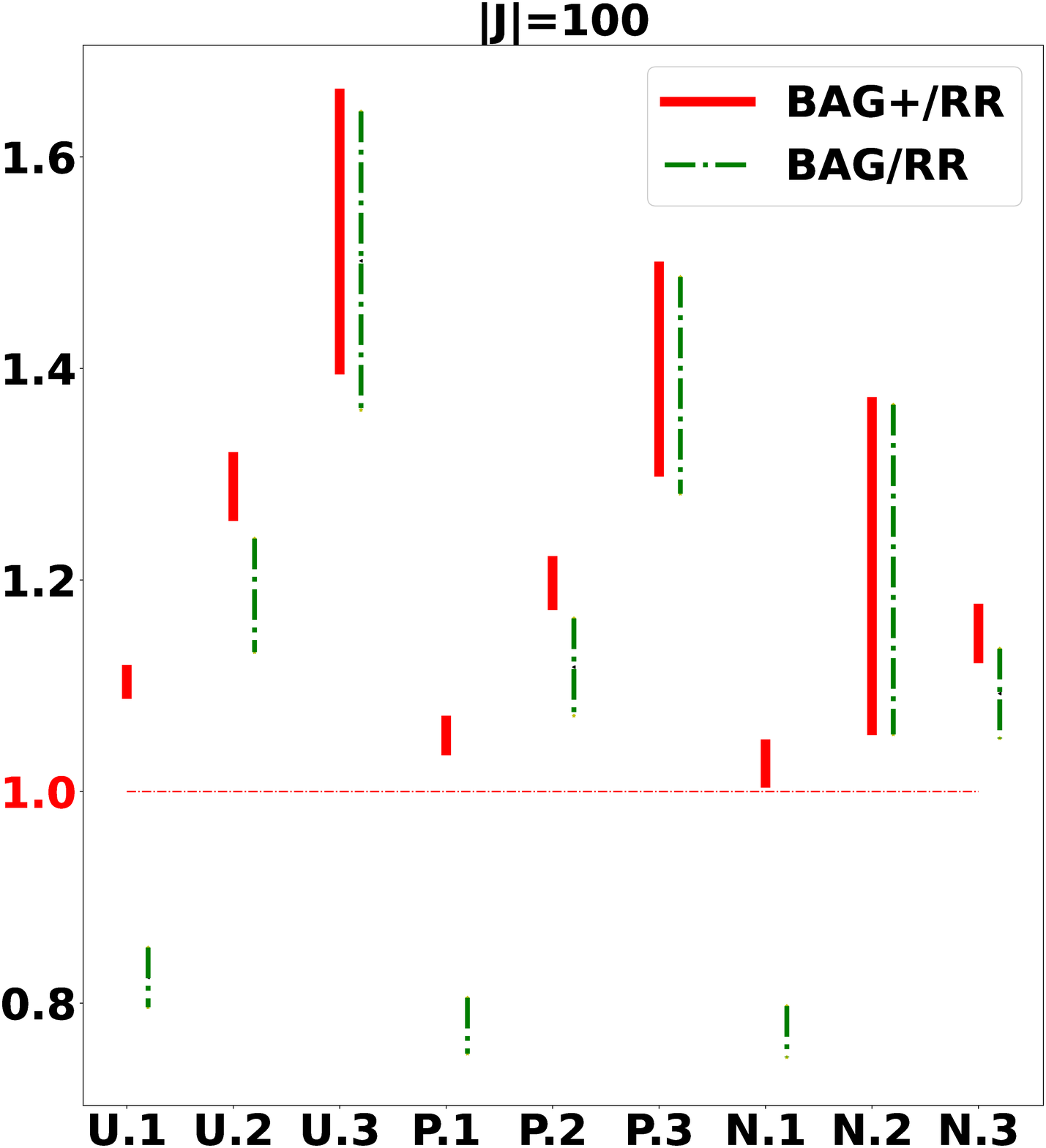}
\end{minipage}%
}%
\subfigure[Groups with $|J|=500$.]{
\begin{minipage}[t]{0.33\linewidth}
\centering
\includegraphics[width=5cm,height=5cm]{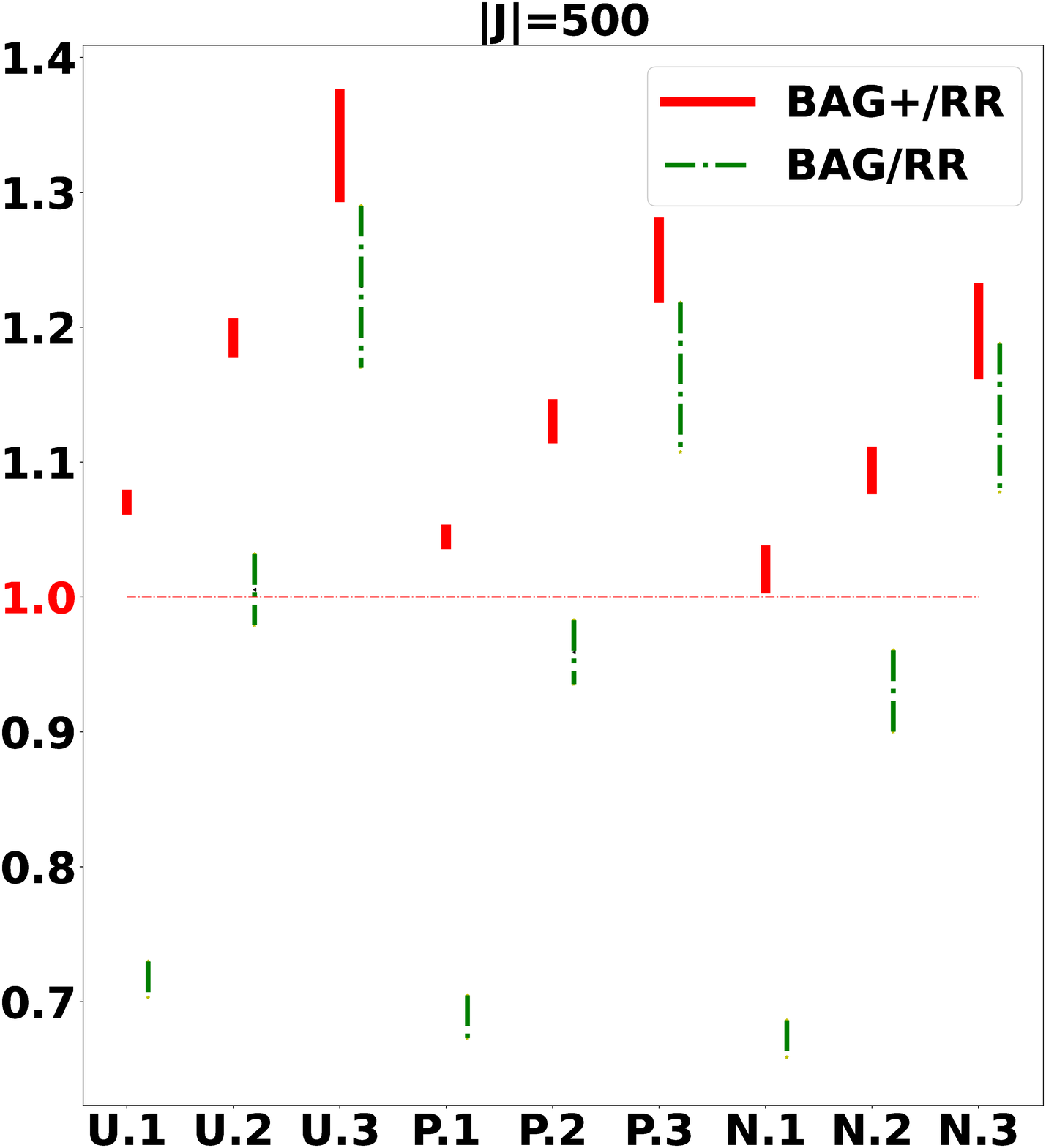}
\end{minipage}%
}%
\subfigure[Groups with $|J|=1000$.]{
\begin{minipage}[t]{0.33\linewidth}
\centering
\includegraphics[width=5cm,height=5cm]{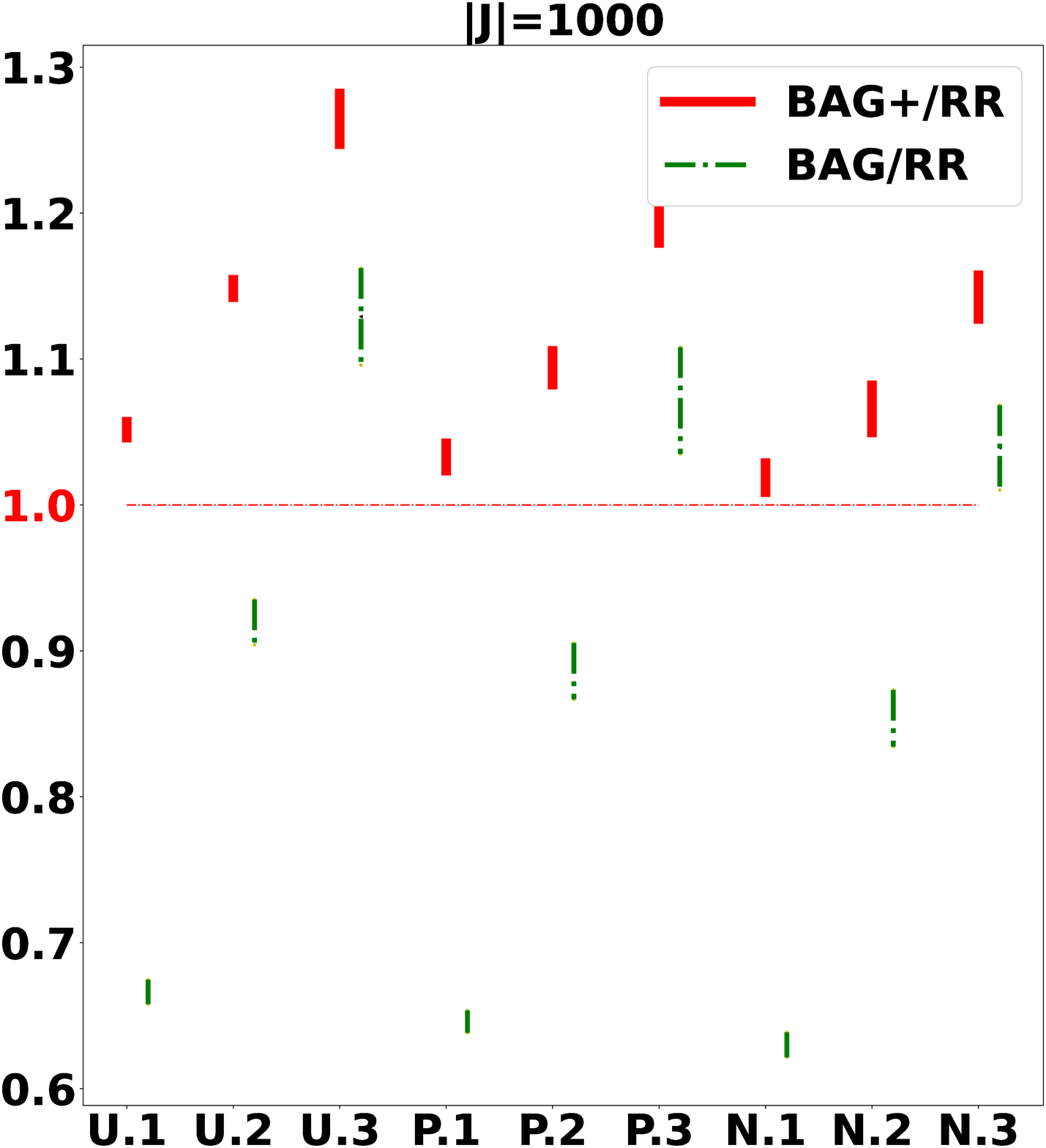}
\end{minipage}
}%
\centering
\caption{The results of the evaluation of \cref{alg:mms:implement}, \cref{alg:exper:BAG+} and \cref{alg:exper:RR} on different settings.
Every subfigure represents the groups with same job set size.
Every notation in x-axis represents a setting.
Notations "U.", "P.", "N.", represent the utility gain follows the Uniform, Poisson, Normal Distribution, respectively.
Notations "1", "2", "3", represent the number of agents is 5, 10, 15, respectively.
The top and bottom point of every solid red interval represent the maximum and minimum value of BAG+/RR among all the agents, where BAG+/RR is the ratio of total gain that the agents receive when we run BAG+ and RR algorithm.
The top and bottom point of every dot-dashed green interval represent the maximum and minimum value of BAG/RR among all the agents, where BAG/RR is the ratio of total gain that the agents receive when we run BAG and RR algorithm.}
\label{fig:experiment}
\end{figure}

According to \cref{fig:experiment}, it is not hard to see that \cref{alg:mms:implement} is not able to achieve a good performance when the number of jobs is much larger than the number of agents.
When the size of the job set is 100, \cref{alg:mms:implement} performs worse than Round-Robin only in the setting where the agent set is 5 (see \cref{fig:experiment} (a), only U.1, P.1, N.1's green interval is behind 1.0).
When we increase the number of jobs to 500, the situation that \cref{alg:mms:implement} is worse than Round-Robin begins to appear at $|A|=10$ (see \cref{fig:experiment} (b), part of green interval of U.2 begins to appear behind 1.0).
When we further increase the number of jobs to 1000, \cref{alg:mms:implement} performs better than Round-Robin only in the setting where there are 15 agents (see \cref{fig:experiment} (c), only U.3, P.3, N.3's green interval is above 1.0).

The reason is that \cref{alg:mms:implement} stops at the case where every agent gets the threshold but there are a lot of remaining jobs.
We can fix this issue by adding an extra round-robin procedure to allocate the remaining jobs, and thus yield BAG+ algorithm.
According to \cref{fig:experiment}, we can find that the performance of BAG+ is better than Round-Robin in all settings as all red intervals are above 1.0.
Thus, BAG+ algorithm can achieve a good performance in practices and guarantee the approximation in the worst case.

\section{Conclusion and Future Directions}
\label{sec:conclusion}


In this work, we studied the fair scheduling problem for time-dependent resources, and designed constant approximation algorithms for MMS, EF1\&PO and EF1\&IO schedules. 
There are many open problems and future directions. 
An immediate direction is to improve our approximation ratios and investigate the limit of approximation algorithms for different settings.
It is also interesting to impose other efficiency criteria on EF1 schedules, such as computing an EF1 schedule that maximizes social welfare. 
In this work, we have assumed the jobs are resources that bring utility to agents, and leave the case when jobs are chores for future study.
Finally, it is of both theoretical interest and practical importance to consider the online setting when jobs arrive dynamically and the strategic setting when agents' valuations are private information.

\newpage

\clearpage

\bibliographystyle{named}
\bibliography{fairschedule}

\newpage


\newpage

\appendix

\section*{Appendix}

\section{Missing Materials in \cref{sec:preliminaries}}
\label{app:pre}

A set function $f:2^V \to \R$ defined on $V$ is called {\em fractionally subadditive} (XOS) if there is a finite set of additive functions $\set{f_1,\cdots,f_w}$ such that $f(S) = \max_{1\leqslant i \leqslant m}f_i(S)$ for any $S \subseteq V$.

\begin{lemma}
IS functions are XOS. 
\end{lemma}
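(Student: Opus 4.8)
The plan is to exhibit the additive functions explicitly, indexed by the feasible subsets of the job set, and then verify that their pointwise maximum recovers the IS function. Fix an agent with IS valuation $u$, and let $w_k = u(j_k) \geq 0$ be the value of each job. For every feasible set $T \subseteq J$, define $f_T \colon 2^J \to \R$ by $f_T(S) = \sum_{j_k \in S \cap T} w_k$; equivalently, $f_T$ is the linear function assigning weight $w_k$ to each job in $T$ and weight $0$ to all other jobs, so it is genuinely additive. Since $J$ is finite, there are only finitely many feasible sets $T$, so $\{f_T : T \text{ feasible}\}$ is a finite family of additive functions, exactly as the XOS definition requires.

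First I would establish $u(S) \geq \max_{T} f_T(S)$. The key observation is that feasibility is downward closed: if $T$ can be scheduled on a single machine without overlap, then the same schedule restricted to any subset still has no overlap, so every subset of $T$ is feasible. Hence for any feasible $T$, the set $S \cap T$ is a feasible subset of $S$, and by the definition of the IS function as a maximum over feasible subsets, $u(S) \geq \sum_{j_k \in S \cap T} w_k = f_T(S)$. Taking the maximum over $T$ gives the inequality.

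For the reverse direction I would use the maximizer itself. Let $S^\star \subseteq S$ be a feasible subset attaining $u(S) = \sum_{j_k \in S^\star} w_k$. Choosing $T = S^\star$ (which is feasible, hence a legitimate index) and using $S \cap S^\star = S^\star$, I obtain $f_{S^\star}(S) = \sum_{j_k \in S^\star} w_k = u(S)$, so $\max_T f_T(S) \geq u(S)$. Combining the two inequalities yields $u(S) = \max_{T \text{ feasible}} f_T(S)$ for every $S \subseteq J$, which is precisely the XOS representation.

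There is essentially no hard step here: the only structural fact invoked is the downward closure of feasibility, and that is immediate from the definition of a feasible job set. The remaining content is bookkeeping, namely checking that each $f_T$ is additive and that the value-maximizing feasible subset supplies a tight additive clause. I therefore expect the proof to be short, with the (minor) point to flag being the explicit verification that restricting a non-overlapping schedule to a subset preserves feasibility.
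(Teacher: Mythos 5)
Your proof is correct and follows essentially the same route as the paper: both define, for each feasible set $T$, the additive function assigning weight $u(j_k)$ to jobs in $T$ and $0$ elsewhere, and both rely on the downward closure of feasibility to show the pointwise maximum equals $u$. Your two-inequality presentation is just a slightly more explicit rendering of the paper's chain of equalities.
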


\begin{proof}
Let $u$ be an IS function defined on job set $J=\set{j_1,\cdots,j_n}$ with individual utility $(v_1 = u(j_1),\cdots, v_n=u(j_n))$. 
To show $u$ is XOS, it suffices to define a finite set of additive functions on $J$.
For each feasible job set $T \subseteq J$, define additive function $f_T$ such that $f_T(j_i) = v_i$ if $j_i\in T$ and $f_T(j_i) = 0$ otherwise.
Therefore, for any $S \subseteq T$, 
\[
u(S) = \max_{T\subseteq S: T\text{ is feasible}} \sum_{j_i \in T}v_i = \max_{T\subseteq S: T \text{ is feasible}} f_T(T) = \max_{T\subseteq S: T \text{ is feasible}} f_T(S) = \max_{T\text{ is feasible}} f_T(S),
\]
where the last equality is because any subset of a feasible job set is also feasible.
Thus $u$ is XOS.
\end{proof}

\section{Missing Materials for MMS Scheduling in \cref{sec:mms}}
\label{app:mms}

{\bf Lemma \ref{lem:mms:bag:gamma} (restate).}
{\em For any $a_i$, if $\gamma_i \leqslant \MMS_i$, \cref{alg:mms:exist} ensures that $u_i(X_i) \geqslant \frac{\beta}{\beta + 2} \gamma_i$, regardless of $\gamma_{-i}$. }

\begin{proof}[Proof of Lemma \ref{lem:mms:bag:gamma}]
Note that the algorithm only ensures that agent $a_i$ with $\gamma_i\leqslant\MMS_i$ can obtain a bag but not everyone.
This is natural as if for some $a_j \neq a_i$ and $\gamma_j$ is super large compared with $\MMS_j$, $a_j$ will never stop the algorithm and get a bag.

Recall that we can assume that there is no large job in the instance, i.e., $u_i(j_k)\leqslant \frac{\beta}{\beta+2}\cdot \gamma_i$, where $0\leqslant\beta\leqslant 1$.
Observe that if agent $a_i$ gets assigned a bag, then her true utility satisfies:
$$
u_i(X_i)=\sum_{j_l\in X_i}u_i(j_l)=u_i'(X_i)\geqslant \frac{\beta}{\beta+2}\gamma_i.
$$
The above inequality also holds no matter whether $\gamma_i\leqslant\MMS_i$ or not.
Similar as the proof of \cref{lem:mms:bag:small}, the core is to prove that $a_i$ can be guaranteed to obtain a bag as long as $\gamma_i\leqslant \MMS_i$.
We consider the $R$-th round of the outer while loop of \cref{alg:mms:implement} (line \ref{line:mms:implement:2}-\ref{line:mms:implement:5}) in which the value of $\gamma_i$ is decreased below $\MMS_i$.
In the $R$-th round of \cref{alg:mms:implement} (line \ref{line:mms:implement:2}-\ref{line:mms:implement:5}), we assume that the order of the agents that break the while loop of \cref{alg:mms:bag} (line \ref{step:mms:bag:1}-\ref{step:mms:bag:10}) is $\set{a_1,\cdots,a_{i-1},a_{i},\cdots}$.
It suffices to prove that at the beginning of the $i$-th while loop of \cref{alg:mms:bag} (line \ref{step:mms:bag:1}-\ref{step:mms:bag:10}), there are sufficiently many remaining jobs in $J'$ for the agent $a_i$, i.e.,
$$
u_i'(J') \geqslant \frac{\beta}{\beta+2}\cdot \gamma_i,\forall a_i\in A'.
$$
Similar as the proof of \cref{lem:mms:bag:small}, we prove the following stronger claim.
Given \cref{cla:mms:gamma} and the $\beta$-approximation of $u_i'$, we have $u_i'(X_k'\cap J')\geqslant \frac{\beta}{\beta+2}\cdot \gamma_i$
Therefore \cref{lem:mms:bag:gamma} holds.
\end{proof}



\begin{claim}
For any $a_i\in A'$ with $\gamma_i\leqslant\MMS_i$, let $\fX'=\set{X_1',\cdots,X_m'}$ be a feasible MMS schedule for $a_i$. Then, there exists $k\in[m]$ such that $u_i(X_k'\cap J')\geqslant \frac{1}{\beta+2}\cdot \gamma_i$, where $\gamma_i\leqslant\MMS_i$.
\label{cla:mms:gamma}
\end{claim}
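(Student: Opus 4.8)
The plan is to mirror the contradiction argument used for \cref{claim:mms:a-i} in the proof of \cref{lem:mms:bag:small}, since the only structural change is that the hypothesis $\gamma_i \leqslant \MMS_i$ replaces the equality $\gamma_i = \MMS_i$, and this inequality points in exactly the direction the argument needs. First I would suppose, towards a contradiction, that $u_i(X_k' \cap J') < \frac{1}{\beta+2} \cdot \gamma_i$ for every $k \in [m]$. Because $\fX' = (X_1', \cdots, X_m')$ is a feasible MMS schedule for $a_i$ and $\gamma_i \leqslant \MMS_i$, each bundle satisfies $u_i(X_k') \geqslant \MMS_i \geqslant \gamma_i$, so that $\sum_{k \in [m]} u_i(X_k') \geqslant m \cdot \gamma_i$. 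This plays the role of \cref{eq:mms:bag:2}, and it is the one place where the relaxed hypothesis is invoked.

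Next I would freeze the analysis at the round $R$ of \cref{alg:mms:implement} in which $\gamma_i$ first drops below $\MMS_i$, and examine the execution of \cref{alg:mms:bag} on the resulting thresholds up to the point where $a_i$ is about to be served. Let $(X_r)_{a_r \in A \setminus A'}$ be the bundles assigned before $a_i$, and for each such $a_r$ let $j_{l_r}$ be the last job added to $a_r$'s bag. The key reuse is the bound $u_i(X_r) < \frac{\beta+1}{\beta+2} \cdot \gamma_i$: since $a_i$ had not yet reached its own threshold when $j_{l_r}$ was inserted, $u_i'(X_r \setminus \set{j_{l_r}}) < \frac{\beta}{\beta+2} \cdot \gamma_i$, whence $u_i(X_r \setminus \set{j_{l_r}}) \leqslant \frac{1}{\beta+2} \cdot \gamma_i$ by the $\beta$-approximation, and adding the small-job bound $u_i(j_{l_r}) \leqslant \frac{\beta}{\beta+2} \cdot \gamma_i$ yields the claim. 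This is precisely \cref{eq:mms:bag:3}, and nothing in its derivation used the equality $\gamma_i = \MMS_i$.

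Finally I would decompose $\sum_{k} u_i(X_k') = \sum_k u_i(X_k' \cap J') + \sum_{a_r \in A\setminus A'} \sum_k u_i(X_k' \cap X_r)$, using that each $X_k'$ is feasible so $u_i$ is additive on it and on its subsets. The disjointness of the $X_k'$'s together with additivity on the feasible $X_r$ gives $\sum_k u_i(X_k' \cap X_r) \leqslant u_i(X_r)$, and combining the per-bundle assumption with the bound on $u_i(X_r)$ produces $\sum_k u_i(X_k') < m \cdot \frac{1}{\beta+2}\gamma_i + (m - |A'|) \cdot \frac{\beta+1}{\beta+2}\gamma_i < m \cdot \gamma_i$, contradicting the lower bound from the first step. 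The hard part is not any new calculation but the bookkeeping of \emph{which} round and \emph{which} prefix of the bag-filling order to freeze the analysis at: I must argue that at the moment $\gamma_i$ becomes $\leqslant \MMS_i$, agent $a_i$ is still unserved in that run and every earlier bundle $X_r$ was closed while $a_i$ remained below threshold, so that the inequality $u_i'(X_r \setminus \set{j_{l_r}}) < \frac{\beta}{\beta+2}\gamma_i$ is legitimately available for $a_i$ irrespective of the arbitrary values of $\gamma_{-i}$.
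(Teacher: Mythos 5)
Your proposal is correct and takes essentially the same approach as the paper: both rest on the two key inequalities $\sum_{k}u_i(X_k')\geqslant m\cdot\gamma_i$ (the only place $\gamma_i\leqslant\MMS_i$ is used) and $u_i(X_r)<\frac{\beta+1}{\beta+2}\gamma_i$ for every bundle closed while $a_i$ was below threshold, combined with the disjointness decomposition over $J'$ and the previously assigned bundles. The only difference is presentational — the paper's appendix phrases the final step as a direct averaging/pigeonhole count while you run the contradiction form of \cref{claim:mms:a-i}; the arithmetic is identical.
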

\begin{proof}
We consider an arbitrary agent $a_i$.
Since $\fX'=(X_1',X_2',\cdots,X_m')$ is a feasible MMS schedule for $a_i$, we have $u_i(X_k')\geqslant\MMS_i\geqslant\gamma_i,\forall k\in[m]$ and therefore
\begin{equation}
    \sum_{k=1}^{m}u_i(X_k')\geqslant m\cdot\MMS_i\geqslant m\cdot\gamma_i.
    \label{equ:mms:gamma:1}
\end{equation}
Same as the proof of \cref{lem:mms:bag:small}, the key idea of the proof is to show that agent $a_i$ values the bundles that are taken by the agents before $a_i$ less than $\frac{\beta+1}{\beta+2}\cdot \gamma_i$, i.e.,
\begin{equation}
    u_i(X_r) < \frac{\beta+1}{\beta+2}\cdot\gamma_i,\forall r\in[i-1].
    \label{equ:mms:gamma:2}
\end{equation}
We consider an arbitrary bundle that is taken by agent $a_r,r\in[i-1]$ and assume that job $j_r$ is the last job added to the Bag.
Since $a_i$ did not break the while loop, we have $u_i'(X_r\setminus\set{j_r})<\frac{\beta}{\beta+2}\cdot\gamma_i$.
This implies that $u_i(X_r\setminus\set{j_r})\leqslant\frac{1}{\beta+2}\cdot\gamma_i$.
Since all jobs are small, i.e., $u_i(j_r)\leqslant \frac{\beta}{\beta+2}\cdot\gamma_i$, we have
$$
u_i(X_r)=u_i(X_r\setminus\set{j_r})+u_i(j_r)<\frac{\beta+1}{\beta+2}\cdot \gamma_i.
$$
Therefore, \cref{equ:mms:gamma:2} holds.
To help understand the following proof, an example is shown in \cref{fig:mms}.
Every rectangle in \cref{fig:mms} represents a job in $J$.
The area of every rectangle $j_l$ in \cref{fig:mms} represents the value of $u_i(j_l)$.
The non-white rectangles represent the jobs that are assigned to some agents in $\set{a_1,\cdots,a_{i-1}}$.
According to \cref{equ:mms:gamma:2}, the total area of non-white rectangles in \cref{fig:mms} is at most $\frac{(\beta+1)(i-1)}{\beta+2}\gamma_i$, i.e., $\sum_{r=1}^{i-1}u_i(X_r)<\frac{(\beta+1)(i-1)}{\beta+2}\gamma_i$.
According to \cref{equ:mms:gamma:1}, the total area of rectangles in \cref{fig:mms} is at least $m\gamma_i$.
Therefore, the total area of white rectangles in $\set{X_1',\cdots,X_m'}$ is at least $m\gamma_i-\frac{(\beta+1)(i-1)}{\beta+2}\gamma_i$, i.e., 
\begin{equation}
   \sum_{r=1}^{m}u_i(X_r'\setminus\bigcup_{l\in[i-1]}X_l)>m\gamma_i-\frac{(\beta+1)(i-1)}{\beta+2}\gamma_i \geqslant \frac{m+\beta+1}{\beta+2}\gamma_i, 
   \label{equ:mms:gamma:3}
\end{equation}
where the last inequality is due to $i\leqslant m$.
\begin{figure}[htb]
    \centering
    \includegraphics[height=7cm]{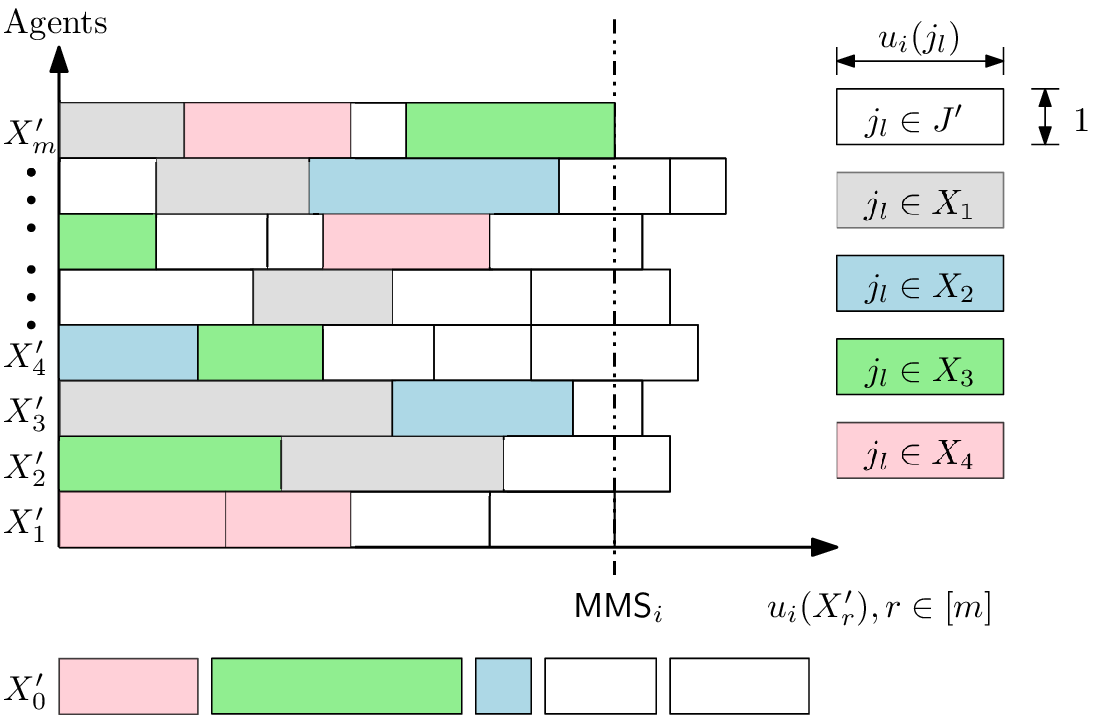}
    \caption{Illustration of \cref{cla:mms:gamma}. The schedule is the feasible schedule $\fX'$ which implies that job set $X_r'$ is a feasible for all $r\in[m]$. Every rectangle represents a job. The width of rectangle $j_l$ is the value of $u_i(j_l)$ while the height is 1. The area of rectangle $j_l$ is also the value of $u_i(j_l)$. The four agents $a_1,a_2,a_3,a_4\in\set{a_1,\cdots,a_{i-1}}$. The non-white rectangles represent the jobs that are assigned in some agents in $\set{a_1,\cdots,a_{i-1}}$ in schedule $\fX$, e.g., the gray, blue, green, pink rectangles are the jobs that are assigned to $a_1,a_2,a_3,a_4$, respectively. Recall that $\fX$ is the schedule returned by \cref{alg:mms:bag}. The white rectangles are the jobs in $J'$. In \cref{cla:mms:gamma}, we show that there exist a $r\in[m]$ such that total area of white rectangles in $X_r'$ is at least $\frac{1}{\beta+2}\gamma_i$.}
    \label{fig:mms}
\end{figure}

According to \cref{equ:mms:gamma:3}, the total area of white rectangles is at least $\frac{m+\beta+1}{\beta+2}\gamma_i$.
There must exist an $r\in[m]$ such that $u_i(X_r'\cap J')\geqslant \frac{m+\beta+1}{m(\beta+2)}\gamma_i$.
Therefore, \cref{cla:mms:gamma} holds.
\end{proof}

\section{Missing Proof for EF1 and PO Scheduling in \cref{sec:EF1vsPO}}
\label{app:EF1vsPO}

\subsection{Proof of \cref{thm:EF1+PO:general}}




\noindent{\bf Theorem \ref{thm:EF1+PO:general} (restate)}
{\em Given an arbitrary instance of general \FISP/, any schedule that maximizes the Nash social welfare is a 1/4-EF1 and PO schedule.}

\medskip

\begin{proof}[Proof of \cref{thm:EF1+PO:general}]
Given an arbitrary instance of general \FISP/, let $\fX=(X_1,\cdots,X_m)$ be the MaxNSW schedule and let $X_0=J\setminus\bigcup_{i\in[m]}X_i$.
Since any MaxNSW schedule must be a PO schedule, we only prove that $\fX$ is a 1/4-EF1 schedule i.e., $\forall i,k\in[m],u_i(X_i)\geqslant \frac{1}{4} u_i(X_k\setminus\set{j_p}),\exists j_p\in X_k$.
Suppose, on the contrary, that there exists $i,k\in[m]$ such that $u_i(X_i)<\frac{1}{4}u_i(X_k\setminus\set{j_p}),\forall j_p\in X_k$.

Now, we sort all jobs in $X_k$  in non-increasing order according to the value of $u_k(j_p),j_p\in X_k$.
Assume that $X_k=\set{j_1,j_2,\cdots}$ after sorting.
Without loss of generality, we assume that $|X_k|$ is an odd number; otherwise, we add a dummy job $j_o$ to $X_k$ such that $u_i(j_o),\forall i\in[m]$.
Now we partition $X_k\setminus\set{j_1}$ into two subsets $X_k^1,X_k^2$, where $X_k^1=\set{j_2,j_4,j_6,\cdots}$ and $X_k^2=\set{j_3,j_5,j_7,\cdots}$.
Note that $X_k=\set{j_1}\cup\set{X_k^1}\cup\set{X_k^2}$.
Note that $u_k(X_k^1)\geqslant u_k(X_k^2)$ and $u_k(X_k^2\cup\set{j_1})\geqslant u_k(X_k^1)$ since all jobs in $X_k$ are sorted in non-increasing order.
Since $u_k(X_k^1)\geqslant u_k(X_k^2)$, we have $u_k(j_1)+u_k(X_k^1)\geqslant u_k(X_k^2)$.
Therefore, we have
\begin{equation}
u_k(X_k^d\cup\set{j_1})\geqslant \frac{1}{2}u_k(X_k),\forall d\in\set{1,2}.
\label{equ:EF1+PO:general:key1}
\end{equation}

Since $u_i(X_i)<\frac{1}{4}u_i(X_k\setminus\set{j_p}),\forall j_p\in X_k$, we have $u_i(X_i)<\frac{1}{4}u_i(X_k^1\cup X_k^2)$.
Since $X_k$ is a feasible job set, we have $u_i(X_k^1\cup X_k^2)=u_i(X_k^1)+u_i(X_k^2)$ which implies that either $u_i(X_k^1)\geqslant \frac{1}{2}u_i(X_k^1\cup X_k^2)$ or $u_i(X_k^1)\geqslant \frac{1}{2}u_i(X_k^1\cup X_k^2)$.
Therefore, we have
\begin{equation}
    u_i(X_i)<\frac{1}{4}u_i(X_k^1\cup X_k^2)\leqslant \frac{1}{2}u_i(X_k^d),\exists d\in\set{1,2}.
\label{equ:EF1+PO:general:key2}
\end{equation}

Now we construct a new schedule, denoted by $\fX'=(X_1',\cdots,X_m')$, where $X_r'=X_r,\forall r\in[m],r\ne i,k$.
Let $X_0'=J\setminus\bigcup_{i\in[m]}X_i'$.
We discard all jobs in $X_i$, i.e., $X_0'=X_0\cup X_i$.
If $u_i(X_k^1)\geqslant \frac{1}{2}u_i(X_k^1\cup X_k^2)$, let $X_i'=X_k^1$ and $X_k'=X_k^2\cup\set{j_1}$; otherwise, let $X_i'=X_k^2$ and $X_k'=X_k^1\cup\set{j_1}$.
It is easy to see that $\fX'$ is a feasible schedule.
Note thar all job sets in $\fX'$ except $X_0',X_i',X_k'$ are the same as the corresponding job sets in $\fX$.
Observe that if we can prove that $u_i(X_i')u_k(X_k')> u_i(X_i)u_k(X_k)$, then $\fX$ is not a MaxNSW schedule which will contradict our assumption.
In the case where $u_i(X_k^1)\geqslant \frac{1}{2}u_i(X_k^1\cup X_k^2)$, we have $X_i'=X_k^1$.
By \cref{equ:EF1+PO:general:key1}, we have $u_k(X_k')=u_k(X_k^2\cup\set{j_1})\geqslant \frac{1}{2}u_k(X_k)$.
By \cref{equ:EF1+PO:general:key2}, we have $u_i(X_i')=u_i(X_k^1)>{2}u_i(X_i)$.
In the case where $u_i(X_k^1)< \frac{1}{2}u_i(X_k^1\cup X_k^2)$, we have $X_i'=X_k^2$.
By \cref{equ:EF1+PO:general:key1}, we have $u_k(X_k')=u_k(X_k^1\cup\set{j_1})\geqslant \frac{1}{2}u_k(X_k)$.
By \cref{equ:EF1+PO:general:key2}, we have $u_i(X_i')=u_i(X_k^2)>{2}u_i(X_i)$.
By combining above two cases, we have $u_i(X_i)u_i(X_k)<u_i(X_i')u_k(X_k')$.
\end{proof}

\subsection{The tight instance for \cref{thm:EF1+PO:general}}

\begin{lemma}
Given an arbitrary instance of general \FISP/, a MaxNSW schedule can only guarantee 1/4-EF1 and PO.
\label{lem:EF1+PO:general:tight}
\end{lemma}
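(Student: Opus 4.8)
The plan is to prove \cref{lem:EF1+PO:general:tight} by exhibiting, for every large even $n$, a single general \FISP/ instance together with a MaxNSW schedule whose best EF1 guarantee tends to $1/4$, thereby matching the upper bound of \cref{thm:EF1+PO:general}. Since every MaxNSW schedule is automatically PO, it suffices to control the envy. The instance I would use has two agents $a_1,a_2$ and $n+1$ jobs: $n$ rigid unit jobs $b_1,\dots,b_n$ placed in the pairwise-disjoint slots $1,\dots,n$, and one rigid job $c$ with window $[1,n]$ and processing time $n$, so that $c$ must occupy the whole horizon and hence conflicts with every $b_i$, while the $b_i$ are mutually compatible. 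The valuations are $u_1(b_i)=2$, $u_2(b_i)=1$, together with $u_1(c)=n/2+\delta$ for an arbitrarily small $\delta>0$ and $u_2(c)=0$.

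First I would single out the candidate schedule $\fX=(X_1,X_2)$ with $X_1=\set{c}$ and $X_2=\set{b_1,\dots,b_n}$, and argue that it uniquely maximizes Nash social welfare. The structural fact driving everything is that, because $c$ overlaps every $b_i$, a bundle consisting of $c$ and $k$ of the $b$-jobs can feasibly realize only the larger of the two options, i.e.\ $u_1(S)=\max\set{n/2+\delta,\;2k}$. I would then split on where $c$ goes. If $c$ is given to $a_2$ or to charity, then since $u_2(c)=0$ the problem reduces to distributing $k_1$ and $k_2$ of the $b$-jobs, yielding $\mathrm{NSW}=2k_1k_2\le n^2/2$. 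If $c$ is given to $a_1$ and $a_1$ keeps $k$ of the $b$-jobs, the welfare is $\max\set{n/2+\delta,\;2k}\,(n-k)$, a short one-variable optimization whose maximum over $k\in\set{0,1,\dots,n}$ is attained uniquely at $k=0$ with value $(n/2+\delta)\,n>n^2/2$. Hence $\fX$ is the unique MaxNSW schedule.

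Finally I would read off the envy ratio. Agent $a_2$ does not envy $a_1$, since $u_2(X_1\setminus\set{c})=u_2(\emptyset)=0$. For agent $a_1$ all $b$-jobs are equally valuable, so the most favorable single removal leaves $u_1(X_2\setminus\set{b_j})=2(n-1)$, and the largest $\alpha$ for which $\fX$ is $\alpha$-EF1 is
\[
\alpha^\star=\frac{u_1(X_1)}{u_1(X_2\setminus\set{b_j})}=\frac{n/2+\delta}{2(n-1)},
\]
which tends to $1/4$ as $n\to\infty$ and $\delta\to 0$. Combined with \cref{thm:EF1+PO:general}, this shows the factor $1/4$ is tight.

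I expect the main obstacle to be the MaxNSW verification. The delicate point is that the "even-split" assignment in which $a_1$ drops $c$ and grabs half of the $b$-jobs is itself welfare-optimal when $\delta=0$ and is perfectly envy-free; the role of the strictly positive perturbation $\delta$ is precisely to break this tie so that the degenerate schedule $(\set{c},\set{b_1,\dots,b_n})$ becomes the \emph{unique} maximizer and the nearly-$1/4$-EF1 outcome is forced. Getting the quantitative comparison of $(n/2+\delta)n$ against $2k(n-k)$ right across all $k$ is the careful bookkeeping step. A secondary check is that the non-additive "best feasible subset" valuation behaves as claimed on bundles mixing $c$ with $b$-jobs, since it is exactly this infeasibility of combining $c$ with any $b_i$ that creates the loss absent in the unconstrained additive model where MaxNSW is exactly EF1.
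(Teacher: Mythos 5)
Your proof is correct, and it uses the same combinatorial gadget as the paper's proof of \cref{lem:EF1+PO:general:tight} --- one long rigid job that blocks the entire horizon versus many pairwise-disjoint unit jobs, with two agents --- but the valuation design is genuinely different, and the difference matters. The paper keeps the valuations \emph{identical} (the long job has weight $1$, each of the $4/\epsilon$ small jobs has weight $\epsilon$), so that the bad schedule $(J_L,J_S)$ is merely \emph{one of several tied} NSW maximizers: it achieves $\mathrm{NSW}=1\cdot 4=4$, the same value as the perfectly envy-free even split of $J_S$, and the tightness conclusion follows by selecting the bad schedule among the ties. Your construction instead uses non-identical valuations ($u_2(c)=0$, $u_1(c)=n/2+\delta$) precisely to break this tie, and your case analysis over where $c$ goes and how many $b$-jobs agent $a_1$ keeps correctly shows that the nearly-$1/4$-EF1 schedule $\fX=(\set{c},\set{b_1,\dots,b_n})$ is the \emph{unique} maximizer, since $(n/2+\delta)n$ strictly dominates both $2k(n-k)\leqslant n^2/2$ and $(n/2+\delta)(n-k)$ for $k\geqslant 1$. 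Each version buys something the other does not: the paper's instance shows the $1/4$ bound of \cref{thm:EF1+PO:general} is already tight within the class of \emph{identical} valuations, which your instance cannot show; your instance shows that the bound cannot be rescued by any tie-breaking rule among NSW maximizers, which the paper's instance leaves open (there, a tie-breaking rule favoring the even split would yield an exactly EF1 outcome). Both arguments correctly verify the non-additive IS valuation on bundles mixing $c$ with $b$-jobs, namely $u_1(\set{c}\cup B_1)=\max\set{u_1(c),\,2|B_1|}$, which is the source of the factor-$4$ loss relative to the unconstrained additive setting.
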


\begin{proof}
To prove \cref{lem:EF1+PO:general:tight}, it is sufficient to give an instance such that MaxNSW schedule is exactly 1/4-EF1 schedule and PO.
In this instance, all jobs in job set $J$ are rigid and $J$ can be partitioned into two sets $J_L$ and $J_S$.
There is only one job in $J_L$ which is very long and has weight $1$.
There are $\frac{4}{\epsilon}$ jobs in $J_S$ each of which has unit length and weight $\epsilon$.
Note that $\frac{4}{\epsilon}$ is assumed to be an even integer number.
All jobs in $J_S$ are disjoint and the job in $J_L$ intersects with all jobs in $J_S$.
The agent set $A$ contains only two agents, i.e., $|A|=2$.
The instance can be found in  \cref{fig:EF1+PO:general:tight}.

\begin{figure}[htbp]
    \centering
    \includegraphics[width=8cm]{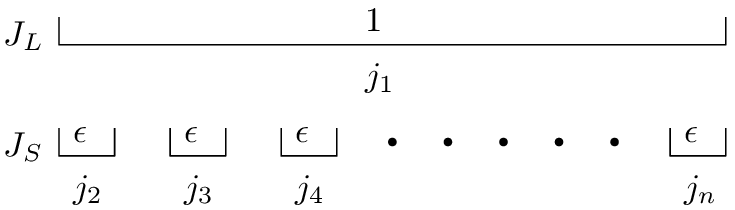}
    \caption{Tight example of MaxNSW schedule for general \FISP/.}
    \label{fig:EF1+PO:general:tight}
\end{figure}

Note that the total weight of jobs in $J_S$ is $4$.
Let $\fX=(X_1,X_2)$ be the schedule, where $X_1=J_L,X_2=J_S$.
let $\fX'=(X_1',X_2')$, where $X_1'=\set{j_2,\cdots,j_{\frac{2}{\epsilon}+1}},X_2'=J_S\setminus X_1'$, i.e., $J_S$ is partitioned into two subsets with equal size.
Note that $X_0'=J_L$.
It is not hard to see that $\fX'$ is a MaxNSW schedule.
And we have $u_1(X_1)u_2(X_2)=4$, $u_1(X_1')u_2(X_2')=4$. 
Therefore, $\fX$ is a MaxNSW schedule.
Note that $u_1(X_2\setminus\set{j_p})=\frac{4-\epsilon}{\epsilon}\cdot \epsilon=4-\epsilon,\forall j_p\in X_2$. 
Therefore, we have 
$$
\lim_{\epsilon\to 0}\frac{1}{4}u_1(X_2\setminus\set{j_p})=1=u_1(X_1),\forall j_p\in X_1.
$$
This implies that $\fX$ is a 1/4-EF1 schedule.
\end{proof}

\section{Missing the Hard Instance in \cref{sec:experiment}}
\label{app:experiment}

In the following, we present an instance such that even without the preprocessing procedure and the last agent takes away all remaining jobs, everyone obtains exactly $\frac{1}{3}\MMS_i+\epsilon$.
Accordingly, the instance proves that ``Matching-BagFilling+ does not have better theoretical performance than Matching-BagFilling'' as claimed in \cref{sec:experiment}.


Consider the following instance with $|A|=m$ agents where $m$ is a sufficiently large even number. 

The job set $J$ can be classified into the following categories: 
\begin{itemize}
    \item $J_1=\set{j_1^1,j_2^1,\cdots,j_m^1}$:
    There are $m$ rigid jobs in $J_1$.
    Every job in $J_1$ has the same job interval $[1,2]$.
    For every job in $J_1$, $a_m$ has the same utility gain $\frac{1}{3}+\frac{1}{m}$.
    For every job in $J_1$, all agents in $A\setminus\set{a_m}$ have the same utility gain $\frac{2}{3}+\frac{1}{m}$;
    \item $J_2=\set{j_1^2,j_2^2,\cdots,j_{m-1}^2}$: 
    There are $m-1$ rigid jobs in $J_2$.
    Every job in $J_2$ has the same job interval $[3,\frac{m}{2}+2]$.
    For every job in $J_2$, all agents in $A$ have the same utility gain $\frac{1}{3}$;
    \item $J_3=\set{j_1^3,j_2^3,\cdots,j_{m-1}^3}$:
    There are $m$ unit jobs in $J_3$. Every job in $J_3$ has the same job interval $[3,m+2]$.
    For every job in $J_3$, all agents in $A$ have the same utility gain $\frac{1}{3m}$;
    \item $J_{4}=\bigcup_{r\in[m]}J_4^r$:
    There are $m$ group rigid jobs in $J_{4}$.
    Each group $J_4^r,r\in[m],$ contains $m$ rigid jobs.
    Assume that  $J_4^r=\set{j_{r1}^4,j_{r2}^4,\cdots,j_{rm}^4},\forall r\in[m-1]$.
    A job $j_{ri}^4\in J_4^r,i\in[m]$ has the job interval $[m+3+i,m+4+i]$.
    Assume that $J_4^m=\set{j_{m1}^4,j_{m2}^4,\cdots,j_{mm}^4}$.
    A job $j_{mi}^4\in J_3^m$ has the job interval $[m+4+i,m+5+i]$.
    In total, there are $m^2$ jobs in $J_4$.
    For every job in $J_4$, $a_m$ has the same utility gain $\frac{1}{3m}$.
    For every job in $J_4$, all agents in $A\setminus\set{a_m}$ have the same utility gain $0$.
\end{itemize}

Let us focus on $a_m$ first.
The upper bound of $\MMS_m$ is:
$$
\frac{1}{m}\cdot\left( (\frac{1}{3}+\frac{1}{m})\cdot m
+\frac{m-1}{3}
+\frac{1}{3m}\cdot m
+\frac{1}{3m}\cdot m^2 \right) = 1+\frac{1}{m}.
$$
We consider the schedule $\fX=(X_1,\cdots,X_m)$, where $X_i=\set{j_i^1,j_i^2}\cup J_4^i,\forall i\in[m-1]$ and $X_m=\set{j_m^1}\cup J_3\cup J_4^m$ (See \cref{fig:mms:tight:1/3}).
It is not hard to see that $\fX$ is a feasible schedule and $\min_{i\in[m]}u_m(X_i)=u_m(X_m)=1+\frac{1}{m}$.
Therefore, $\fX$ is a feasible schedule that obtains the value $1+\frac{1}{m}$ which is also the upper bound of $\MMS_m$.
Thus, $\MMS_m=1+\frac{1}{m}$.
Hence, once $a_m$ values the bag greater than or equal to $\frac{1}{3}+\frac{1}{3m}$, $a_m$ will take the bag away.

\begin{figure}[htb]
    \centering
    \includegraphics[width=14cm]{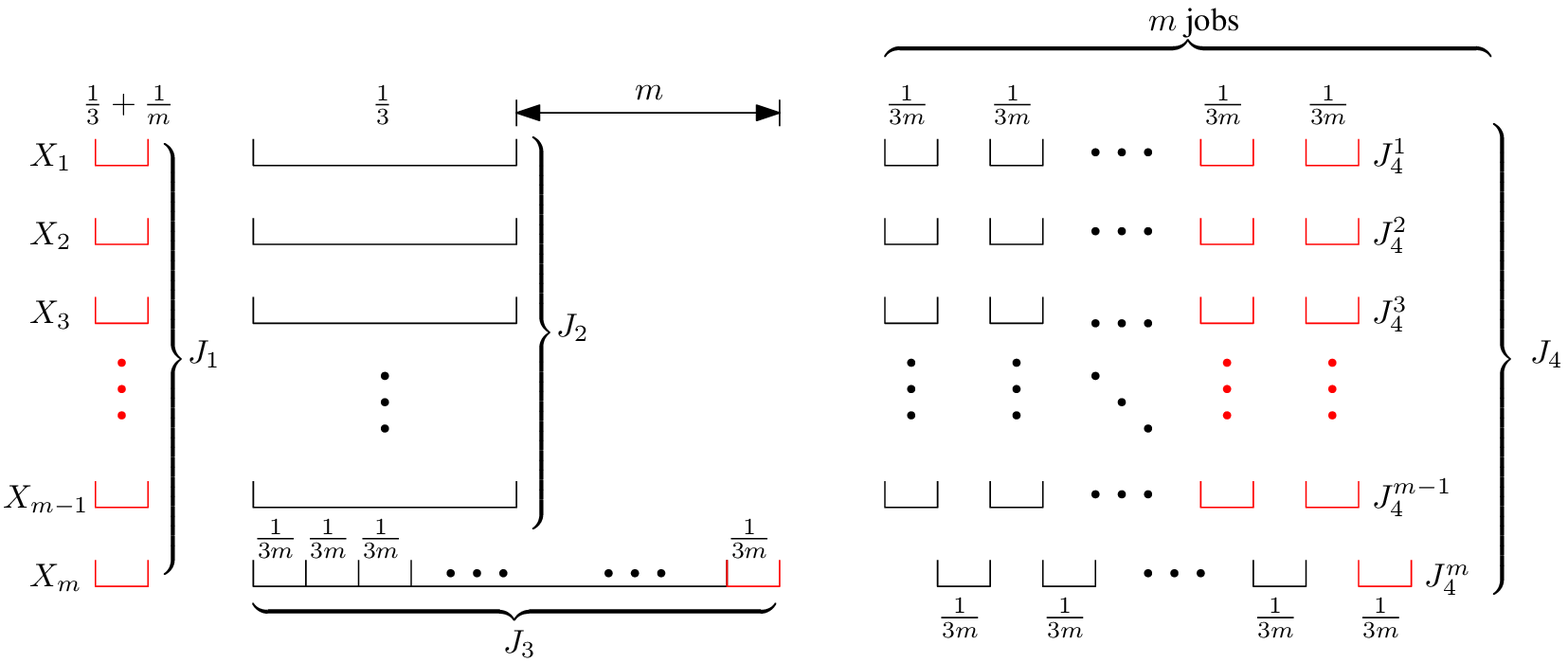}
    \caption{Illustration for the tight instance for \cref{alg:mms:exist}. The above schedule is $\fX$ which is also the MMS schedule for agent $a_m$. The red jobs are the remaining jobs at the end of the $(m-1)$-th round of \cref{alg:mms:exist} with the specified job sequence described in the "The specified job sequence" paragraph.}
    \label{fig:mms:tight:1/3}
\end{figure}

Now, we consider an arbitrary agent $a_i\in A\setminus\set{a_m}$.
Since all agents in $A\setminus\set{a_m}$ have utility gain 0 for all jobs in $J_4$, we can ignore the job set $J_4$.
Therefore, the upper bound of $\MMS_i,\forall i\in[m-1]$ is:
$$
\frac{1}{m}\left( (\frac{2}{3}+\frac{1}{m})\cdot m 
+(\frac{m-1}{3})
+(\frac{1}{3m})\cdot m
\right)
=1+\frac{1}{m}.
$$

We consider the schedule $\fX'=(X_1',\cdots,X_m')$, where $X_i'=\set{j_i^1,j_i^2},\forall i\in[m-1]$ and $X_m'=\set{j_m^1}\cup J_3$.
It is not hard to see that $\fX'$ is a feasible schedule and $u_i(X_k')=u_i(X_r')=1+\frac{1}{m},\forall k,r\in[m],\forall i\in[m-1]$.
Therefore, $\fX'$ is a feasible schedule that obtains the value $1+\frac{1}{m}$ which is also an upper bound of $\MMS_i,\forall i\in[m-1]$.
Thus, $\MMS_i=1+\frac{1}{m},\forall i\in[m-1]$.
Hence, once agent $a_i,\forall i\in[m-1],$ values the bag greater than or equal to $\frac{1}{3}+\frac{1}{3m}$, $a_i$ will take the bag away.

\paragraph{The specified job sequence} Now, we consider the following job sequence. 
In the first round, \cref{alg:mms:exist} adds $J_1^4\setminus\set{j_{1(m-1)}^4,j_{1m}^4}$ to the bag, and then adds $j_1^3,j_{m1}^4$ to the bag, and then adds $j_1^2$ to the bag, i.e., $\bag=\set{j_{11}^4,j_{12}^4,\cdots,j_{1(m-2)}^4}\cup\set{j_1^3,j_{m1}^4}\cup\set{j_1^2}$.
It is not hard to see that $\bag$ is a feasible job set and all agents in $A\setminus\set{a_m}$ value the bag exactly $\frac{1}{3}+\frac{1}{3m}$.
Without loss of generality, we assume that $a_1$ takes the bag away at the end of the first round.
In the $l$-th round, $2\leqslant l \leqslant m-1$, \cref{alg:mms:exist} first adds $J_l^4\setminus\set{j_{l(m-1)}^4,j_{lm}^4}$, and then adds $j_l^3,j_{ml}^4$, and then adds $j_l^2$ to the bag.
Without loss of generality, we assume that $a_l$ takes the bag away at the end of the $l$-th round, where $2\leqslant l \leqslant m-1$.
Note that, at the end of the $(m-1)$-th round, all agents in $A\setminus\set{a_m}$ obtain the utility gain exactly $\frac{1}{3}+\frac{1}{3m}$.

It is not hard to see that, at the end of the $(m-1)$-th round,
$$
J'=J_1\cup\set{j_m^3}\cup\set{j_{mm}^4}\cup\set{j_{1(m-1)}^4,j_{1m}^4}\cup\set{j_{2(m-1)}^4,j_{2m}^4}\cup\cdots\set{j_{(m-1)(m-1)}^4,j_{(m-1)m}^4}.
$$
See the red jobs in \cref{fig:mms:tight:1/3}.
Thus, $u_m(J')=(\frac{1}{3}+\frac{1}{m})+\frac{1}{3m}+\frac{3}{3m}=\frac{1}{3}+\frac{7}{3m}$.

Therefore, everyone obtains exactly $\frac{1}{3}\MMS_i+\epsilon$ at the end of \cref{alg:mms:exist}.
Moreover, it is not hard to see that if we run round-robin procedure at the end of \cref{alg:mms:exist}, the utility gains of all agents in $A\setminus\set{a_m}$ will be increased but the utility gain of $a_m$ is not able to be further improved.
Thus, the above instance implies that ``BAG+ does not have better theoretical performance than BAG''.

\end{document}